\def \AAMAS {}
\ifdefined \VersionLong
	\newcommand{\LongVersion}[1]{#1}
	\newcommand{\ShortVersion}[1]{}
\else
	\newcommand{\LongVersion}[1]{}
	\newcommand{\ShortVersion}[1]{#1}
\fi

\ifdefined\hideAuthors
    \newcommand{\deleteForAnonymity}[1]{\textcolor{black!75}{\itshape (hidden content for anonymity)}}
\else
    \newcommand{\deleteForAnonymity}[1]{#1}
\fi

\ifdefined\VersionAuthor
    \newcommand{\AuthorVersion}[1]{#1}
\else
    \newcommand{\AuthorVersion}[1]{}
\fi

\ifdefined\AAMAS
\documentclass[sigconf]{aamas} 

\usepackage{balance} % for balancing columns on the final page

\makeatletter
\gdef\@copyrightpermission{
  \begin{minipage}{0.2\columnwidth}
  \end{minipage}\hfill
  \begin{minipage}{0.8\columnwidth}
  \end{minipage}
  \vspace{5pt}
}
\makeatother

\usepackage{amsthm}

\newtheorem{theorem}{Theorem}
\newtheorem{lemma}{Lemma}
\newtheorem{proposition}{Proposition}
\newtheorem{corollary}{Corollary}

\theoremstyle{definition}
\newtheorem{definition}{Definition}
\newtheorem{example}{Example}

\theoremstyle{remark}
\newtheorem{remark}{Remark}

\fi

\ifdefined\KR
\documentclass{article}
\usepackage{kr}

\usepackage{times}
\usepackage{soul}
\usepackage{url}
\usepackage[hidelinks]{hyperref}
\usepackage[utf8]{inputenc}
\usepackage[small]{caption}
\usepackage{epsfig}
\usepackage{calc}
\usepackage{amssymb}
\usepackage{amstext}
\usepackage{amsthm}
\usepackage{multicol, adjustbox}
\usepackage{mathrsfs}
\usepackage{pslatex}
\usepackage{latexsym}
\usepackage{enumitem}
\usepackage{graphicx}
\usepackage{color}
\usepackage{amsthm}
\usepackage{booktabs,tabularx,array}
\theoremstyle{plain}
\newtheorem{lemma}{Lemma}
\newtheorem{proposition}{Proposition}
\newtheorem{theorem}{Theorem}
\newtheorem{corollary}{Corollary}
\newtheorem{assumption}{Assumption}

\theoremstyle{definition}
\newtheorem{definition}{Definition}
\newtheorem{example}{Example}

\theoremstyle{remark}
\newtheorem{remark}{Remark}

\fi

\ifdefined\LLNCS
\documentclass[a4paper,10pt]{llncs}
\makeatletter
\AtBeginDocument{%
  \@ifpackageloaded{hyperref}
  {\def\@doi#1{\href{https://doi.org/#1}
      {\ttfamily https://doi.org/#1}\egroup}}
  {\def\@doi#1{\ttfamily https://doi.org/#1\egroup}}
  \def\doi{\bgroup\catcode`\_=12\relax\@doi}}
\makeatother
\makeatletter
\def\@biblabel#1{[#1]}

\makeatother
\fi
\ifdefined\IEEE
\documentclass[conference]{IEEEtran}
\IEEEoverridecommandlockouts
\usepackage{amsthm}
\usepackage{booktabs,tabularx,array}
\theoremstyle{plain}
\newtheorem{lemma}{Lemma}
\newtheorem{proposition}{Proposition}
\newtheorem{theorem}{Theorem}

\theoremstyle{definition}
\newtheorem{definition}{Definition}
\newtheorem{example}{Example}

\theoremstyle{remark}
\renewenvironment{proof}{\begin{IEEEproof}}{\end{IEEEproof}}

\fi
\ifdefined\VersionWithComments
\fi
\ifdefined\VersionAuthor
\usepackage[backend=biber,backref=true,style=alphabetic,url=false,doi=true,defernumbers=true,sorting=anyt,maxnames=99]%{biblatex}
\renewbibmacro*{doi+eprint+url}{%
	\iftoggle{bbx:doi}
		{\color{black!40}\footnotesize\printfield{doi}}
		{}%
	\newunit\newblock
	\iftoggle{bbx:eprint}
		{\usebibmacro{eprint}}
		{}%
	\newunit\newblock
	\iftoggle{bbx:url}
		{\usebibmacro{url+urldate}}
		{}%
}
\fi
\usepackage[utf8]{inputenc}
\usepackage[english]{babel}
	
\usepackage{amsmath}
\usepackage{mathtools}
\usepackage{breqn}
\usepackage{csquotes}
\usepackage{setspace}
\usepackage{xspace}
\usepackage{varwidth}

\usepackage{algorithm}
\usepackage[noend]{algpseudocode}
\algrenewcommand\algorithmicprocedure{\textbf{function}}

\algnewcommand\algorithmicswitch{\textbf{switch}}
\algnewcommand\algorithmiccase{\textbf{case}}
\algdef{SE}[SWITCH]{Switch}{EndSwitch}[1]{\algorithmicswitch\ #1\ \algorithmicdo}{\algorithmicend\ \algorithmicswitch}%
\algdef{SE}[CASE]{Case}{EndCase}[1]{\algorithmiccase\ #1}{\algorithmicend\ \algorithmiccase}%
\algtext*{EndSwitch}%
\algtext*{EndCase}%

\usepackage[svgnames,table]{xcolor}

\newcommand{\cellHeader}[1]{\cellcolor{gray!20}\textbf{#1}}

\usepackage{subfigure} % subfigure to use IEEE templates

\usepackage[inline]{enumitem}

\definecolor{darkblue}{rgb}{0, 0, 0.7}

\usepackage[fixed]{fontawesome5}
\makeatletter
\def\orcidID#1{\smash{\href{https://orcid.org/#1}{\protect\raisebox{-1.25pt}{\protect\faOrcid{}}}}}
\makeatother

\usepackage[capitalise,english,nameinlink]{cleveref} % load after algorithm2e, amsthm and hyperref
\crefname{line}{\text{line}}{\text{lines}} % to remove the capital

\ifdefined \VersionWithComments
 	\definecolor{colorok}{RGB}{80,80,150}
\else
	\definecolor{colorok}{RGB}{0,0,0}
\fi

\newcommand{\ie}{\textcolor{colorok}{i.e.}\xspace}

\renewcommand{\iff}{\textcolor{colorok}{iff}\xspace}

\usepackage{tikz}
\usetikzlibrary{arrows,arrows.meta,automata,positioning,shapes,calc} % shapes for 'cloud'
\tikzstyle{pta}=[auto, ->, >=stealth']
\tikzstyle{every node}=[initial text=]
\tikzstyle{location}=[rectangle, rounded corners, minimum size=12pt, draw=black, fill=blue!10, inner sep=2pt]
\tikzstyle{invariant}=[draw=black, dotted, inner sep=1pt, node distance=0] % xshift=1em,
\tikzstyle{final}=[double, fill=blue!50]

\usepackage{listings}

\crefname{lstlisting}{Listing}{Listings}
\Crefname{lstlisting}{Listing}{Listings}

\definecolor{codegreen}{rgb}{0,0.6,0}
\definecolor{codegray}{rgb}{0.5,0.5,0.5}
\definecolor{codeblue}{rgb}{0.1,0.1,.8}
\definecolor{codepurple}{rgb}{0.58,0,0.82}
\definecolor{backcolour}{rgb}{0.95,0.95,0.92}

\lstdefinelanguage{etol}{
  sensitive=false,
  morecomment=[l]{//},
  morecomment=[s]{/*}{*/},
  morestring=[b]",
  morestring=[b]',
  morekeywords={Transition, Name_State,Initial_State, Atomic_propositions, Labelling, Number_of_agents, Clocks, Clock_constraints,Invariants, Verification_horizon},
  morekeywords=[2]{},
}

\ifdefined\VersionWithComments
	\usepackage[colorinlistoftodos,textsize=footnotesize]{todonotes}
\else
	\usepackage[disable]{todonotes}
\fi

\ifdefined\LLNCS
    
\else
    
\fi

\ifdefined \VersionWithComments
	\newcommand{\inlinecommentgen}[3]{\mbox{}{\color{#2}{\textbf{#3}\ifx#1\\\else:\ \fi #1}}} % here, ``\\'' stands for ``empty''
	\newcommand{\TODOinline}[1]{\inlinecommentgen{#1}{red}{TODO}}
    \else
	\newcommand{\TODOinline}[1]{}
\fi

\newcommand{\N}{\ensuremath{\mathbb{N}_{\ge 0}}}
\newcommand{\Z}{\ensuremath{\mathbb{Z}}}
\newcommand{\Rplus}{\ensuremath{\mathbb{R}_{\ge 0}}}

\newcommand{\setX}{\ensuremath{\mathbb{X}}}
\newcommand{\setY}{\ensuremath{\mathbb{Y}}}

\newcommand{\set}[1]{\ensuremath{\left\{#1\right\}}}
\newcommand{\Set}[2]{\ensuremath{\left\{#1 \ \Bigm| \ #2  \right\}}}

\newcommand{\PSPACE}{\ensuremath{\textsc{PSPACE}}}
\newcommand{\NPSPACE}{\ensuremath{\textsc{NPSPACE}}}
\newcommand{\coPSPACE}{\ensuremath{\textsc{coPSPACE}}}

\newcommand{\Alphabet}{\ensuremath{\Sigma}}

\newcommand{\Paths}{\ensuremath{\mathit{Paths}}}
\newcommand{\run}{\ensuremath{\rho}}

\newcommand{\last}{\ensuremath{\text{last}}}

\newcommand{\Clocks}{\ensuremath{\mathbb{C}}}
\newcommand{\ClockCard}{H} % cardinality of clocks
\newcommand{\clock}{\ensuremath{x}}
\newcommand{\clocky}{\ensuremath{y}}
\newcommand{\clocki}[1]{\ensuremath{\clock_{#1}}}
\newcommand{\clockval}{\ensuremath{\nu}}

\newcommand{\ClocksZero}{\ensuremath{\vec{0}}} % \overrightarrow{0
\newcommand{\resets}{\ensuremath{R}}
\newcommand{\reset}[2]{\ensuremath{[#1]_{#2}}}

\newcommand{\guard}{\ensuremath{g}}
\newcommand{\compOp}{\bowtie}

\newcommand{\Constraints}{\ensuremath{\mathbb{G}}}

\newcommand{\Atomic}{\ensuremath{AP}}
\newcommand{\atomic}{\ensuremath{p}}

\newcommand{\TA}{\ensuremath{\mathcal{A}}}
\newcommand{\loc}{\ensuremath{\ell}} % location % NOTE: nicer display than ``l''
\newcommand{\loci}[1]{\ensuremath{\loc_{#1}}}
\newcommand{\locinit}{\loci{0}}
\newcommand{\Loc}{\ensuremath{\mathbb{L}}} % set of locations
\newcommand{\Transitions}{\ensuremath{\mathbb{T}}}
\newcommand{\transitionArrow}[1]{\ensuremath{\xrightarrow{#1}}}
\newcommand{\Invariant}{\ensuremath{\mathit{Inv}}}
\newcommand{\Label}{\ensuremath{\mathit{Lab}}}
\newcommand{\Final}{\ensuremath{\mathbb{F}}}

\newcommand{\action}{\ensuremath{a}}
\newcommand{\actioni}[1]{\ensuremath{\action_{#1}}}

\newcommand{\dur}{\ensuremath{\text{dur}}}
\newcommand{\equivt}{\ensuremath{\sim^t}}

\newcommand{\ttz}{\ensuremath{\mathcal{Z}}}
\newcommand{\tts}{\ensuremath{\mathcal{T}}}

\newcommand{\states}{\ensuremath{S}}
\newcommand{\statein}{\ensuremath{s}}
\newcommand{\statei}[1]{\ensuremath{s_{#1}}}
\newcommand{\initialstate}{\ensuremath{\statein_0}}
\newcommand{\finalstates}{\ensuremath{\states_F}}

\newcommand{\transitions}{\ensuremath{E}}

\newcommand{\zonegraph}{\ensuremath{\mathcal{G}}}

\newcommand{\zone}{\ensuremath{Z}}
\newcommand{\symZone}{\ensuremath{\mathcal{\zone}}}

\newcommand{\timeelpase}[1]{\ensuremath{#1^{\uparrow}}}
\newcommand{\timeelpaseof}[2]{\ensuremath{#1^{\uparrow #2}}}

\newcommand{\ETOL}{\textsf{ETOL}\xspace}
\newcommand{\prop}{\ensuremath{\varphi}}
\newcommand{\propOther}{\ensuremath{\psi}}
\newcommand{\ClocksFormula}{\ensuremath{\mathbb{J}}}
\newcommand{\clockformula}{\ensuremath{j}}

\newcommand{\clockvalformula}{\ensuremath{\mu}}

\newcommand{\secretProp}{\ensuremath{\psi_\text{priv}}}

\newcommand{\ETforall}{\ensuremath{
  \mathbin{%
    \raisebox{0.2ex}{\scalebox{0.95}{$\bigcirc$}}%
    \kern-0.65em
    \raisebox{0.1ex}{\scalebox{0.7}{$\forall$}}%
  }%
  \,
}}

\newcommand{\ETexists}{\ensuremath{
    \mathbin{
        \raisebox{0.2ex}{\scalebox{0.95}{$\bigcirc$}}%
        \kern-0.65em
        \raisebox{0.1ex}{\scalebox{0.7}{$\exists$}}%
    }
    \,
}}

\newcommand{\sub}{\ensuremath{\mathbf{Sub}}}
\newcommand{\sat}{\ensuremath{\mathbf{Sat}}}

\def\next{\mathsf{X}\,}
\def\until{\, \mathsf{U} \,}

\def\release{\, \mathsf{R}\,}

\def\globaly{\mathsf{G}\,}

\def\eventualy{\mathsf{F}\,}

\newcommand{\clockMC}{\ensuremath{\clock_0}}
\newcommand{\Pre}{\ensuremath{\mathit{Pre}}}

\newcommand{\Post}{\ensuremath{\mathit{Post}}}

\newcommand{\PreExist}{\ensuremath{\Pre_{\ETexists}}}
\newcommand{\PreForall}{\ensuremath{\Pre_{\ETforall}}}

\newcommand{\procedureOUexists}{\ensuremath{\mathbf{OpU}_{\ETexists}}}
\newcommand{\procedureOUforall}{\ensuremath{\mathbf{OpU}_{\ETforall}}}

\newcommand{\matchOU}{\ensuremath{\mathbf{MatchU}}}
\newcommand{\procedureORexists}{\ensuremath{\mathbf{OpR}_{\ETexists}}}
\newcommand{\procedureORforall}{\ensuremath{\mathbf{OpR}_{\ETforall}}}

\newcommand{\matchOR}{\ensuremath{\mathbf{MatchR}}}

\usepackage[upgrade=true]{acro}
\newcommand{\firststyle}[1]{\emph{#1}}
\acsetup{
	case-insensitive=true,
	single = {true},
	format/first-long = {\firststyle}
}

\DeclareAcronym{cps}{
    short = {CPS},
    long = {Cyber-Physical System}
}
\DeclareAcronym{rts}{
    short = {RTS},
    long = {Real-Time System}
}
\DeclareAcronym{fa}{
    short = {FA},
    long  = {finite automaton},
    long-plural-form = {Finite Automata},
}
\DeclareAcronym{des}{
    short = {DES},
    long  = {discrete event system},
    long-plural-form = {Discrete Event Systems},
}
\DeclareAcronym{ta}{
    short = {TA},
    long  = {timed automaton},
    long-plural-form = {Timed Automata},
    cite = {AlurD1994},
}
\DeclareAcronym{era}{
    short = {ERA},
    long  = {event-recording automaton},
    indefinite = {an},
    long-plural-form = {Event-Recording Automata},
    cite = {AFH99},
}
\DeclareAcronym{rta}{
    short = {RTA},
    long  = {Real-Time Automaton},
    long-plural-form = {Real-Time Automata},
}
\DeclareAcronym{tts}{
    short = {TTS},
    long  = {Timed Transition System},
}
\DeclareAcronym{zonegraph}{
    short = {ZG},
    long  = {Zone Graph},
}
\DeclareAcronym{dbm}{
    short = {DBM},
    long  = {Difference Bound Matrix},
    long-plural-form = {Difference Bound Matrices},
    cite = {BYJ04}
}
\DeclareAcronym{mdp}{
    short = {MDP},
    long  = {Markov Decision Process},
    long-plural-form = {Markov decision processes},
}

\DeclareAcronym{etopaque}{
    short = {ET-opaque},
    long  = {Execution-Time Opaque},
    indefinite = {an},
    post={\acuse{etopacity}}
}
\DeclareAcronym{etopacity}{
    short = {ET-opacity},
    long  = {Execution-Time Opacity},
    indefinite = {an},
    post={\acuse{etopaque}}
}
\DeclareAcronym{etol}{
    short = {ETOL},
    long  = {Execution-Time Opacity Logic},
    indefinite = {an},
}
\DeclareAcronym{tctl}{
    short = {TCTL},
    long  = {Timed Computation Tree Logic},
    cite = {Alur1999},
}

\ifdefined \VersionWithComments
 	\definecolor{colorabbrv}{RGB}{0,100,250}
\else
	\definecolor{colorabbrv}{RGB}{0,0,0}
\fi
\newcommand{\abbrv}[1]{\textcolor{colorabbrv}{#1}}

\ifdefined\KR
\ifdefined\hideAuthors
    \author{Anonymous author(s)}
\else
    \author{%
        Jean Leneutre\inst{3}%
        \raisebox{1ex}{\scalebox{0.8}{\orcidID{0000-0003-1943-1583}}}
        \and
    	Dylan Marinho\inst{2}%
    	\raisebox{1ex}{\scalebox{0.8}{\orcidID{0000-0002-2548-6196}}}
    	\and
        Vadim Malvone\inst{3}%
         \raisebox{1ex}{\scalebox{0.8}{\orcidID{0000-0001-6138-4229}}}
        \and \\
        James Ortiz\inst{1}%
        \raisebox{1ex}{\scalebox{0.8}{\orcidID{0000-0001-5407-963X}}}
    }
    \institute{%
        Université Paris Est Créteil, LACL, Créteil, France
        \and
        Sorbonne Université, CNRS UMR 7606, LIP6, Paris, France
        \and
        LTCI, Télécom Paris, Institut Polytechnique de Paris, Palaiseau, France
    }
\fi
\fi

\ifdefined\LLNCS
\ifdefined\hideAuthors
    \author{Anonymous author(s)}
    \institute{}
\else
    \author{%
Jean Leneutre\inst{3}%
\raisebox{1ex}{\scalebox{0.8}{\orcidID{0000-0003-1943-1583}}}
\and
Dylan Marinho\inst{2}%
\raisebox{1ex}{\scalebox{0.8}{\orcidID{0000-0002-2548-6196}}}
\and
Vadim Malvone\inst{3}%
\raisebox{1ex}{\scalebox{0.8}{\orcidID{0000-0001-6138-4229}}}
\and \\
James Ortiz\inst{1}%
\raisebox{1ex}{\scalebox{0.8}{\orcidID{0000-0001-5407-963X}}}
    }
    \institute{%
        Université Paris Est Créteil, LACL, Créteil, France
        \and
        Sorbonne Université, CNRS UMR 7606, LIP6, Paris, France
        \and
        LTCI, Télécom Paris, Institut Polytechnique de Paris, Palaiseau, France
    }
\fi
\fi
\ifdefined\IEEE
\ifdefined\hideAuthors
    \author{Anonymous author(s)}
\else
    \makeatletter % changes the catcode of @ to 11
    \newcommand{\linebreakand}{%
    \end{@IEEEauthorhalign}
    \hfill\mbox{}\par
    \mbox{}\hfill\begin{@IEEEauthorhalign}
    }
    \makeatother % changes the catcode of @ back to 12

    \author{%
        \IEEEauthorblockN{
            James Ortiz\orcidID{0000-0001-5407-963X}
        }
    	\IEEEauthorblockA{%
    	    \textit{Université Paris-Est Créteil, LACL,}\\\textit{Paris, France}
    	}
    	\and
    	\IEEEauthorblockN{
            Dylan Marinho\orcidID{0000-0002-2548-6196}
        }
    	\IEEEauthorblockA{%
    	    \textit{Sorbonne Université, CNRS UMR 7606, LIP6,}\\\textit{Paris, France}
    	}
        \linebreakand

    	\IEEEauthorblockN{
            Vadim Malvone\orcidID{0000-0001-6138-4229}
        }
    	\IEEEauthorblockA{%
    	    \textit{LTCI, Télécom Paris, Institut Polytechnique de Paris,}\\\textit{Palaiseau, France}
    	}
        \and
    	\IEEEauthorblockN{
            Jean Leneutre\orcidID{0000-0003-1943-1583}
        }
    	\IEEEauthorblockA{%
    	    \textit{LTCI, Télécom Paris, Institut Polytechnique de Paris,}\\\textit{Palaiseau, France}
    	}%
        \linebreakand
    }
\fi
\fi
\ifdefined\AAMAS
\ifdefined\hideAuthors
    \author{Anonymous author(s)}
\else
    \author{%
        Jean Leneutre%
        \raisebox{1ex}{\scalebox{0.8}{\orcidID{0000-0003-1943-1583}}}}
        \affiliation{
          \institution{LTCI, Télécom Paris, Institut Polytechnique de Paris}
           \city{Palaiseau}
           \country{France}}
            \email{jean.leneutre@telecom-paris.fr}

    	\author{Dylan Marinho%
    	\raisebox{1ex}{\scalebox{0.8}{\orcidID{0000-0002-2548-6196}}}}
          \affiliation{
          \institution{Sorbonne Université, CNRS UMR 7606, LIP6}
           \city{Paris}
           \country{France}}
            \email{dylan.marinho@lip6.fr}

        \author{Vadim Malvone%
         \raisebox{1ex}{\scalebox{0.8}{\orcidID{0000-0001-6138-4229}}}}
           \affiliation{
          \institution{LTCI, Télécom Paris, Institut Polytechnique de Paris}
           \city{Palaiseau}
           \country{France}}
            \email{vadim.malvone@telecom-paris.fr}

        \author{James Ortiz%
        \raisebox{1ex}{\scalebox{0.8}{\orcidID{0000-0001-5407-963X}}}}
         \affiliation{
          \institution{Université Paris Est Créteil}
           \city{Créteil}
           \country{France}}
            \email{james.ortiz-vega@u-pec.fr}
    
\fi
\fi

\title{Execution-Time Opacity Logic: A Logic for Ensuring ET-Opacity in Timed Systems%    
}
\begin{abstract}
    Ensuring confidentiality in Cyber-Physical Systems is critical, especially when attackers exploit execution times to infer sensitive information. Traditional opacity models are inadequate for timed systems, as verifying opacity in Timed Automata is undecidable. To address this challenge, we propose \emph{Execution-Time Opacity Logic} (\ETOL), a new formalism that specifies opacity by requiring that for every execution satisfying a secret formula, there exists another execution of the same duration that does not satisfy it. \ETOL guarantees that timing observations cannot reveal confidential agent activities. We present a decidable and efficient verification framework based on zone-based model checking, supported by a dedicated algorithm that systematically identifies duration-equivalent executions. Our approach is validated through an \textsf{ATM} case study, showing that \ETOL enables efficient verification of execution-time confidentiality under timing attacks. We also developed a prototype tool for the \ETOL logic that supports symbolic model checking over timed systems. It allows users to verify \ETOL formulas based on clock-constrained execution paths.
	
\end{abstract}
\keywords{Timed Opacity, Timed Automata, Cybersecurity, Temporal Logic}

\newcommand{\BibTeX}{\rm B\kern-.05em{\sc i\kern-.025em b}\kern-.08em\TeX}

\begin{document}

\maketitle

\ifdefined\IEEE
\begin{IEEEkeywords}
Timed Opacity, Timed Automata, Cybersecurity
\end{IEEEkeywords}
\fi

%%%%%%%%%%%%%%%%%%%%%%%%%%%%%%%%%%%%%%%%%%%%%%%%%%%%%%%%%%%%%%%%%%%%%%
%%%%%%%%%%%%%%%%%%%%%%%%%%%%%%%%%%%%%%%%%%%%%%%%%%%%%%%%%%%%%%%%%%%%%%
\section{Introduction}
%%%%%%%%%%%%%%%%%%%%%%%%%%%%%%%%%%%%%%%%%%%%%%%%%%%%%%%%%%%%%%%%%%%%%%
%%%%%%%%%%%%%%%%%%%%%%%%%%%%%%%%%%%%%%%%%%%%%%%%%%%%%%%%%%%%%%%%%%%%%%

In the modern digital age, the increasing complexity of distributed infrastructures and \acp{cps} has introduced new verification challenges in cybersecurity.
Critical systems such as those in healthcare, transportation, and industrial control must operate under strict time constraints while preserving the confidentiality of sensitive information.
Cybersecurity is a key domain where this need becomes particularly acute.
As attackers grow more sophisticated, it is essential not only to defend against direct attacks but also to ensure that confidential data remains hidden even through system observations.
This security property is known as \emph{opacity}~\cite{Dubreil2010,Ma2017,BOJ15}, where it refers to the inability of an external observer to deduce secret information from event sequences. In critical \acp{rts}, however, the timing of events introduces a new threat dimension: attackers may exploit execution durations to infer secrets that would otherwise remain hidden.
This has led to the development of \emph{timed opacity}, which studies how timing observations can compromise confidentiality. A challenge in \acp{rts} is that attacks are not limited to observing system states and their timestamps, but may also leverage execution times themselves to extract secrets~\cite{KOC96}.
For example, side-channel attacks use the execution time of specific operations to infer secret information, such as cryptographic keys, by measuring the time differences between two operations.
Even when sensitive information is protected by traditional cryptographic mechanisms, attackers may infer it by analysing variations in execution time. For example, Brumley and Boneh demonstrated in \cite{BB05} that an attacker could recover a 1024-bit RSA private key from an OpenSSL server by measuring network response times, using approximately one million queries over about two hours. 

\noindent $\mathsf{Timed \ Opacity.}$ Timed opacity for \acp{ta} was first introduced by Cassez~\cite{Cas09}, defining it as the inability to deduce secret actions based on an observation of a timed trace.
However, he showed that verifying timed opacity for a general \ac{ta} is undecidable, even in (very) restricted subclasses, due to the complexity of the timed language inclusion problem.
Subsequent works proposed partial solutions. %, they are summarized in~\cref{tab-opacity}.
%\vspace{-0.3cm}
\begin{description}
\item[Restricting the model] Some approaches limit the expressiveness of the system model to regain decidability. For instance, \cite{ADL24} identified decidability in certain contexts, such as one-clock \ac{ta} without $\epsilon$-transitions. Others focus on subclasses of \ac{ta}, like \acp{rta}\footnote{which can be
seen as a subclass of \acp{ta} with a single clock, reset at each transition}, and restrict the notion of opacity to initial state opacity~\cite{WangZ17}.
\item[Restricting the notion of opacity] Some works restrict the notion of opacity itself, for example by considering only \emph{initial-state opacity}~\cite{WangZ17}, which asks whether an observer can infer that the execution started from a secret initial state (or, more generally, from a secret set of initial states). This restriction often simplifies the verification problem, but it may fail to capture richer confidentiality properties arising from secrets revealed only later during the execution.
\item[Restricting the attacker's capabilities] Another approach is to limit the attacker’s observational power, for instance by bounding the time window within which observations can be made or compared~\cite{AmmarTYM21}. Such restrictions often make the verification problem more tractable, although they may also weaken the attacker model.
\item[Restricting the observation model] Instead of using traditional observability functions, which often lead to undecidability, some approaches adopt alternative observation models. For example, \cite{ADJ22} introduces a model based on \emph{duration observations} defining \ac{etopacity}, where the observer measures only the total execution time to a final state, without access to trace events. This model brings decidability while addressing realistic side-channel threats, as execution times can be a sufficient source of information leakage.
\end{description}
%\vspace{-0.1cm}
However, all of these approaches focus on location-based opacity, where the secret is defined as the visit of a particular control location.
We argue that this definition is too restrictive for many real-world scenarios, where secrets may depend on complex combinations of control states and timing conditions.
% \dm{Proposition de tableau, à améliorer !, ou voir si on garde}
% \begin{table}
% \begin{tabular}{l|p{.3\linewidth}|p{.2\linewidth}|p{.2\linewidth}|p{.2\linewidth}}
% \cellHeader{Work} & \cellHeader{Formalism} & \cellHeader{Secret} & \cellHeader{Observation} & \cellHeader{Decidability} \\ \hline
% \cite{Cas09} & \ac{TA} and subclasses & Location visit & Timed trace & Undecidable \\ \hline
% \cite{AmmarTYM21} & time-bounded \ac{TA} &  & & \\ \hline
% \cite{ADL24} & one-clock \ac{TA} without $\epsilon$-transitions (\eg) & Location visit & Timed trace & Decidable \\ \hline
% \cite{WangZ17} & & initial-state & & Decidable \\ \hline
% \cite{ADJ22} & \ac{TA} & Location visit & Execution time & Decidable \\ \hline
% \ETOL & \ac{TA} & Satisfaction of a property & Execution time & Decidable \dm{ref theorem}
% \end{tabular}
% \caption{Comparison of timed opacity definitions}
% \label{tab-opacity}
% \end{table}
%\vspace{-0.3cm}

\noindent  $\mathsf{Our \ contribution.}$ We introduce \ETOL, a branching-time  temporal logic  for specifying  and verifying \ac{etopacity} properties with timing constraints. We define the syntax and semantics of \ETOL{}, and enrich its branching-time  framework with two dedicated modalities: $\ETexists{}$ and $\ETforall{}$, which quantify over runs that are indistinguishable to an attacker based on their total execution time. 
Principally, \ETOL{} allows us to express that a secret property (\ie{} its satisfaction) is not revealed to an attacker by the total execution time. From a theoretical perspective, we propose a symbolic procedure for bounded model checking of \ETOL{} over \acp{TA}, where the
execution-time observation is restricted to a finite verification horizon, and establish a \PSPACE{} upper bound for this problem. %we propose a \PSPACE{} symbolic procedure for model checking \ETOL{} over \iac{ta}.

\noindent Compared with prevoiys approaches, our framework allows the secret to be specified as the satisfaction of a temporal property. Moreover, the line of research of ET-opacity \cite{ALMS22,ALLMS23} relies on parameter synthesis procedures to solve the problem of timed opacity (\ie{} the authors propose an undecidable resolution to a decidable problem), while we propose a model checking procedure for a logic that can express timed opacity properties.

\noindent $\mathbf{Related \ work.}$ In addition to the works on timed opacity mentioned above, there are several related lines of research.
Related real-time security lines focus on non-interference: \cite{NNV17} proposes a type system handling nondeterminism and timing, \cite{VNN18} defines information-flow-based bisimulation and computes local security constraints, and \cite{GSB18} reduces verification to unreachability checked with \abbrv{UPPAAL}~\cite{behrmann2006tutorial}.
Control synthesis for strong non-deterministic non-interference in \ac{TA} is studied in~\cite{BCLR15}.
Recently, \cite{WA24} extends \abbrv{PTCTL} (a parametric extension of \ac{tctl}) with hyper-properties, allowing to express properties over multiple runs. Their models and results are not really comparable to ours, as they focus on parametric hyper-properties, while we focus on a specific class of hyper-properties related to execution time without any parameter. In particular, there is no inclusion between our logics \ETOL{} and their logic, and neither logic subsumes the other.
More broadly, opacity has been extensively studied in the setting of Discrete-Event Systems (\abbrv{DES}), using both automata and Petri nets. Classical language-based notions include \emph{initial-state opacity}, where the observer should not be able to infer that the system started from a secret initial state, \emph{current-state opacity}, where the observer should not be able to determine that the current state is secret, and \emph{$k$-step opacity}, where secrecy must remain hidden over a bounded observation horizon~\cite{MAL04,Bryans2005,Bryans2008}. These notions form the standard semantic basis of opacity.
They have given rise to a variety of verification and enforcement techniques. Automata-based opacity verifiers decide whether observations remain compatible with both secret and non-secret behaviors~\cite{Saboori2013}, while Petri net-based analyses address concurrent systems where synchronization is essential~\cite{Bryans2005}. These works provide the general conceptual background of opacity, whereas our contribution focuses on ET-opacity in timed systems and on its logical and symbolic verification.
%\vspace{-0.7cm}

\noindent $\mathbf{Outline.}$ The paper is structured as follows.
\cref{sec:background} recalls the necessary theoretical background.
\cref{sec:etol} introduces the syntax and semantics of \ETOL, with the extension of the definition of \ac{etopacity}, and illustrates its expressiveness with examples. In \cref{sec:mc}, we present our bounded model checking algorithm for \ETOL over \ac{ta}, prove its correctness, and establish its \PSPACE \ complexity.
The implementation of our model checking procedure and its evaluation on a case study of an ATM system are described in \cref{sec:cs-implementation}.
Finally, we conclude in \cref{sec:conclusion} and discuss future research directions. %Due to space limitations, proofs have been moved to the appendix.

%%%%%%%%%%%%%%%%%%%%%%%%%%%%%%%%%%%%%%%%%%%%%%%%%%%%%%%%%%%%%%%%%%%%%%
%%%%%%%%%%%%%%%%%%%%%%%%%%%%%%%%%%%%%%%%%%%%%%%%%%%%%%%%%%%%%%%%%%%%%%
\section{Preliminaries} %\section{General Concepts}
\label{sec:background}
%%%%%%%%%%%%%%%%%%%%%%%%%%%%%%%%%%%%%%%%%%%%%%%%%%%%%%%%%%%%%%%%%%%%%%
%%%%%%%%%%%%%%%%%%%%%%%%%%%%%%%%%%%%%%%%%%%%%%%%%%%%%%%%%%%%%%%%%%%%%%
This section introduces the main notations and formalisms. Let $\Rplus$, $\Z$, and $\N$ denote the sets of non-negative reals, integers, and non-negative integers, respectively. For a set $\setX$, $|\setX|$ denotes its cardinality. For sets $\setX$ and $\setY$, we write $\setX\cap\setY$, $\setX\cup\setY$, $\setX\setminus\setY$, and $\setX\times\setY$ for intersection, union, set difference, and Cartesian product. Inclusion and strict inclusion are denoted by $\setX\subseteq\setY$ and $\setX\subset\setY$, and the empty set by $\emptyset$.
%%%%%%%%%%%%%%%%%%%%%%%%%%%%%%%%%%%%%%%%%%%%%%%%%%%%%%%%%%%%
% \subsection{Clocks, Guards and Invariants}\label{ss:clocks}
%%%%%%%%%%%%%%%%%%%%%%%%%%%%%%%%%%%%%%%%%%%%%%%%%%%%%%%%%%%%
\noindent Let $\Clocks=\{\clocki{1},\dots,\clocki{\ClockCard}\}$ be a finite set of \emph{clocks}, i.e., real-valued variables evolving synchronously at the same rate. A clock valuation is a function $\clockval:\Clocks\to\Rplus$, and $\ClocksZero$ denotes the valuation assigning $0$ to every clock. For $d\in\Rplus$, $\clockval+d$ is defined by $(\clockval+d)(\clock)=\clockval(\clock)+d$ for all $\clock\in\Clocks$. Given $\resets\subseteq\Clocks$, the reset of $\clockval$ with respect to $\resets$, denoted $\reset{\clockval}{\resets}$, is defined by $\reset{\clockval}{\resets}(\clock)=0$ if $\clock\in\resets$, and $\reset{\clockval}{\resets}(\clock)=\clockval(\clock)$ otherwise. A \emph{clock guard} $\guard$ is a conjunction of constraints of the form $\clock\compOp d$, where $\clock\in\Clocks$, $d\in\Z$, and ${\compOp}\in\{<,\leq,=,\geq,>\}$. The set of all clock guards is denoted $\Constraints(\Clocks)$. For a guard $\guard$, we write $\clockval\models\guard$ when replacing each clock $\clock$ by $\clockval(\clock)$ makes $\guard$ true. We write $\llbracket \guard\rrbracket=\{\clockval\mid \clockval\models\guard\}$ for the set of valuations satisfying $\guard$.

%%%%%%%%%%%%%%%%%%%%%%%%%%%%%%%%%%%%%%%%%%%%%%%%%%%%%%%%%%%%
\subsection{Timed Automata and their Semantics}\label{ss:ta}
%%%%%%%%%%%%%%%%%%%%%%%%%%%%%%%%%%%%%%%%%%%%%%%%%%%%%%%%%%%%%
We now recall \acp*{ta}, which extend \acp{fa} with clocks that evolve synchronously and measure time delays.

\begin{definition}[\Aclp{ta} \cite{AlurD1994}]
\label{def:wta}
A \ac{ta} is a tuple $\TA = (\Loc, \locinit, \Clocks, \Alphabet,$ $\Invariant, \Label, \Final, \Transitions)$, where: 
\begin{enumerate*}
    \item $\Loc$ is a finite set of locations,
    \item $\locinit\in \Loc$ is an initial location, 
    \item $\Clocks$ is a finite set of clocks, 
    \item $\Alphabet$ is a finite set of actions,
    \item $\Invariant$: $\Loc\to\Constraints(\Clocks)$ is a function that associates to each location clock guard in which $\compOp  \in  \{<,\le\}$ (called \emph{invariant}),
    \item $\Label: \Loc \to 2^{\Atomic}$ is a labeling function for the locations ($\Atomic$ is a set of atomic propositions), %\dm{check if needed}
    \item $\Final \subseteq \Loc$ is a set of final locations,
    \item $\Transitions \subseteq \Loc \times \Alphabet \times \Constraints(\Clocks) \times 2^{\Clocks} \times \Loc$ is a finite set of transitions.
\end{enumerate*}
\end{definition}
\noindent A transition  $e= (\loci{1}, \action, \guard, \resets, \loci{2}) \in \Transitions$ can be written as 
$\loci{1} \transitionArrow{a, \guard, \resets} \loci{2}$ where $\loci{1}$ and $\loci{2}$ are the source and target locations, 
$\action$ the action, 
$\guard$ a guard, 
$\resets$ the set of clocks to reset. To define the semantics of \iac{ta}, we recall the definition of \iac{tts}. A \ac{tts} is a tuple $(\states, \initialstate, \finalstates, \Alphabet, \Label_\tts, \transitions)$ where
\begin{enumerate*}
    \item $\states$ is a set of states, 
    \item $\initialstate$ is the initial state,
    \item $\finalstates$ is the set of final states,
    \item $\Alphabet$ is a set of actions,
    \item $\Label_\tts:$ $\states \to 2^{\Atomic}$ is a labeling function,
    \item $(\statein, \alpha, \statein') \in \transitions \subseteq \states \times (\Alphabet \cup  \mathbb{R}_{\ge 0}) \times \states$ is a labeled transition relation with two kinds of transitions: discrete (if $\alpha \in \Alphabet$) and delay (if $\alpha \in \mathbb{R}_{\ge 0}$) transitions.
\end{enumerate*}

\begin{definition}[Semantics of \iac{TA}]
    \label{def:defSemTA}
    Let $\TA = (\Loc, \locinit, \Clocks, \Alphabet, \Invariant, \Label, \Final, \Transitions)$ be \iac{TA}. 
    The semantics of $\TA$ is given by the \ac{tts} $\tts(\TA)  = (\states, \initialstate, \finalstates, \Alphabet,$ $ \Label_\tts, \transitions)$  where:
    \begin{enumerate*}
        \item $\states \subseteq \Loc \times \Rplus^{\Clocks}$,
        \item $\initialstate= (\locinit, \ClocksZero)$ with   $\ClocksZero \models \Invariant(\locinit)$, 
        \item $\finalstates \subseteq \Final \times \Rplus^{\Clocks}$,
        \item $\Label_\tts ((\loc, \clockval)) = \Label(\loc) \cup \set{\guard \in \Constraints(\Clocks) \mid \clockval \models \guard }$,
        \item $\transitions \subseteq \states \times (\Alphabet \cup \mathbb{R}_{\ge 0}) \times \states$ is a transition relation defined by the following two rules:
        \end{enumerate*}
        \begin{itemize}
            \item \textbf{Discrete transition: } $(\loci{1}, \clockval)\transitionArrow{a} (\loci{2}, \clockval')$
            for $\action\in \Alphabet$ \iff{} $\loci{1} \transitionArrow{\action, \guard, \resets} \loci{2}, \clockval \models \guard \land \clockval' = \reset{\clockval}{\resets}$ and $\clockval' \models \Invariant(\loci{2})$ and, 
            \item \textbf{Delay transition: } $(\loci{1}, \clockval)\transitionArrow{d} (\loci{1}, \clockval+d)$, for some  $d\in \Rplus$ iff $\clockval + d \models \Invariant(\loci{1})$. 
        \end{itemize}
    
\end{definition}

\noindent In the remainder of the paper, we consider \iac{ta}~$\TA$ and its associated \ac{tts}~$\tts$. A finite run of $\TA$ is a finite path of $\tts$ in which delay and discrete transitions alternate: 
$\run  = \initialstate \transitionArrow{d_0} \statei{1}' \transitionArrow{\actioni{0}} \statei{1} \transitionArrow{d_1} \statei{2}' \transitionArrow{\actioni{1}} \statei{2} \ldots \statei{n-1} \transitionArrow{d_{n-1}} \statei{n}'  \transitionArrow{\actioni{n-1}} \statei{n}$ 
or more compactly  $\run = \statei{0} \transitionArrow{d_0, \actioni{0}} \statei{1} \transitionArrow{d_1, \actioni{1}} \statei{2} \transitionArrow{d_2, \actioni{2}} \statei{3} \ldots \statei{n-1} \transitionArrow{d_{n-1}, a_{n-1}} \statei{n}$. 

\noindent The last state $\statei{n}$ of a run $\run$ is denoted by $\last(\run)$.
The \emph{duration} of a run is the sum of the delays along the run: $\dur(\rho)=\sum_{i=0}^{n-1} d_i$.
Two runs are \emph{duration-equivalent} if they have the same duration: we write $\run \equivt \run'$ iff $\dur(\run)=\dur(\run')$. We denote by $\Paths_{\tts}(\statein)^{+}$ the set of all paths of $\tts$ that start from $\statein$ and end in a final state.

%------------------------------------------------------------
\newcommand{\sclock}{x}
\newcommand{\sclocky}{y}
\newcommand{\assign}{\ensuremath{\leftarrow}}
\newcommand{\styleact}[1]{\textcolor{green!50!black}{#1}}
\definecolor{colordisc}{rgb}{1, 0, 1}
\newcommand{\styledisc}[1]{\textcolor{colordisc}{#1}}
% Macros for ATM
\newcommand{\ATMaskPassword}{\ensuremath{\styleact{askPwd}}}
\newcommand{\ATMcorrectAmount}{\ensuremath{\styleact{correctAmount}}}
\newcommand{\ATMcorrectPassword}{\ensuremath{\styleact{correctPwd}}}
\newcommand{\ATMrequestBalance}{\ensuremath{\styleact{reqBalance}}}
\newcommand{\ATMfinish}{\ensuremath{\styleact{finish}}}
\newcommand{\ATMincorrectAmount}{\ensuremath{\styleact{incorrectAmount}}}
\newcommand{\ATMincorrectPassword}{\ensuremath{\styleact{incorrectPwd}}}
\newcommand{\ATMnormalWithdrawal}{\ensuremath{\styleact{normalWithdraw}}}
\newcommand{\ATMpressFinish}{\ensuremath{\styleact{pressFinish}}}
\newcommand{\ATMpressOK}{\ensuremath{\styleact{pressOK}}}
\newcommand{\ATMquickWithdrawal}{\ensuremath{\styleact{quickWithdraw}}}
\newcommand{\ATMrestart}{\ensuremath{\styleact{restart}}}
\newcommand{\ATMstart}{\ensuremath{\styleact{start}}}
\newcommand{\ATMtakeCash}{\ensuremath{\styleact{\textcolor{red}{takeCash}}}}
\newcommand{\ATMnbFP}{\ensuremath{\styledisc{nbFP}}}
\newcommand{\ATMI}{\ensuremath{I}}
\newcommand{\ATMW}{\ensuremath{W}}
\newcommand{\ATMWP}{\ensuremath{WP}}
\newcommand{\ATMWC}{\ensuremath{WC}}
\newcommand{\ATMWA}{\ensuremath{WA}}
\newcommand{\ATMPNW}{\ensuremath{PNW}}
\newcommand{\ATMPQW}{\ensuremath{PQW}}
\newcommand{\ATMDB}{\ensuremath{DB}}
\newcommand{\ATMMAN}{\ensuremath{MAN}}
\newcommand{\ATMMAQ}{\ensuremath{MAQ}}
\newcommand{\ATMOO}{\ensuremath{OO}}
\newcommand{\ATMT}{\ensuremath{T}}
\newcommand{\ATMC}{\ensuremath{C}}
\newcommand{\ATME}{\ensuremath{E}}
\begin{figure*}[tb]
    \tikzstyle{every node}=[initial text=]
    \tikzstyle{location}=[rectangle, rounded corners, minimum size=12pt, draw=black, fill=blue!10, inner sep=2pt]
    \tikzstyle{invariant}=[draw=black, dotted, inner sep=1pt, node distance=0] % xshift=1em,
    \tikzstyle{final}=[double, fill=blue!50]
    \centering
	\scriptsize
	
	\resizebox{\textwidth}{!}{
		\begin{tikzpicture}[auto, ->, >=stealth', scale=1.0, xscale=2, yscale=1.75]
			
			\node[location, initial] at (-2, -2) (initial) {$\ATMI$};
			
			\node[location] at (-1, -2) (welcome) {$\ATMW$};
			\node[invariant, above=of welcome] {$\sclock \leq 3$};
			
			\node[location] at (0, -2) (waitingPassword) {$\ATMWP$};
			\node[invariant, right=of waitingPassword] {$\sclock \leq 10$};
			
			\node[location] at (0, -3) (waitChoice) {$\ATMWC$};
			\node[invariant, right=of waitChoice] {$\sclock \leq 10$};
			
			\node[location] at (0, -4) (waitingAmount) {$\ATMWA$};
			\node[invariant, right=of waitingAmount, yshift=.5em] {$\sclock \leq 10$};
			
			\node[location] at (0, -5) (preparingWithdrawalNormal) {$\ATMPNW$};
			\node[invariant, right=of preparingWithdrawalNormal] {$\sclock \leq 15$};
			
			\node[location] at (-2, -5) (preparingWithdrawalQuick) {$\ATMPQW$};
			\node[invariant, right=of preparingWithdrawalQuick] {$\sclock \leq 15$};
			
			\node[location] at (-3.5, -5) (displayingBalance) {$\ATMDB$};
			\node[invariant, right=of displayingBalance] {$\sclock \leq 10$};
			
			\node[location] at (0, -6) (moneyAvailableNormal) {$\ATMMAN$}; % private
			\node[invariant, right=of moneyAvailableNormal] {$\sclock \leq 20$};
			
			\node[location] at (-2, -6) (moneyAvailableQuick) {$\ATMMAQ$}; % private
			\node[invariant, right=of moneyAvailableQuick] {$\sclock \leq 20$};
			
			\node[location] at (-1, -7) (otherOperation) {$\ATMOO$};
			\node[invariant, below=of otherOperation] {$\sclock \leq 10$};
			
			\node[location] at (0, -7.5) (terminating) {$\ATMT$};
			\node[invariant, right=of terminating] {$\sclocky \leq 100$};
			
			\node[location] at (2, -8) (cancelling) {$\ATMC$};
			\node[invariant, right=of cancelling] {$\sclocky \leq 100$};
			
			\node[location, final] at (-1, -8) (theEnd) {$\ATME$};

			\path (initial) edge[] node[above]{$\ATMstart$} node[below]{$\sclock, \sclocky \assign 0$} (welcome);
			
			\path (welcome) edge[] node[above, align=center]{$\sclock = 3$\\$\ATMaskPassword$} node[below]{$\sclock \assign 0$} (waitingPassword);
			
			\path (waitingPassword) edge[loop above] node[right, align=center]{$\ATMnbFP < 3$\\$\ATMincorrectPassword$\\$\ATMnbFP++$} (waitingPassword);
			
			\path (waitingPassword) edge[] node[left, align=center]{$\ATMcorrectPassword$\\$\sclock \assign 0$} (waitChoice);
			
			\path (waitingPassword) edge[bend left] node[sloped, align=center]{$\sclock = 10$} (cancelling);
			
			\path (waitChoice) edge[bend right] node[left, align=center]{$\ATMquickWithdrawal$\\$\sclock \assign 0$} (preparingWithdrawalQuick);
			
			\path (waitChoice) edge[] node[align=center]{$\ATMnormalWithdrawal$\\$\sclock \assign 0$} (waitingAmount);
			
			\path (waitChoice) edge[bend right] node[above left, align=center]{$\ATMrequestBalance$\\$\sclock \assign 0$} (displayingBalance);
			
			\path (waitingAmount) edge[loop left] node[align=center]{$\styledisc{nbFA} < 3$\\$\ATMincorrectAmount$\\$\styledisc{nbFA}++$} (waitingAmount);
			
			\path (waitingAmount) edge[] node[align=center]{$\ATMcorrectAmount$\\$\sclock \assign 0$} (preparingWithdrawalNormal);
			
			\path (waitingAmount) edge[out=0, in=90] node[sloped, align=center]{$\sclock = 10$} (cancelling);
			
			\path (preparingWithdrawalNormal) edge[] node[align=center]{$\sclock = 15$\\$\sclock \assign 0$} (moneyAvailableNormal);
			
			\path (preparingWithdrawalQuick) edge[] node[align=center]{$\sclock = 15$\\$\sclock \assign 0$} (moneyAvailableQuick);
			
			\path (moneyAvailableNormal) edge[] node[sloped, align=center]{$\sclock = 20$} (cancelling);
			
			\path (moneyAvailableNormal) edge[] node[sloped, above, xshift=1em]{$\ATMtakeCash$} node[sloped, below, xshift=2em,yshift=-1em]{$\sclock \assign 0$} (otherOperation);
			
			\path (moneyAvailableQuick) edge[] node[above, sloped, align=center]{$\sclock = 20$} (cancelling);
			
			\path (moneyAvailableQuick) edge[] node[sloped, below, align=center]{$\ATMtakeCash$} (theEnd);
			
			\path (displayingBalance) edge[bend right] node[sloped, above]{$\sclock = 10$} node[sloped, below]{$\sclock \assign 0$} (otherOperation);
			
			\path (displayingBalance) edge[bend right] node[below left, align=center]{$\ATMpressOK$} (theEnd);
			
			\path (otherOperation) edge[bend left] node[sloped,align=center]{$\sclock = 10$} (terminating);
			\path (otherOperation) edge[bend right] node[sloped, %above=0.3cm
			]{$\ATMpressFinish$} (terminating);
			
			\path (otherOperation) edge[] node[sloped, align=center]{$\ATMrestart$} (waitChoice);
			
			\path (cancelling) edge[] node[align=center]{$\sclocky = 100$\\$\ATMfinish$} (theEnd);
			
			\path (terminating) edge[] node[align=center]{$\sclocky = 100$\\$\ATMfinish$} (theEnd);

			% LEGEND
			\node[] at (3.5, -2.3) (legend) {
				\begin{tabular}{l @{~:~} l}
					$C$ & cancelling\\
					$DB$ & displaying balance\\
					$E$ & end\\
					$I$ & initial\\
					$MAN$ & money available normal\\
					$MAQ$ & money available quick\\
					$OO$ & other operations\\
					$PNW$ & preparing normal withdrawal\\
					$PQW$ & preparing quick withdrawal\\
					$T$ & terminating\\
					$W$ & waiting\\
					$WA$ & waiting for amount\\
					$WC$ & waiting for choice\\
					$WP$ & waiting for password\\
				\end{tabular}
			};
		\end{tikzpicture}
	}
 %\vspace{-4pt}
\caption{Running example: ATM benchmark \cite{ABLM22}}
	\label{fig:ATM}
\end{figure*}
%-----------------------------------------------------------

%-----------------------------------------------------------
\begin{example}
%As a running example and case study for our model and logic, we consider \iac{TA} model of an ATM system (\cref{fig:ATM}). \ATMWP{} is a location, where we can stay as long as clock $\clock \leq 10$. Before $\clock >10$, we must take one of the two outgoing transitions. If $\clock = 10$ we can go to the $\ATMC$ location. If we take the $\ATMcorrectPassword$ action, we go to the \ATMWC{} location, while resetting clock $\clock$. Note that this model uses a discrete variable $\ATMnbFP$ to count password failures; since it ranges over a finite domain, it is a syntactic sugar (and this model is still \iac{ta}).
As a running example and case study, we consider \iac{TA} model of an ATM system (\cref{fig:ATM}). In location \ATMWP{}, the system may remain while $\clock\leq 10$; before this bound is exceeded, one of the outgoing transitions must be taken. When $\clock=10$, the system may move to location $\ATMC$. Taking the action $\ATMcorrectPassword$ leads to \ATMWC{} and resets $\clock$. The model uses the finite-domain variable $\ATMnbFP$ to count password failures, this is only syntactic sugar, so the model remains \iac{ta}.
\end{example}
%-----------------------------------------------------------
%%%%%%%%%%%%%%%%%%%%%%%%%%%%%%%%%%%%%%%%%%%%%%%%%%%%%%%%%%%%%%
\subsection{Zone Graph and Predecessor Operator}
\label{subsection:graph}
%%%%%%%%%%%%%%%%%%%%%%%%%%%%%%%%%%%%%%%%%%%%%%%%%%%%%%%%%%%%%%

%Since $\tts$ has infinitely many states due to the continuous nature of clocks, its semantics cannot be represented directly by a finite-state transition system.  To obtain a finite symbolic abstraction, we use the \ac{zonegraph} abstraction \cite{HNSY94,BYJ04} (formally defined in \cref{def:zonegraph}), which is the standard symbolic representation underlying most verification procedures \acp{TA}. Symbolic verification tools for real-time systems typically manipulate sets of clock valuations in the form of zones, in practice, these zones are efficiently represented by \acp{DBM}. Given a set of clocks $\Clocks$, a \emph{zone} is a set of clock valuations satisfying a conjunction of constraints of the form $\clock \compOp d$ or $\clock - \clocky \compOp c$, where $\clock, \clocky \in \Clocks$, $c \in \Z$, and $\compOp \in \{<, \leq, =, \geq, >\}$. Given a zone $\zone$ and a delay $d \in \Rplus$, we denote by $\timeelpaseof{\zone}{d}$ the zone $\Set{\clockval + d}{\clockval \in \zone}$, which represents the evolution of clock valuations after a delay of $d$ time units and $\timeelpase{\zone} = \bigcup_{d \in \Rplus} \timeelpaseof{\zone}{d}$. A \emph{symbolic state} is a pair $\symZone = (\loc, \zone)$ where $\loc \in \Loc$ is a location and $\zone$ is a zone over $\Clocks$. A symbolic state $(\loc, \zone)$ represents the set of concrete states $\{(\loc, \clockval) \mid \clockval \in \zone\}$.

Since $\tts$ has infinitely many states due to dense time, its semantics cannot be handled directly as a finite-state transition system. We therefore use the \ac{zonegraph} abstraction~\cite{HNSY94,BYJ04}, formally defined in \cref{def:zonegraph}, which is the standard symbolic representation for verification of \acp{TA}. It manipulates sets of clock valuations, called \emph{zones}, usually represented efficiently by \acp{DBM}. Given a set of clocks $\Clocks$, a zone is a set of valuations satisfying a conjunction of constraints of the form $\clock\compOp d$ or $\clock-\clocky\compOp c$, where $\clock,\clocky\in\Clocks$, $c\in\Z$, and $\compOp\in\{<,\leq,=,\geq,>\}$. For a zone $\zone$ and a delay $d\in\Rplus$, we write $\timeelpaseof{\zone}{d}=\Set{\clockval+d}{\clockval\in\zone}$, and $\timeelpase{\zone}=\bigcup_{d\in\Rplus}\timeelpaseof{\zone}{d}$. A \emph{symbolic state} is a pair $\symZone=(\loc,\zone)$ with $\loc\in\Loc$ and $\zone$ a zone over $\Clocks$, representing the concrete states $\{(\loc,\clockval)\mid \clockval\in\zone\}$.

\newcommand{\SymStates}{\ensuremath{\mathcal{S}}}
\newcommand{\initialSymState}{\ensuremath{s_0}}
\newcommand{\SetSymStates}{\ensuremath{\mathcal{P}}}
\newcommand{\SymState}{\ensuremath{s}}
\newcommand{\SymTransitions}{\ensuremath{\rightarrow}}

\begin{definition}[\acl{zonegraph}]
\label{def:zonegraph}
The \ac{zonegraph} of $\TA$ is a transition system $\zonegraph = (\SymStates, \initialSymState, (\Alphabet \cup \set{\epsilon}), \Label_\ttz, \SymTransitions)$ where:
\begin{enumerate*}
    \item $\SymStates$ is the least set of symbolic states $(\loc,\zone)$ containing $\initialSymState$ and closed under the transition rules below.
    % is a finite set of symbolic states $(\loc,\zone)$ such that every valuation in $\zone$ satisfies the invariant of $\loc$, i.e. $\forall \clockval\in\zone,\ \clockval\models \Invariant(\loc)$,
    \item $\initialSymState = (\locinit, \{\ClocksZero\})$ is the initial symbolic state,
    \item $\Alphabet$ is the set of action labels of $\TA$ and $\epsilon\notin\Alphabet$ is a special symbol used to denote delay transitions in the zone graph  semantics,
    \item $\Label_\ttz ((\loc, \zone)) = \Label(\loc)\ \cup\
\{\guard \in \Constraints(\Clocks)\mid \forall \clockval \in \zone,\ \clockval \models \guard\}$,
    \item $\SymTransitions \subseteq \SymStates \times (\Alphabet \cup \set{\epsilon}) \times \SymStates$ is the transition relation defined by: 
    \end{enumerate*}
    \begin{itemize}
    
    \item $(\loc, \zone) \transitionArrow{a} (\loc', \zone')$ if there exists a transition $\loc \transitionArrow{a, \guard, \resets} \loc'$ in $\TA$ such that $\zone' = \reset{(\zone \cap \llbracket \guard\rrbracket)}{\resets} \cap \llbracket \Invariant(\loc') \rrbracket$,
    \item $(\loc, \zone) \transitionArrow{\epsilon} (\loc, \zone')$  if $\zone' = \timeelpase{\zone} \cap \llbracket \Invariant(\loc) \rrbracket$.
    \end{itemize}
       % \dm{Vérifier quelles sont les étiquettes des transitions. $\Alphabet$ ne semble pas correct?}
%\reviewer{2}{Reviewer 2 remarks that an element ``Lab'' appears to be missing in this definition (page 9, line -9 in the submitted version). Verify that (a) the labeling function $Label_ttz$ and (b) the set of final symbolic states $SymFinal$ are properly included in the zone graph tuple and its enumeration.}
\end{definition}

\noindent Whenever a finite symbolic abstraction is required, the ordinary automaton clocks are normalized using a standard sound zone abstraction. Formula satisfaction, however, is defined on concrete extended states $(\ell,\nu)$ rather than directly on zones. Indeed, a zone may contain valuations that satisfy a formula and valuations
that do not. Therefore, during symbolic model checking, the set of concrete states satisfying a formula is represented as a union of zones. When a zone is not homogeneous with respect to the formula being checked, it is refined into smaller zones.  Using standard normalization and abstraction operations on zones, the symbolic state space of $\zonegraph(\TA)$ is finite~\cite{HNSY94,BYJ04}. Thus, $\zonegraph(\TA)$ is a finite abstraction of the infinite timed transition system $\tts(\TA)$, preserving the information needed for symbolic verification. In the sequel, we work on $\zonegraph(\TA)$. Given $\SetSymStates\subseteq\SymStates$, the predecessor set of $\SetSymStates$ is $\Pre(\SetSymStates) = \Set{\symZone'\in\SymStates}{ \exists \symZone\in\SetSymStates,\ \exists \alpha\in(\Alphabet\cup\{\epsilon\}) \text{ such that } \symZone'\transitionArrow{\alpha}\symZone }$. Its computation combines a backward delay step $\Pre^c(\SetSymStates)$ and a backward discrete step $\Pre^d(\SetSymStates)$~\cite{HNSY94}. Dually, the successor set of $\SetSymStates$ is $\Post(\SetSymStates) = \Set{\symZone'\in\SymStates}{ \exists \symZone\in\SetSymStates,\ \exists \alpha\in(\Alphabet\cup\{\epsilon\}) \text{ such that } \symZone\transitionArrow{\alpha}\symZone' }$.

\begin{lemma}
\label{lem:pre-correctness}
Let $\setX\subseteq\SymStates$ and let $\symZone=(\loc,\zone)$.
Then $\symZone\in\Pre(\setX)$ iff there exist $\clockval\in\zone$, a concrete successor $(\loc',\clockval')$, and
$\symZone'=(\loc',\zone')\in\setX$ such that $(\loc,\clockval)\longrightarrow(\loc',\clockval')$ and $\clockval'\in\zone'$.
\end{lemma}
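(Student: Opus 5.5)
The plan is to unfold the definition of $\Pre$ on the zone graph and then apply the two rules of \cref{def:zonegraph} together with the concrete rules of \cref{def:defSemTA}. By definition, $\symZone\in\Pre(\setX)$ holds exactly when there are $\symZone'=(\loc',\zone')\in\setX$ and $\alpha\in\Alphabet\cup\{\epsilon\}$ with $\symZone\transitionArrow{\alpha}\symZone'$. I would split according to whether $\alpha$ is an action or $\epsilon$. For each case I would prove both directions, reading $(\loc,\clockval)\longrightarrow(\loc',\clockval')$ as a single discrete or delay step of $\tts(\TA)$.

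\emph{($\Rightarrow$).} First take $\alpha=\action\in\Alphabet$. Then $\zone'=\reset{(\zone\cap\llbracket\guard\rrbracket)}{\resets}\cap\llbracket\Invariant(\loc')\rrbracket$ for some transition $\loc\transitionArrow{\action,\guard,\resets}\loc'$. Since $\symZone'$ is a symbolic state we may assume $\zone'$ is nonempty, so we can pick $\clockval'\in\zone'$. By the form of $\zone'$, there is $\clockval\in\zone\cap\llbracket\guard\rrbracket$ with $\clockval'=\reset{\clockval}{\resets}$ and $\clockval'\models\Invariant(\loc')$. These are exactly the conditions of the discrete rule, so $(\loc,\clockval)\transitionArrow{\action}(\loc',\clockval')$.

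Now take $\alpha=\epsilon$. Then $\loc'=\loc$ and $\zone'=\timeelpase{\zone}\cap\llbracket\Invariant(\loc)\rrbracket$. Any $\clockval'\in\zone'$ has the form $\clockval+d$ with $\clockval\in\zone$, $d\in\Rplus$ and $\clockval+d\models\Invariant(\loc)$. This gives the delay step $(\loc,\clockval)\transitionArrow{d}(\loc,\clockval+d)$.

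\emph{($\Leftarrow$).} Suppose $\clockval\in\zone$ takes a concrete step to $(\loc',\clockval')$ with $\clockval'\in\zone'$ and $(\loc',\zone')\in\setX$. If the step is discrete via $\loc\transitionArrow{\action,\guard,\resets}\loc'$, then $\clockval'=\reset{\clockval}{\resets}$ lies in $\reset{(\zone\cap\llbracket\guard\rrbracket)}{\resets}\cap\llbracket\Invariant(\loc')\rrbracket$, which is the $\action$-successor zone of $\symZone$. If the step is a delay, then $\clockval'$ lies in $\timeelpase{\zone}\cap\llbracket\Invariant(\loc)\rrbracket$, the $\epsilon$-successor zone.

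The main obstacle is that the zone graph is deterministic per label. The lemma only yields $\clockval'\in\zone'$ for some $\symZone'\in\setX$, and $\zone'$ need not coincide with the computed successor zone $\zone''$. We therefore only know $\zone'\cap\zone''\neq\emptyset$, not $\symZone\transitionArrow{\alpha}\symZone'$.

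To close this gap I would read $\setX$ and $\Pre$ concretely, which the paper's convention allows. Formula satisfaction is defined on concrete states, and satisfaction sets are unions of zones that are refined when not homogeneous. Under this reading, a set of symbolic states stands for the set of concrete states it denotes, and $\Pre$ is computed as $\Pre^d\cup\Pre^c$ in the sense of~\cite{HNSY94}. These operators are defined as backward images on valuations: $\Pre^d$ is the guard-constrained inverse reset, and $\Pre^c$ is backward time elapse intersected with the invariant. Their standard correctness is exactly that a valuation has a one-step successor in the target iff it belongs to the backward image. With this, the $\Leftarrow$ direction goes through by taking the refined piece of $\zone''$ meeting $\zone'$, which witnesses membership in $\Pre$.

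I would state this convention explicitly, and also the assumption that normalization does not add valuations. The latter holds because normalization is sound and the lemma concerns one step before abstraction.
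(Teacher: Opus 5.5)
Your ($\Rightarrow$) direction is the paper's appeal to soundness of the symbolic successor construction, unfolded rule by rule, and it is fine. For ($\Leftarrow$), the paper asserts that, by completeness, the concrete step is represented by a symbolic transition $\symZone\Rightarrow\symZone'$ to the given $\symZone'\in\setX$. That is exactly the step you single out, and your diagnosis is correct. Completeness only yields a transition to the computed successor $(\loc',\zone'')$, and with $\Pre$ read literally on $\zonegraph$ the ``if'' direction is false.

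Here is a counterexample. Take one clock $\clock$, trivial invariants, and the single edge $\loci{0}\transitionArrow{\action,\ \clock\le 1,\ \emptyset}\loci{1}$. The zone graph contains $\symZone=(\loci{0},\clock\ge 0)$, whose only successors are $\symZone$ itself and $(\loci{1},0\le\clock\le 1)$. It also contains $(\loci{1},\clock\ge 0)$, the $\epsilon$-successor of the latter. For $\setX=\{(\loci{1},\clock\ge 0)\}$, the valuation $\clockval$ with $\clockval(\clock)=\frac{1}{2}$ makes the right-hand side true, yet $\symZone\notin\Pre(\setX)$. So the paper's converse step does not go through, a restatement is unavoidable, and your concrete reading of $\Pre$ is the appropriate repair.

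As written, your repair needs three refinements.
\begin{enumerate}
\item Fix one reading of $\Pre$ for both directions. Your ($\Rightarrow$) argument concerns the symbolic $\Pre$, while your ($\Leftarrow$) argument concerns the concrete one.
\item Read membership existentially, as $\gamma(\symZone)\cap(\Pre^d\cup\Pre^c)(\gamma(\setX))\neq\emptyset$, since the right-hand side supplies only a single $\clockval$. Under the homogeneous reading $\gamma(\symZone)\subseteq(\Pre^d\cup\Pre^c)(\gamma(\setX))$, the ``if'' direction fails again.
\item What lands in $\Pre(\setX)$ is a piece of the source, namely $\zone\cap(\Pre^d\cup\Pre^c)(\gamma(\setX))$, which contains $\clockval$. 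The refined target $\zone''\cap\zone'$ counts as an element of $\setX$ only once $\setX$ is read as $\gamma(\setX)$.
\end{enumerate}
With these conventions, both directions are just the standard correctness of $\Pre^d$ and $\Pre^c$. The lemma then becomes essentially definitional, and you should say so explicitly.

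Finally, your claim that normalization does not add valuations is wrong: extrapolation enlarges zones. It is harmless here for two reasons. In ($\Rightarrow$) you may choose $\clockval'$ in the exact successor rather than at an arbitrary point of the normalized zone. Wherever the computed successor $\zone''$ is used, you only need that the exact successor is contained in its normalization.
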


\begin{proof}
By definition, $\Pre(\setX)$ contains precisely the symbolic states having a symbolic successor in $\setX$. For the forward direction, let $\symZone\in\Pre(\setX)$. Then there exists $\symZone'\in\setX$ such that $\symZone\Rightarrow\symZone'$. By soundness of the symbolic successor construction, this transition is
witnessed by concrete valuations $\clockval\in\zone$ and $\clockval'\in\zone'$ satisfying $(\loc,\clockval)\longrightarrow(\loc',\clockval')$. Conversely, suppose such concrete valuations exist. By completeness of the
symbolic successor construction, the corresponding concrete transition is represented by a symbolic transition $\symZone\Rightarrow\symZone'$. Since $\symZone'\in\setX$, it follows that $\symZone\in\Pre(\setX)$.
\end{proof}

%Given a symbolic state $\symZone\in\SymStates$, we define its set of predecessors by
%$\Pre(\symZone)
%=
%\Set{\symZone' \in \SymStates}{
%\exists \alpha\in(\Alphabet\cup\{\epsilon\}) \text{ such that } \symZone' %\transitionArrow{\alpha} \symZone
%}$.

%%%%%%%%%%%%%%%%%%%%%%%%%%%%%%%%%%%%%%%%%%%%%%%%%%%%%%%%%%%%%%%%%%%%%%
%%%%%%%%%%%%%%%%%%%%%%%%%%%%%%%%%%%%%%%%%%%%%%%%%%%%%%%%%%%%%%%%%%%%%%
\section{ET-Opacity Logic} \label{sec:etol}
%%%%%%%%%%%%%%%%%%%%%%%%%%%%%%%%%%%%%%%%%%%%%%%%%%%%%%%%%%%%%%%%%%%%%%
%%%%%%%%%%%%%%%%%%%%%%%%%%%%%%%%%%%%%%%%%%%%%%%%%%%%%%%%%%%%%%%%%%%%%%

%%%%%%%%%%%%%%%%%%%%%%%%%%%%%%%%%%%%%%%%%%%%%%%%%%%%%%%%%%%%
%\subsection{Execution-Time Opacity (ET-opacity)}
%%%%%%%%%%%%%%%%%%%%%%%%%%%%%%%%%%%%%%%%%%%%%%%%%%%%%%%%%%%%%

%Contrary to \cite{ADJ22,ALLMS23}, where the secret is defined as the visit of a particular control location, we allow here an arbitrary private property over complete runs. In the following, we assume a secret property defined by a predicate $\secretProp$.
%Unlike~\cite{ADJ22,ALLMS23}, where secrecy is tied to visiting a specific location, we allow secrets to be arbitrary private properties over complete runs, defined by a predicate $\secretProp$.

Unlike~\cite{ADJ22,ALLMS23}, where secrecy is location-based, we define secrets as private state predicates evaluated at the endpoint of terminig runs. % run properties given by a predicate $\secretProp$.

\begin{definition}[\acl{etopacity}]
\begin{enumerate}
 \item $\tts$ satisfies \emph{existential \ac{etopacity}} at $\statein$ with respect to $\secretProp$, denoted
        $\tts, \statein \models \exists\text{-ETO}(\secretProp)$
        iff there exist $\run,\run' \in \Paths_{\tts}(\statein)^+, 
        \run \equivt \run'
        \land
        (\tts, \last(\run) \models \secretProp)
        \land
        (\tts, \last(\run') \models \neg \secretProp.)
        $

    \item 
  %  $ \tts, \statein \models \forall\text{-ETO}(\secretProp) $ if  for all possible durations of runs from $\statein$, there exist two duration-equivalent runs that differ on the satisfaction of $\secretProp$, \ie{} 
  %$
   % \forall \run \in \Paths_{\tts}(\statein)^+, \exists \run' \in \Paths_{\tts}(\statein)^+,
   % \run \equivt \run'
   % \land
   % (\tts, \last(\run) \models \secretProp)
   % \land
   % (\tts, \last(\run') \models \neg \secretProp.)
   %$
    $\tts$ satisfies \emph{universal \ac{etopacity}} at $\statein$ with respect to $\secretProp$, denoted
    $\tts,\statein\models\forall\text{-ETO}(\secretProp)$, iff every terminating run has a duration-equivalent run with the opposite truth value of the private predicate:
    $
      \forall \run\in\Paths_{\tts}(\statein)^+\;\exists\run'\in\Paths_{\tts}(\statein)^+:
      \run \equivt \run'
      \ \land\
      $
      $\Bigl(
      \tts,\last(\run)\models\secretProp
      \Leftrightarrow
      \tts,\last(\run')\models\neg\secretProp
      \Bigr).
    $
\end{enumerate}
\noindent The second clause is symmetric and is equivalent to equality between the sets of durations of private and non-private runs. If only one-sided (weak) opacity is desired, the last equivalence can instead be replaced by :
$\tts,\last(\run)\models\secretProp\Rightarrow
 \tts,\last(\run')\models\neg\secretProp$.
\end{definition}

%%%%%%%%%%%%%%%%%%%%%%%%%%%%%%%%%%%%%%%%%%%%%%%%%%%%%%%%%%%%
\subsection{ET-Opacity Logic (\ETOL)}
%%%%%%%%%%%%%%%%%%%%%%%%%%%%%%%%%%%%%%%%%%%%%%%%%%%%%%%%%%%%%

%In this section, we introduce the syntax and semantics of \ETOL, which is designed to capture \ac{etopacity}. Intuitively, a formula of the form $\ETexists\prop$ expresses that there exists at least one execution duration performed by two executions, one satisfying $\prop$ and the other violating $\prop$. In other words, there exists a duration. Dually, a formula of the form $\ETforall\prop$ expresses that this indistinguishability holds for all possible execution times.

This section introduces the syntax and semantics of \ETOL, designed to capture \ac{etopacity}. Intuitively, $\ETexists\prop$ states that there exists an execution duration shared by two executions, one satisfying $\prop$ and one violating it. Dually, $\ETforall\prop$ requires such indistinguishability for all possible execution times.

\begin{definition}
Let $\Atomic$ be a set of atomic propositions, $\Clocks$ the set of automaton clocks, and $\ClocksFormula$ a non-empty set of formula clocks such that  $\Clocks \cap \ClocksFormula = \emptyset$.  Formulas of \ETOL are defined by the following (minimal) grammar:
$$
\begin{aligned}
 \prop ::= 
    &   \top \mid 
        \atomic \mid 
        \neg \prop \mid 
        \guard  \mid 
        \prop \land \prop  \mid
        \clockformula.\prop \mid 
        \ETexists (\prop \until \prop) \mid \\
    &    \ETexists (\prop \release \prop) \mid 
         \ETforall (\prop \until \prop) \mid 
        \ETforall (\prop \release \prop)
\end{aligned}
$$
\noindent where 
    $\atomic \in \Atomic$, 
    $\clockformula \in \ClocksFormula$ and $\guard \in \Constraints(\Clocks \cup \ClocksFormula)$.
\end{definition}
%\dm{Confirmer si $\prop_1, \prop_2 \in \ETOL$. On peut mettre un $ETexists$ dans un $ETexists$ ? un $AG$ dans un $ETexists$ ?}

%\noindent It is possible to compare formula clocks $\ClocksFormula{}$ and automaton clocks $\Clocks{}$ by means of clock constraints of the form $\guard$, which may involve both clocks from the \ac{ta} and clocks introduced by the formula. The Boolean connectives $\bot$, $\vee$, and $\to$ are defined as usual. A clock $\clockformula$ occurring in a formula of the form $\clockformula.\prop$ is called a \emph{freeze identifier} and binds the formula clock $\clockformula$ within $\prop$. Intuitively, the formula $\clockformula.\prop$ holds in a state $\state$ if $\prop$ holds in the extended state obtained by resetting the value of clock $\clockformula$ to~0 in $\state$. Freeze identifiers can be used in conjunction with temporal operators to express common real-time requirements, such as punctuality constraints, bounded response times, or deadlines. \ETOL introduces two dedicated \ac{etopacity} operators: the \emph{existential \ac{etopacity} operator} $\ETexists{}$  corresponding to $\exists$-\ac{etopacity}, and the \emph{universal \ac{etopacity} operator} $\ETforall$ corresponding to $\forall$-\ac{etopacity}. For each \ac{etopacity} operator in $\set{\ETexists , \ETforall }$, derived temporal operators are defined similarly:
\noindent Clock constraints $\guard$ may compare automaton clocks in $\Clocks$ with formula clocks in $\ClocksFormula$. The Boolean connectives $\bot$, $\vee$, and $\to$ are defined as usual. In a formula $\clockformula.\prop$, the clock $\clockformula$ is a \emph{freeze identifier}: it binds $\clockformula$ in $\prop$ and resets it to $0$ before evaluating $\prop$. It is a logical clock and does not modify any clock of the \ac{ta}. \ETOL introduces two \ac{etopacity} operators: the existential operator $\ETexists$, for $\exists$-\ac{etopacity}, and the universal operator $\ETforall$, for $\forall$-\ac{etopacity}. For each operator in $\set{\ETexists,\ETforall}$, the derived temporal operators are defined similarly:
$$
\begin{array}{r l p{.01\linewidth} r l}
\ETexists \eventualy \prop :=& \ETexists (\top \until \prop) & & \ETforall \eventualy \prop :=& \ETforall (\top \until \prop)  \\
\ETexists \globaly \prop :=& \ETexists (\bot \release \prop) & & \ETforall \globaly \prop :=& \ETforall (\bot \release \prop)  \\
\ETexists (\prop \mathsf{W} \propOther)
:=& \ETexists (\propOther \release (\prop \vee \propOther)) & & \ETforall (\prop \mathsf{W} \propOther)
:=& \ETforall (\propOther \release (\prop \vee \propOther)).
\end{array}
$$
The size $|\prop|$ of a formula $\prop$ is defined as the number of logical and temporal connectives occurring in $\prop$. %\dm{utile?} 
\begin{example}
    Consider the formula $\prop = \clockformula.\ETexists\big( \prop_1 \until (\prop_2 \wedge \clockformula \leq 7) \big)$.
    The binder resets $\clockformula$ before the opacity modality is evaluated. Hence $\clockformula$ measures the time elapsed from the current state, and the right-hand side of the until must be reached within $7$ time units on the run satisfying the path formula. %Existential opacity additionally requires a duration-equivalent run on which the same until formula is false.
    % $\prop = \ETexists\big(\clockformula.(\prop_1 \wedge \clockformula \leq 7) \until \prop_2\big)$.
   % The freeze operator $\clockformula.(\cdot)$ resets $\clockformula$ before the until formula is evaluated. Thus, $\clockformula$ measures the time until $\prop_2$ is reached, and the formula requires $\prop_1$ to hold until $\prop_2$ becomes true within $7$ time units. The opacity operator then states that this timed behaviour cannot be distinguished, by observing only total execution time, from behaviours where the property does not hold.
%The freeze operator $\clockformula.(\cdot)$ resets $\clockformula$ before the until formula is evaluated, so $\clockformula$ measures the time elapsed until $\prop_2$ is reached. Hence, the formula requires $\prop_1$ to hold until $\prop_2$ becomes true within $7$ time units. The opacity operator further requires this timed behaviour to be indistinguishable, in terms of total execution time, from behaviours where the property does not hold.
    %This formula intuitively states that for every execution satisfying $\clockformula.\big(\prop_1 \wedge \clockformula \leq 7\big) \until \prop_2$ (\ie{} where $\prop_1$ is true until $\prop_2$ becomes true before 7 time units), there exists another execution with exactly the same total execution time measured by clock $\clockformula$ for which this property does not hold. Consequently, an observer who can only observe execution times cannot determine whether the property $\clockformula.\big(\prop_1 \wedge \clockformula \leq 7\big) \until \prop_2$ was satisfied. 
\end{example}
\begin{example}
%\reviewer{2}{The formula in this example may be trivially satisfied in part: since $\clockformula_1$ is reset to $0$ by the freeze operator $\clockformula_1.(-)$, the constraint $\clockformula_1 \leq 6$ holds trivially at the outermost level ($\clockformula_1 = 0 \leq 6$). A temporal operator is likely missing between $\clockformula_1.(-)$ and the inner conjunct $\clockformula_1 \leq 6$, so that the clock $\clockformula_1$ accumulates some duration before being compared. Revise the formula accordingly.}
Distinct timing references can be combined without nesting opacity modalities. For instance, %$\prop= \clockformula_1.\Bigl( (\clockformula_1\leq 6)  \land \clockformula_2.\ETexists\bigl( \prop_2\until(\prop_3\land\clockformula_2\leq4) \bigr) \Bigr)$.
$\prop =
\clockformula_1.(
  \ETexists\bigl(
     \prop_1 \until
     (\prop_2 \land \clockformula_1 \leq 6)
  \bigr)
  \land
  \clockformula_2.\ETexists\bigl(
     \prop_2 \until
     (\prop_3 \land \clockformula_2 \leq 4)
  \bigr)).$
The two freeze clocks introduce independent temporal reference points. The clock $\clockformula_1$ is reset before evaluating the first opacity formula and therefore measures the time elapsed until $\prop_2$ is reached, which is required to occur within $6$ time units. Independently, $\clockformula_2$ is reset before evaluating the second opacity formula and measures the time elapsed until $\prop_3$ is reached, which must occur within $4$ time units. Thus, neither clock constraint is evaluated immediately after its reset, and both impose non-trivial timing requirements.
\end{example} 
\begin{example}
Automaton and formula clocks may occur in the same local constraints: $\prop=\clockformula.\ETexists\bigl(
(\prop_1\land x\leq2)\until
(\prop_2\land\clockformula\leq5\land x\geq1)
\bigr)$. The automaton clock $x$ follows the resets of the model, whereas $\clockformula$ measures time from the logical freeze point.
%Consider the formula $\prop=\clockformula.\ETexists\big((\prop_1 \wedge x \leq 2)\until (\prop_2 \wedge \clockformula \leq 5 \wedge x \geq 1)\big)$. %Here, $x$ is an automaton clock, while $\clockformula$ is a formula clock. The clock $x$ evolves with the system and may be reset by automaton transitions. By contrast, $\clockformula$ is reset locally by the freeze operator $\clockformula.(\cdot)$ and measures the elapsed time from that reset point. Thus, the constraints on $x$ describe the system timing, whereas $\clockformula\leq 5$ bounds the time to reach $\prop_2$.
%Here, $x$ is an automaton clock, while $\clockformula$ is a formula clock. The clock $x$ follows the system evolution and may be reset by transitions, whereas $\clockformula$ is reset locally by the freeze operator. Thus, constraints on $x$ describe the automaton timing, while $\clockformula\leq 5$ bounds the time elapsed from the freeze point until $\prop_2$ is reached.
%the automaton clock $x$ and the formula clock $\clockformula$ play different roles.  The constraint on $x$ refers to the timing behaviour of the system itself, whereas the constraint on $\clockformula$ measures time from the local reset introduced by the formula. %Thus, the formula combines model-level timing information and formula-level timing information in the same opacity specification.
\end{example}
%\noindent Unlike CTL, but for the same reasons as \ac{tctl}, \ETOL does not include a next-time operator. Indeed, in our context, the next operator would not be meaningful since time is continuous and there is no unique next moment in time.
\noindent As in \ac{tctl}, \ETOL does not include a next-time operator: over continuous time, there is no unique next moment.
\begin{definition}[Path semantics of \texorpdfstring{$\until$}{until} and \texorpdfstring{$\release$}{release}]
Let $\run$ be a finite run in $\Paths_{\tts}(\statein)^+$. We write :
\begin{itemize}
    \item $\run \models (\prop_1 \until \prop_2)$, iff there exists $i\in\{0,\ldots,n\}$ such that $\tts,\statei{i}\models \prop_2  \ \text{and} \  \tts,\statei{j}\models \prop_1$ for every $j< i$,
    \item  $\run \models (\prop_1 \release \prop_2)$, iff either
    $\forall i\in\{0,\ldots,n\},\ \tts,\statei{i}\models \prop_2,$
    or there exists $i\in\{0,\ldots,n\}$ such that $\tts,\statei{i}\models \prop_1 \wedge \prop_2 \ \text{and} \ \tts,\statei{j}\models \prop_2$  for every $j < i$.
\end{itemize}
\end{definition}

%\noindent Formulas of \ETOL are interpreted over \ac{TTS}. We are now ready to formally define the semantics of \ETOL formulas. An extended state $\state_{\clockvalformula}$ over $\states$ is a triple $(\loc, \clockval, \clockvalformula)$, where $\state = (\loc, \clockval)\in\states$ (in $\tts$) and $\clockvalformula$ a valuation for the formula clocks in $\ClocksFormula$. To facilitate reading, from this point onward we will use only the symbol $\state$ for an extended state.
%\noindent \ETOL formulas are interpreted over \acp{TTS}. An extended state is a triple $\statein_{\clockvalformula}=(\loc,\clockval,\clockvalformula)$, where $(\loc,\clockval)\in\states$ is a state of $\tts$ and $\clockvalformula$ is a valuation of the formula clocks in $\ClocksFormula$. For readability, we write simply $\statein$ for extended states in the sequel.
\noindent \ETOL formulas are interpreted over extended states
$\statein=(\loc,\clockval,\clockvalformula)$, where $(\loc,\clockval)$ is a concrete state of $\tts$ and
$\clockvalformula:\ClocksFormula\to\Rplus$ is a valuation of formula clocks.
During a delay $d$, both valuations advance by $d$, during an automaton discrete transition, only the automaton clocks may be reset, while formula clocks are unchanged. Accordingly, $\Paths_{\tts}(\statein)^+$ denotes the terminating finite runs of this extended semantics starting at $\statein$.

\begin{definition}[\ETOL semantics]\label{def:sat}
   % Let $\Clocks$ be the set of clocks of $\TA$ and $\ClocksFormula$ a non-empty set of clocks of the formula, $\atomic \in \Atomic, \ \guard \in \Constraints(\Clocks \cup \ClocksFormula)$.  
   The satisfaction relation between $\tts$, $\set{\prop, \propOther}\subseteq\ETOL$ and an extended state $\statein  = (\loc, \clockval, \clockvalformula)$, denoted $\tts, \statein \models \prop$, is defined inductively as follows:

    \begin{enumerate}
        \item $\tts, \statein\models \top$ for all state $s$,
        \item  $\tts, \statein\models \atomic$ \iff $\atomic\in \Label_{\tts}(\statein)$,%\dm{$\Label$ non défini ici, on est en TTS, voir ce qu'il faut mettre}
        \item  $\tts,\statein \models \neg \prop$ \iff not $\tts, s\models \prop$ (notation $\tts, \statein \not \models \prop$), 
        \item  $\tts, \statein \models \prop_1 \wedge \prop_2$ \iff $\tts, s\models \prop_1$ and $\tts, s\models \prop_2$, 
        \item  $\tts, \statein \models \guard$ \iff  $(\clockval \cup \clockvalformula) \models \guard$,
        \item  $\tts, \statein \models j.\prop$ \iff  $\tts, (\loc, \clockval,  \reset{\clockvalformula}{\{\clockformula\}}) \models \prop$,
        \item $\tts,\statein \models \ETexists (\prop_1 \until \prop_2)$ \iff there exist runs $\run , \run' \in \Paths_{\tts}(\statein)^{+}$ such that $\dur(\run)=\dur(\run') \ \land \ \run \models (\prop_1 \until \prop_2) \ \land \ \run' \not\models (\prop_1 \until \prop_2)$,
        %$(\exists i\geq 0$ s.t. $\tts,\run_i \models \prop_2$ and $\forall j<i, \ \tts,\run_j \models \prop_1)$ and $\exists \run' \in \Paths_{\tts}(\state)^{+}$ s.t. $\dur(\run')=\dur(\run)$ and $(\forall i' \geq 0,\ \tts,\run'_{i'} \not\models \prop_2$ or $\ \exists i' \geq 0 \ s.t. \ \tts,\run'_{i'} \models \prop_2 $ and $\exists j'<i'.\ \tts,\run'_{j'} \not\models \prop_1)$,
         \item $\tts,\statein \models \ETexists (\prop_1 \release \prop_2)$ \iff there exist runs $\run, \run' \in \Paths_{\tts}(\statein)^{+}$ such that $\dur(\run)=\dur(\run') \ \land \ \run \models (\prop_1 \release \prop_2) \ \land \ \run' \not\models (\prop_1 \release \prop_2)$,
         %$((\forall i\geq 0,\ \tts,\run_i \models \prop_2)$ or   $(\exists i\geq 0$ s.t. $\tts,\run_i \models \prop_1$ and  $\tts,\run_i \models \prop_2$ and $\forall j<i,\ \tts,\run_j \models \prop_2)$ and $\exists \run' \in \Paths_{\tts}(\state)^{+}$  s.t. $\dur(\run')=\dur(\run)$ and  $\exists i'\geq 0$  s.t. $(\tts,\run'_{i'} \not\models \prop_2$ and $\forall j'<i',\ \tts,\run'_{j'} \not\models \prop_1)$,
        \item $\tts,\statein \models \ETforall (\prop_1 \until \prop_2)$ \iff for every run $\run \in \Paths_{\tts}(\statein)^+$, there exists a run $\run' \in \Paths_{\tts}(\statein)^+$ such that $\dur(\run)=\dur(\run') \ \land \ \bigl(\run \models (\prop_1 \until \prop_2) \Leftrightarrow \run' \not\models (\prop_1 \until \prop_2) \bigr)$, 
        %$\forall \run \in \Paths_{\tts}(\state)^{+}$ s.t. $(((\exists i\geq 0$ s.t. $\tts,\run_i \models \prop_2$ and $\forall j<i, \tts,\run_j \models \prop_1\big)$ then $\exists \run' \in \Paths_{\tts}(\state)^{+}$ s.t. $\dur(\run')=\dur(\run)$  and $ (\forall i'\geq 0,\ \tts,\run'_{i'} \not\models \prop_2$ or  $\exists i'\geq 0$ s.t. $\tts,\run'_{i'} \models \prop_2$ and $\exists j'<i'$ s.t. $\tts,\run'_{j'} \not\models \prop_1))$ and $(\neg(\prop_1 \until \prop_2)$ \text{along the path} $\run$ then $ \exists \run' \in \Paths_{\tts}(\state)^{+}$ s.t. $ \dur(\run')=\dur(\run)$ and $(\prop_1 \until \prop_2)$ \text{along the path} $\run'))$,
         \item $\tts,\statein \models \ETforall (\prop_1 \release \prop_2)$ \iff for every run $\run \in \Paths_{\tts}(\statein)^+$, there exists a run $\run' \in \Paths_{\tts}(\statein)^+$ such that $\dur(\run)=\dur(\run') \ \land \ \bigl(\run \models (\prop_1 \release \prop_2) \Leftrightarrow \run' \not\models (\prop_1 \release \prop_2) \bigr)$
         
         %$\forall \run \in \Paths_{\tts}(\state)^{+}$ s.t. $(((\forall i\geq 0,\ \tts,\run_i \models \prop_2$ or $\exists i\geq 0$ s.t. $\tts,\run_i \models \prop_1$ and $\tts,\run_i \models \prop_2$ and $\forall j<i,\ \tts,\run_j \models \prop_2\Big)$ then $\exists \run' \in \Paths_{\tts}(\state)^{+}$ s.t. $ \dur(\run')=\dur(\run)$ and $\exists i'\geq 0$ s.t. $\tts,\run'_{i'} \not\models \prop_2$ and $\forall j'<i',\ \tts,\run'_{j'} \not\models \prop_1)$ and $ ((\exists i\geq 0$ s.t. $\tts,\run_i \not\models \prop_2$ and $\forall j<i,\ \tts,\run_j \not\models \prop_1)$ and $\exists \run' \in \Paths_{\tts}(\state)^{+}$ s.t. $\dur(\run')=\dur(\run)$ and $( \forall i'\geq 0,\ \tts,\run'_{i'} \models \prop_2$ or $\exists i'\geq 0$ s.t.  $\tts,\run'_{i'} \models \prop_1$ and $\tts,\run'_{i'} \models \prop_2$ and $\forall j'<i',\ \tts,\run'_{j'} \models \prop_2)))$.
    \end{enumerate}
\end{definition}

\noindent A formula $\prop$ is true in $\tts$ iff $\tts,\initialstate\models\prop$. Two formulas $\prop$ and $\propOther$ are equivalent, written $\prop\equiv\propOther$, iff for every  $\tts$ and state $\statein$ of $\tts$, $\tts,\statein\models\prop$ iff $\tts,\statein\models\propOther$. For any formula $\prop$, $\sat(\prop)$ denotes the set of states of $\tts$ satisfying $\prop$, namely $\sat(\prop)=\Set{\statein\in\states}{\tts,\statein\models\prop}$. The relationship between \iac{ta} and its semantics is defined as follows:

%\noindent A formula $\prop$ is true in $\tts$ \iff{} $\tts,\initialstate \models \prop$. Two formulas $\prop$ and $\propOther$ are equivalent (denoted by $\prop \equiv \propOther$) \iff{} for any model $\calM$ and state $\state$ of $\calM$,  $\calM, \state \models \prop$ iff $\calM, \state \models \propOther$. Let $\prop$ be any formula, then $\sat(\prop)$ denotes the set of states of $\tts$ verifying $\prop$, \ie{} $\sat(\prop)=\Set{\state \in \states}{\tts, \state \models \prop}$. The relationship between \iac{ta} and its semantics is defined as follows:

\begin{definition}
	Let $\TA$ be \iac{TA} and $\prop\in \ETOL$, then $\TA \models \prop$ \iff{} $\tts \models \prop$.
\end{definition}

\section{Model Checking}\label{sec:mc}
%%%%%%%%%%%%%%%%%%%%%%%%%%%%%%%%%%%%%%%%%%%%%%%%%%%%%%%%%%%%%%%%%%%%%%
%%%%%%%%%%%%%%%%%%%%%%%%%%%%%%%%%%%%%%%%%%%%%%%%%%%%%%%%%%%%%%%%%%%%%%
%Here, we present our model checking algorithm for \ETOL{} and show that the model checking problem for \ETOL{} is \PSPACE. Our algorithm follows a zone-based approach, extending techniques used for \ac{TCTL} and opacity verification in timed systems. The goal is to determine whether a given \ac{tts} $\tts$ satisfies and \ETOL{} formula. In this section, we fix an \ETOL formula $\prop$.
%In this section, we present a model-checking algorithm for \ETOL{} and prove that its model-checking problem is \PSPACE. The algorithm is based on a symbolic exploration of the timed state space through the zone-graph abstraction, combined with fixed-point computations for the branching-time temporal operators of \ETOL{}. Its purpose is to decide, for a given \ac{tts} $\tts$ and an \ETOL{} formula $\prop$, whether $\tts \models \prop$. For the remainder of this section, we fix an \ETOL{} formula $\prop$.
This section presents a symbolic model checking algorithm for \ETOL{} and proves that the problem is in \PSPACE. The algorithm explores the timed state space via the zone graph and uses fixed-point computations for the branching-time operators of \ETOL{}. Given a \ac{tts} $\tts$ and a formula $\prop$, it decides whether $\tts\models\prop$. In the sequel, we fix an \ETOL{} formula $\prop$.
%\vspace{-0.2cm}
\subsection{Observer clock and execution duration}
Before presenting the model checking algorithm, we extend the \ac{ta} with a distinguished observer clock $\clockMC$, used only to record the total elapsed duration. Formally, $\TA\oplus\clockMC = (\Loc,\locinit,\Clocks\cup\set{\clockMC},\Alphabet,\Invariant,\Label,\Final,\Transitions)$. The clock $\clockMC$ is initialized to $0$, evolves at the same
rate as the automaton clocks, and is never reset. Moreover, $\clockMC$ does not occur in guards or invariants of $\TA$. Consequently, it does not affect the enabledness of transitions or the behavior of the original automaton. Every finite run $\rho$ of $\TA$ has a unique extension to $\TA\oplus\clockMC$, and at the endpoint of this extension, $\clockval(\clockMC)=\dur(\rho)$. Conversely, projecting away $\clockMC$ recovers the corresponding run of $\TA$. Hence, two executions have the same observable execution time exactly when their endpoint valuations agree on
$\clockMC$. Equality is required only at the endpoints; their intermediate transitions need not occur at the same absolute times.

\begin{algorithm}[th!]
    \caption{\ETOL Model Checking }
    \label{alg:labeling}    
   % \dminline{Fix avec TCTL quand ce sera étendu}
  %  \dminline{Vérifier le modèle, on avait $\zonegraph(\TA, \prop)$, mais on parle jamais d'un ZG qui dépend d'une prop...}
\begin{algorithmic}[1] 
    \Require A \ac{TA} \TA, \ an \ETOL formula $\prop$, and a horizon $H$.
    \Ensure For any formula $\propOther$ of $\prop$: $\sat(\propOther)$ $\gets$$\Set{\statein\in \SymStates}{\zonegraph_H,\statein\models\propOther}$. \\
    
    \textbf{Compute} $\zonegraph_H(\TA \oplus \clockMC,\prop)$
    \ForAll{ $i$ $\le |\prop|$ } 
        \ForAll{$\propOther \in \sub(\prop) \text{ with } |\propOther| = i$} \Switch{$(\propOther)$} 
    
        \Case{$\propOther=\top$}
            \State{$ \sat(\propOther) \gets \SymStates$} 
        \EndCase 

        \Case{$\propOther=p$}
            \State{$ \sat(\propOther) \gets \Set{\statein \in \SymStates}{ p \in \Label_\ttz(s)}$} 
        \EndCase 

        \Case{$\propOther=\neg \propOther_1$} 
            \State{$\sat(\propOther) \gets \SymStates \setminus \sat(\propOther_1) $} 
        \EndCase 

        \Case{$\propOther= \guard$} 
            \State{$\sat(\propOther) \gets \Set{\statein \in \SymStates}{\forall \clockval \in \zone,\ \clockval \models \guard}$}
        \EndCase 
    
        \Case{$\propOther=\propOther_1 \land \propOther_2$}
            \State{$\sat(\propOther) \gets \sat(\propOther_1) \cap \sat(\propOther_2)$} 
        \EndCase 
  %  \Case{$\propOther=\OpOperator(\propOther_1 \until \propOther_2)$} \State{$\sat(\propOther) \gets \abbrv{O}\abbrv{U}(\sat(\propOther_1), \sat(\propOther_2))$} 
  %  \EndCase 
  %  \Case{$\propOther=\OpOperator(\propOther_1 \release \propOther_2)$} \State{$\sat(\propOther) \gets \abbrv{O}\abbrv{R}(\sat(\propOther_1), \sat(\propOther_2))$} 
  %  \EndCase
 
    \Case{$\propOther=j.\propOther_1$}
      \State{$\sat(\propOther) \gets \Set{\statein \in \SymStates}{\reset{\clockvalformula}{\{\clockformula\}}(\statein) \in \sat(\propOther_1)}$}
    \EndCase 
    
    \Case{$\propOther=\ETexists (\propOther_1 \until \propOther_2)$}
       % \State $\sat(\propOther) \gets \procedureOUexists \big(\sat(\propOther_1),\sat(\propOther_2)\big)$
         \State $U \gets \procedureOUexists \big(\sat(\propOther_1),\sat(\propOther_2)\big)$
    \State $\sat(\propOther) \gets \matchOU \big(U,\SymStates\setminus U\big)$
    \EndCase

    \Case{$\propOther=\ETforall (\propOther_1 \until \propOther_2)$}
       % \State $\sat(\propOther) \gets \procedureOUforall \ \big(\sat(\propOther_1),\sat(\propOther_2)\big)$
        \State $U \gets \procedureOUforall \big(\sat(\propOther_1),\sat(\propOther_2)\big)$
    \State $\sat(\propOther) \gets \matchOU \big(U,\SymStates\setminus U\big)$
    \EndCase

    \Case{$\propOther=\ETexists (\propOther_1 \release \propOther_2)$}
       % \State $\sat(\propOther) \gets \procedureORexists \ \big(\sat(\propOther_1),\sat(\propOther_2)\big)$
        \State $R \gets \procedureORexists  \big(\sat(\propOther_1),\sat(\propOther_2)\big)$
    \State $\sat(\propOther) \gets  \matchOR \big(R,\SymStates\setminus R\big)$
    \EndCase

    \Case{$\propOther=\ETforall (\propOther_1 \release \propOther_2)$}
       % \State $\sat(\propOther) \gets \procedureORforall \ \big(\sat(\propOther_1),\sat(\propOther_2)\big)$
        \State $R \gets \procedureORforall \big(\sat(\propOther_1),\sat(\propOther_2)\big)$
    \State $\sat(\propOther) \gets \matchOR \big(R,\SymStates\setminus R\big)$
    \EndCase
    \EndSwitch 
    \EndFor 
    \EndFor 
    \end{algorithmic}  
\end{algorithm}

\subsection{Bounded symbolic abstraction}
\label{subsec:bounded-abstraction}

The observer clock $\clockMC$ records the exact elapsed execution time and is therefore neither extrapolated  nor saturated~\cite{BOY04}, since merging values above a constant could identify distinct durations. Zones are represented by DBMs over $\Clocks\cup\{\clockMC\}$, with standard extrapolation applied only to the ordinary automaton clocks. For bounded model checking, we fix a verification horizon $H\in\Rplus$ and restrict the exploration to runs satisfying $\clockMC\leq H$, equivalently $\dur(\rho)\leq H$. Thus, $\clockMC$ remains exact while its range is bounded. Combined with the finite abstraction of the ordinary clocks, this yields a finite symbolic exploration for each fixed $H$. Bounding and saturation must be distinguished: the former discards executions longer than $H$, whereas the latter merges distinct execution times. Our construction uses only the former. For convenience, we write $\Paths_{\tts}^{\leq H}(s) = \{\rho\in\Paths_{\tts}(s)^+ \mid \dur(\rho)\leq H\}$.

\subsection{Algorithm overview}
The algorithm works over the finite bounded zone graph
$\zonegraph_H(\TA\oplus\clockMC)$. Ordinary automaton clocks are normalized using the standard finite zone abstraction, whereas the observer clock $\clockMC$ is kept exact and restricted to $[0,H]$. Hence, finiteness follows from the combination of the standard abstraction on $\Clocks$ and the finite verification horizon  on $\clockMC$.
\subsection{Algorithm details}
\cref{alg:labeling} takes as input a \ac{ta} $\TA$, an \ETOL{} formula $\prop$, and the bounded zone graph $\zonegraph_H(\TA\oplus\clockMC)$. It evaluates $\prop$ bottom-up by computing, for every subformula
$\propOther$, $\sat(\propOther)
= \Set{\SymState\in\SymStates}
{\zonegraph_H,\SymState\models\propOther}$. Atomic propositions, clock constraints, and Boolean operators are handled by standard symbolic set operations. The opacity-temporal cases require additional processing. The procedures $\procedureOUexists$ and $\procedureOUforall$ compute the temporal satisfaction sets for until formulas, while $\procedureORexists$ and
$\procedureORforall$ handle release formulas. These sets are then passed to the corresponding execution-time matching procedures, which compare satisfying and non-satisfying executions having the same total duration.
The procedures are detailed in \cref{sec:opacityprocedures}.
%\vspace{-0.4cm}

% \subsection{Particular procedures for opacity operators}
%\vspace{-1.3cm}
\subsection{Execution-time predecessor operators}

%To evaluate execution-time opacity modalities, we introduce two symbolic predecessor operators over the zone graph. Given $\setX\subseteq\SymStates$, the operator $\Pre_{\ETexists}(\setX)$ collects predecessor states satisfying the existential opacity condition, while $\Pre_{\ETforall}(\setX)$ collects those satisfying the universal one. Intuitively, a symbolic state belongs to such a predecessor set when it admits two alternative continuations: one towards $\setX$ and one towards $\SymStates\setminus\setX$, with indistinguishable observable execution times. For $\Pre_{\ETexists}$, it is enough that some common duration is possible for both alternatives. For $\Pre_{\ETforall}$, every duration possible on one side must be matched by a duration on the other side, and conversely. These duration comparisons are performed symbolically using the global clock $\clockMC$, which records the total elapsed execution time. For $\setX\subseteq\SymStates$ and $\symZone\in\SymStates$, we write $\Post_{\setX}(\symZone)=\Post(\{\symZone\})\cap\setX$. 

To evaluate ET-opacity modalities, we introduce two symbolic predecessor operators over the bounded zone graph. Given $\setX\subseteq\SymStates$, the operator $\Pre_{\ETexists}(\setX)$ collects symbolic states admitting two alternative one-step
continuations, one towards $\setX$ and one towards $\SymStates\setminus\setX$, that can be realized with the same elapsed execution time. The universal operator $\Pre_{\ETforall}(\setX)$
requires the corresponding duration-matching condition in both directions. Duration equality is checked using the observer clock $\clockMC$,
which records the exact elapsed execution time and is neither extrapolated nor saturated. %$\Post_{\setX}(\symZone) = \Post(\{\symZone\})\cap\setX$ and 
For $\setX\subseteq\SymStates$ and
$\symZone\in\SymStates$, we write  $\Post_{\setX}^{+}(\symZone) = \left\{ \symZone'\in\setX \mid \symZone\Rightarrow^{+}\symZone' \right\}$ where $\Rightarrow^{*}$ denotes the reflexive-transitive closure of the symbolic transition relation. For a symbolic state $\symZone\in\SymStates$, we define its concretization as $\gamma(\symZone)
= \left\{ (\loc,\clockval) \mid \clockval\in\zone
\right\}$. For a set of symbolic states $\setX\subseteq\SymStates$, we extend $\gamma$ pointwise as $\gamma(\setX) = \bigcup_{\symZone\in\setX}
\gamma(\symZone)$.

\begin{lemma}
\label{lem:observer-equal-duration}
Let $\rho$ and $\rho'$ start from the same concrete state
$(\loc,\clockval)$, and let $\last(\rho)=(\loc',\clockval')$, $\last(\rho')=(\loc'',\clockval'')$. Since $\clockMC$ is never reset, $\clockval'(\clockMC)=\clockval(\clockMC)+\dur(\rho)$
and $\clockval''(\clockMC)=\clockval(\clockMC)+\dur(\rho')$. Consequently, $\clockval'(\clockMC)=\clockval''(\clockMC) \Longleftrightarrow \dur(\rho)=\dur(\rho')$.
\end{lemma}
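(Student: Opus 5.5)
The plan is to prove the two displayed identities by induction on the length of the run, and then derive the equivalence by cancelling $\clockval(\clockMC)$. Throughout, $\rho$ is taken as a finite run of $\TA\oplus\clockMC$ written in the compact alternating form
\[
(\loc,\clockval)=\statei{0}\transitionArrow{d_0,\actioni{0}}\statei{1}\cdots\statei{n-1}\transitionArrow{d_{n-1},\actioni{n-1}}\statei{n}.
\]
Write $\clockval_i$ for the valuation of $\statei{i}$. The claim to establish is
\[
\clockval_i(\clockMC)=\clockval(\clockMC)+\sum_{k<i}d_k \quad\text{for all } i\le n .
\]

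The base case $i=0$ is immediate, since the sum is empty. For the inductive step, the transition from $\statei{i}$ to $\statei{i+1}$ is split into its two semantic parts from \cref{def:defSemTA}:
\begin{itemize}
\item The delay $d_i$ yields the valuation $\clockval_i+d_i$. By definition of $\clockval+d$, this adds $d_i$ to every clock, including $\clockMC$.
\item The discrete step then applies $\reset{\cdot}{\resets}$ for some transition $\loci{1}\transitionArrow{\action,\guard,\resets}\loci{2}$ of $\TA\oplus\clockMC$. The transitions of $\TA\oplus\clockMC$ are exactly those of $\TA$, whose reset sets satisfy $\resets\subseteq\Clocks$. Hence $\clockMC\notin\resets$, so the reset leaves $\clockMC$ unchanged.
\end{itemize}
Together these give $\clockval_{i+1}(\clockMC)=\clockval_i(\clockMC)+d_i$, which completes the induction. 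Taking $i=n$ yields $\clockval'(\clockMC)=\clockval(\clockMC)+\dur(\rho)$, because $\dur(\rho)=\sum_{k<n}d_k$. The same argument applied to $\rho'$ gives $\clockval''(\clockMC)=\clockval(\clockMC)+\dur(\rho')$.

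If the paper's extended states carrying formula-clock valuations $\clockvalformula$ are used, nothing changes. Delays advance $\clockvalformula$ as well, discrete steps do not touch it, and $\clockMC$ lives in the automaton-clock part.

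For the equivalence, both runs start from the same state, so they share the same additive constant $\clockval(\clockMC)$. The two endpoint values are therefore equal exactly when $\dur(\rho)=\dur(\rho')$, by cancellation in $\Rplus$.

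The only point needing care, and hence the main obstacle, is justifying $\clockMC\notin\resets$ for every transition. This follows from the construction of $\TA\oplus\clockMC$: it keeps the transition set $\Transitions\subseteq\Loc\times\Alphabet\times\Constraints(\Clocks)\times 2^{\Clocks}\times\Loc$ unchanged, and $\clockMC\notin\Clocks$.

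Two further remarks belong here. First, the bounded exploration of \cref{subsec:bounded-abstraction} only discards runs and never extrapolates $\clockMC$, so the identity also holds for the concretizations represented in $\zonegraph_H$. Second, if one wants to allow runs ending with a pure delay, that extra delay is handled by the same delay argument.
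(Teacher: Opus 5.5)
Your proof is correct and follows the same argument as the paper. In both, $\clockMC$ advances by exactly $d_i$ on each delay and is left unchanged by discrete steps, since it occurs in no reset set. Hence its endpoint value is $\clockval(\clockMC)+\dur(\rho)$, and equality of endpoint values reduces to $\dur(\rho)=\dur(\rho')$ by cancelling the common initial value $\clockval(\clockMC)$. Your explicit induction on run length, and your justification of $\clockMC\notin\resets$ from the fact that $\TA\oplus\clockMC$ keeps the transition set $\Transitions$ of $\TA$, make rigorous what the paper states informally.
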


\begin{proof}
The observer clock $\clockMC$ evolves at rate $1$ during every delay transition and is unchanged by every discrete transition. Moreover, $\clockMC$ is never reset. Hence, along any execution starting from $(\loc,\clockval)$, the value of $\clockMC$ increases exactly by the total amount of elapsed time. Therefore,
$\clockval'(\clockMC) = \clockval(\clockMC)+\dur(\rho)$
and similarly $\clockval''(\clockMC) = \clockval(\clockMC)+\dur(\rho')$. Since $\rho$ and $\rho'$ start from the same valuation $\clockval$, both expressions contain the same initial value $\clockval(\clockMC)$. Thus, $\clockval'(\clockMC)=\clockval''(\clockMC)$ holds iff
$\dur(\rho)=\dur(\rho')$.
\end{proof}

\begin{definition}
Let  $\setX \subseteq \SymStates$ be  a set of symbolic states and $\symZone= (\loc,\zone)$, we write:
$$\Pre_{\ETexists} (\setX) = \Set{\symZone \in \SymStates}  { \left.
 \begin{cases*}
     % \symZone \in (\Pre(\setX) \cap \Pre(\SymStates \setminus \setX)),  \   \\
        \exists \symZone' \in \Post_{\setX}^{+}(\symZone), \ \exists \symZone'' \in \Post_{\SymStates\setminus\setX}^{+}(\symZone), \\
        \exists \clockval \in \zone,\
\exists \clockval' \in \zone',\
\exists \clockval'' \in \zone'' \\
\exists\,\rho,\rho'
\text{ starting from }(\loc,\clockval)
\text{ such that} \\
\last(\rho)=(\loc',\clockval'),
\last(\rho')=(\loc'',\clockval''),
\\
\clockval'(\clockMC)
=
\clockval''(\clockMC)
\leq H 
\end{cases*}
   \right\} }
$$
\end{definition}

%\begin{proposition} Let $\setX\subseteq\SymStates$ be a set of symbolic states. We define $\Pre_{\ETexists}(\setX)$ as the set symbolic states $\symZone=(\loc,\zone)\in\SymStates$ such that $\symZone \in \Pre(\setX)\cap \Pre(\SymStates\setminus\setX)$,   $\symZone'\in\Post_{\setX}^{+}(\symZone)$, $\symZone''\in \Post_{\SymStates\setminus\setX}^{+}(\symZone)$, 
%$\exists \clockval\in\zone,
%\exists \clockval'\in\zone',
%\exists \clockval''\in\zone''$, 
%$\exists\,\rho,\rho'
%\text{ starting from }(\loc,\clockval)
%\text{ such that}
%\last(\rho)=(\loc',\clockval'),
%\last(\rho')=(\loc'',\clockval''), \clockval'(\clockMC)= \clockval''(\clockMC) \leq H $.
%\end{proposition}

\begin{proposition}[Correctness of $\Pre_{\ETexists}$]
\label{prop:pre-etexists-correctness}
Let $\setX\subseteq\SymStates$ and let $\symZone=(\loc,\zone)$.
Then $\symZone\in\Pre_{\ETexists}(\setX)$ iff there exists a concrete valuation $\clockval\in\zone$ and two
non-empty executions $\rho$ and $\rho'$ starting from
$(\loc,\clockval)$ such that $\last(\rho)\in\gamma(\setX), \last(\rho')\in\gamma(\SymStates\setminus\setX)$,
and $\dur(\rho)=\dur(\rho')\leq H$.
\end{proposition}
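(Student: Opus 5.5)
The proof is a double inclusion that unfolds the definition of $\Pre_{\ETexists}$ and moves between symbolic and concrete runs. We use three ingredients: the soundness/completeness of the symbolic successor relation (as in \cref{lem:pre-correctness}, iterated along paths), the definition of $\gamma$, and \cref{lem:observer-equal-duration} to turn equality of observer-clock values into equality of durations. Throughout we take ``non-empty'' executions to correspond to the transitive (not reflexive) closure $\Rightarrow^{+}$ in $\Post^{+}_{\setX}$.

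\emph{($\Rightarrow$)} Let $\symZone\in\Pre_{\ETexists}(\setX)$. The definition supplies the following data:
\begin{itemize}
\item symbolic states $\symZone'=(\loc',\zone')\in\setX$ and $\symZone''=(\loc'',\zone'')\in\SymStates\setminus\setX$ reachable by $\Rightarrow^{+}$;
\item valuations $\clockval\in\zone$, $\clockval'\in\zone'$, $\clockval''\in\zone''$;
\item runs $\rho,\rho'$ from $(\loc,\clockval)$ ending in $(\loc',\clockval')$ and $(\loc'',\clockval'')$, with $\clockval'(\clockMC)=\clockval''(\clockMC)\le H$.
\end{itemize}
Since $\clockval'\in\zone'$, we have $\last(\rho)\in\gamma(\symZone')\subseteq\gamma(\setX)$, and likewise $\last(\rho')\in\gamma(\SymStates\setminus\setX)$. \cref{lem:observer-equal-duration} converts the observer-clock equality into $\dur(\rho)=\dur(\rho')$. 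The bound $\le H$ transfers in the same way: $\dur(\rho)=\clockval'(\clockMC)-\clockval(\clockMC)\le H$. For bounded runs we may assume $\clockMC$ reads the elapsed time from the start of the check; in any case the bounded zone graph enforces $\clockMC\le H$. Non-emptiness follows because $\symZone'$ and $\symZone''$ are reached by at least one symbolic step; if needed, we realize $\rho,\rho'$ along those symbolic paths.

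\emph{($\Leftarrow$)} Let $\clockval\in\zone$, and let $\rho,\rho'$ be non-empty runs from $(\loc,\clockval)$ with the stated properties. Choose $\symZone'\in\setX$ with $\last(\rho)\in\gamma(\symZone')$ and $\symZone''\in\SymStates\setminus\setX$ with $\last(\rho')\in\gamma(\symZone'')$. The key step is a \emph{path-lifting} argument: by induction on the number of steps of $\rho$, applying the completeness direction of \cref{lem:pre-correctness} at each step, we obtain a symbolic path $\symZone\Rightarrow^{+}\hat\symZone$ with $\last(\rho)\in\gamma(\hat\symZone)$. We then need $\hat\symZone$ to be $\symZone'$, or at least some element of $\setX$.

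This is the main obstacle. Symbolic states of the zone graph may overlap, so a concrete state can lie in several zones, some inside $\setX$ and some outside it. The cleanest fix is to use the homogeneity/refinement convention stated after \cref{def:zonegraph}, namely that satisfaction sets are unions of zones refined so that membership in $\setX$ is determined by the concrete state. With that convention, $\gamma(\setX)$ and $\gamma(\SymStates\setminus\setX)$ are disjoint, and the lifted endpoint $\hat\symZone$ can be taken in $\setX$. The next thing to check is that the extrapolation of ordinary clocks does not break completeness of the lifting. The standard sound abstraction preserves reachability of concrete states up to the abstraction, and $\clockMC$ is kept exact, so the observer value $\clockval'(\clockMC)$ is preserved.

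Finally, $\dur(\rho)=\dur(\rho')\le H$ together with \cref{lem:observer-equal-duration} gives $\clockval'(\clockMC)=\clockval''(\clockMC)$, and this value is $\le H$ under the horizon restriction. Hence all clauses of the definition of $\Pre_{\ETexists}$ hold, and $\symZone\in\Pre_{\ETexists}(\setX)$.
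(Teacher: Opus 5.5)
Your route is the paper's: unfold the definition of $\Pre_{\ETexists}$, turn equal observer values into equal durations with the observer-clock lemma, and for the converse lift the two concrete runs to symbolic paths from $\symZone$. The obstacle you single out is exactly the step the paper's proof passes over. It simply asserts that the lifted endpoints lie in $\Post^{+}_{\setX}(\symZone)$ and $\Post^{+}_{\SymStates\setminus\setX}(\symZone)$, even though the lifted endpoint need not be the zone of $\setX$ that witnesses $\last(\rho)\in\gamma(\setX)$.

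You should, however, state disjointness of $\gamma(\setX)$ and $\gamma(\SymStates\setminus\setX)$ as an explicit extra hypothesis, not present it as a convention that rescues the statement. For arbitrary $\setX\subseteq\SymStates$ the equivalence is false. Take a location $\loc_0$ whose only transition is an unguarded, non-resetting self-loop $a$, and let $H>0$.
\begin{itemize}
\item The bounded zone graph contains $s_0=(\loc_0,\{\ClocksZero\})$ and $s_1=(\loc_0,Z_1)$, where $Z_1=\timeelpase{\{\ClocksZero\}}\cap\{\clockMC\le H\}$.
\item Both $\epsilon$ and $a$ map $Z_1$ to itself (up to extrapolation of the ordinary clocks). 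So $s_0$, whose zone forces $\clockMC=0$, is not reachable from $s_1$.
\item For $\setX=\{s_0\}$ this gives $\Post^{+}_{\setX}(s_1)=\emptyset$, hence $s_1\notin\Pre_{\ETexists}(\setX)$.
\item Yet $\rho=\rho'=(\loc_0,\ClocksZero)\transitionArrow{0,a}(\loc_0,\ClocksZero)$ ends in $\gamma(s_0)\cap\gamma(s_1)$, so the right-hand side holds.
\end{itemize}
The hypothesis does hold when $\setX=\sat(\propOther)$ and every symbolic state is homogeneous for $\propOther$. It is not automatic, though, for the intermediate fixpoint iterates to which $\Pre_{\ETexists}$ is applied.

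Under this hypothesis your lifting argument is correct. Your remark on extrapolation is also right: an over-approximating abstraction keeps lifting complete. So is your horizon remark, with one precision: the bound $\clockval'(\clockMC)\le H$ must come from $\last(\rho)$ lying in a zone of the bounded graph. The assumption $\dur(\rho)\le H$ alone does not give it when $\clockval(\clockMC)>0$.

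One loose end remains in the forward direction. The runs $\rho,\rho'$ supplied by the definition are not tied to the symbolic paths reaching $\symZone'$ and $\symZone''$. So $\symZone\Rightarrow^{+}\symZone'$ does not make $\rho$ non-empty. You also cannot ``realize $\rho,\rho'$ along those symbolic paths'': the zone graph does not guarantee a realization that starts at the same $\clockval$ and ends with equal values of $\clockMC$, especially under extrapolation. The simple patch is to replace an empty run by the zero-delay step $(\loc,\clockval)\transitionArrow{0}(\loc,\clockval)$. This requires counting a single delay transition as a non-empty execution, a point the paper's proof does not address either.
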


\begin{proof}
Assume first that $\symZone\in\Pre_{\ETexists}(\setX)$. By definition, there exist $\symZone'=(\loc',\zone')
\in\Post_{\setX}^{+}(\symZone)$ and
$\symZone''=(\loc'',\zone'') \in\Post_{\SymStates\setminus\setX}^{+}(\symZone)$,
together with valuations $\clockval\in\zone, \clockval'\in\zone', \clockval''\in\zone''$, and concrete executions $\rho,\rho'$ starting from
$(\loc,\clockval)$ such that
$\last(\rho)=(\loc',\clockval')$, $\last(\rho')=(\loc'',\clockval'')$, and
$\clockval'(\clockMC) = \clockval''(\clockMC)
\leq H$. By \cref{lem:observer-equal-duration},
$\clockval'(\clockMC) = \clockval''(\clockMC)$ is equivalent to $\dur(\rho)=\dur(\rho')$. Thus the required executions exist. Conversely, assume that there are two such executions $\rho$ and $\rho'$
starting from the same concrete state $(\loc,\clockval)$, with $\dur(\rho)=\dur(\rho')\leq H$, such that their endpoints are represented respectively by a symbolic state in $\setX$ and one in $\SymStates\setminus\setX$. By completeness of the bounded symbolic construction, both executions are
represented by symbolic paths from $\symZone$. Therefore their endpoint symbolic states belong respectively to
$\Post_{\setX}^{+}(\symZone)$
and $\Post_{\SymStates\setminus\setX}^{+}(\symZone)$.
Finally, by \cref{lem:observer-equal-duration}, the equality of durations implies $\clockval'(\clockMC)
= \clockval''(\clockMC)$. Hence all conditions in the definition of $\Pre_{\ETexists}$ are satisfied, and therefore $\symZone\in\Pre_{\ETexists}(\setX)$.
\end{proof}

\begin{definition}
Let $\setX$ $\subseteq$ $\SymStates$  be  a set of symbolic states and $\symZone= (\loc,\zone)$, we write:
$$
\Pre_{\ETforall} (\setX)
=
\Set{\symZone \in \SymStates}{ \left.
\begin{cases*}
%\symZone \in (\Pre(\setX) \cap \Pre(\SymStates \setminus \setX)),  \\
\forall \symZone' \in \Post_{\setX}^{+}(\symZone),   \exists \symZone'' \in \Post_{\SymStates\setminus\setX}^{+}(\symZone) \text{ s.t } \\
\exists \clockval\in\zone,\ \exists \clockval'\in\zone',\  \exists \clockval''\in\zone''  \\
\exists\,\rho,\rho'
\text{ starting from }(\loc,\clockval)
\text{ such that} \\
\last(\rho)=(\loc',\clockval'),
\last(\rho')=(\loc'',\clockval''),
\\
\clockval'(\clockMC)
=
\clockval''(\clockMC)
\leq H  \\
\ \land \\
\forall \symZone'' \in \Post_{\SymStates\setminus\setX}^{+}(\symZone),  \exists \symZone'\in \Post_{\setX}^{+}(\symZone) \text{ s.t} \\ 
\exists \clockval\in\zone,\ \exists \clockval'\in\zone',\ \exists \clockval''\in\zone''  \\
\exists\,\rho,\rho'
\text{ starting from }(\loc,\clockval)
\text{ such that} \\
\last(\rho)=(\loc',\clockval'),
\last(\rho')=(\loc'',\clockval''),
\\
\clockval'(\clockMC)
=
\clockval''(\clockMC)
\leq H 
\end{cases*}
\right\} }
$$

%and for a set $\setX$, we define $\PreForall(\setX) = \Set{ \PreForall(\symZone)}{\symZone \in \setX}$.
\end{definition}

\begin{proposition}[Correctness of $\Pre_{\ETforall}$]
\label{prop:pre-etforall-correctness}
Let $\setX\subseteq\SymStates$ and let $\symZone=(\loc,\zone)$. Then $\symZone\in\Pre_{\ETforall}(\setX)$ iff, for every $\clockval\in\zone$, the following two conditions hold: 
\begin{enumerate}
    \item every execution from $(\loc,\clockval)$ ending in $\gamma(\setX)$  has an execution of the same duration ending in $\gamma(\SymStates\setminus\setX)$,
    \item every execution from $(\loc,\clockval)$ ending in  $\gamma(\SymStates\setminus\setX)$ has an execution of the same duration ending in $\gamma(\setX)$.
\end{enumerate}

\end{proposition}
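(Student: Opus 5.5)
We mirror the proof of \cref{prop:pre-etexists-correctness}, treating the two conjuncts in the definition of $\Pre_{\ETforall}$ separately. By symmetry (swap $\setX$ with $\SymStates\setminus\setX$), it suffices to show that the first conjunct is equivalent to condition~1; the second conjunct is then equivalent to condition~2. Throughout, executions are implicitly bounded by the horizon, i.e.\ durations are at most $H$, since only $\zonegraph_H$ is explored.

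\emph{Soundness direction.} Assume $\symZone\in\Pre_{\ETforall}(\setX)$, fix $\clockval\in\zone$, and let $\rho$ be an execution from $(\loc,\clockval)$ ending in $\gamma(\setX)$. By completeness of the bounded symbolic construction (as in \cref{lem:pre-correctness} iterated along the run), $\rho$ is represented by a symbolic path from $\symZone$ to some $\symZone'\in\Post^{+}_{\setX}(\symZone)$ with $\last(\rho)\in\gamma(\symZone')$. The first conjunct yields $\symZone''\in\Post^{+}_{\SymStates\setminus\setX}(\symZone)$ together with matching executions having equal values of $\clockMC$ at their endpoints. By \cref{lem:observer-equal-duration}, equal $\clockMC$ values mean equal durations.

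The main obstacle is that the definition only quantifies existentially over $\clockval$ and the witnessing runs, whereas condition~1 demands a match for the \emph{given} $\clockval$ and the \emph{given} duration $\dur(\rho)$. To close this gap, we exploit the fact that the observer clock is exact and that zones are convex. We would argue that the set of durations reachable from $\symZone$ into $\symZone'$ (respectively $\symZone''$) is the projection onto $\clockMC$ of the corresponding zone. We then read the conjunct, as the algorithm implements it, as a zone-level inclusion of these $\clockMC$-projections. Under this reading, the duration $\dur(\rho)$ realized in $\symZone'$ is also realized in some $\symZone''$ starting from the same $\clockval$. If the literal existential reading is too weak, we expect to need exactly this strengthening: the symbolic states must be refined until they are homogeneous with respect to $\clockMC$-matching, as described in the discussion of refinement of non-homogeneous zones.

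\emph{Completeness direction.} Assume conditions~1 and~2 hold for every $\clockval\in\zone$, and let $\symZone'\in\Post^{+}_{\setX}(\symZone)$. By soundness of symbolic successors, there exist $\clockval\in\zone$ and a concrete run $\rho$ from $(\loc,\clockval)$ ending in $\gamma(\symZone')$. Condition~1 gives a run $\rho'$ of the same duration ending in some $\symZone''\in\Post^{+}_{\SymStates\setminus\setX}(\symZone)$, again by completeness of the symbolic construction. By \cref{lem:observer-equal-duration}, the endpoint values of $\clockMC$ coincide and are at most $H$. Hence the first conjunct holds, and the second follows symmetrically from condition~2. This establishes $\symZone\in\Pre_{\ETforall}(\setX)$.
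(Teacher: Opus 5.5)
\paragraph{The forward direction: a genuine gap.} You located the problem exactly. Each conjunct of $\Pre_{\ETforall}$ asserts, per symbolic successor, only the existence of \emph{some} $\clockval\in\zone$ and \emph{some} pair of runs with equal final $\clockMC$. Condition~1 instead quantifies over every $\clockval\in\zone$ and every duration. The paper's proof does not bridge this either: it simply asserts that the definition supplies, for the given $\rho$, a same-duration $\rho'$ from the same concrete state.

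In fact the implication fails for the literal definition. Take one clock $x$ and horizon $H\ge 3$. Let the initial location $\loc_0$ have invariant $x\le 5$, with an edge to $\loc_1$ guarded by $x\ge 1$ and an edge to $\loc_2$ guarded by $x=3$. Give $\loc_1$ invariant $x\le 5$ and $\loc_2$ invariant $x\le 3$. Let $\symZone$ be the initial symbolic state and $\setX$ the set of symbolic states located at $\loc_0$ or $\loc_1$, reading executions as finite paths of $\tts$, as the paper does.
\begin{itemize}
\item Every pair demanded by the two conjuncts is witnessed from the initial valuation by runs of duration $3$ (wait $3$, then possibly move), so $\symZone\in\Pre_{\ETforall}(\setX)$.
\item The run that waits $1$ and moves to $\loc_1$ ends in $\gamma(\setX)$ with duration $1$.
\item Every execution ending in $\gamma(\SymStates\setminus\setX)$, i.e.\ at $\loc_2$, has duration $3$.
\end{itemize}
So condition~1 fails, and this step cannot be proved, only repaired by changing the definition.

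\paragraph{Your proposed repair.} Reading the conjunct as an inclusion of $\clockMC$-projections does not achieve that repair. The $\clockMC$-projection of $\zone'$ consists of absolute values $\clockval(\clockMC)+\dur(\rho)$, pooled over all starting valuations $\clockval\in\zone$. These valuations may also differ on the automaton clocks. So an inclusion of projections can match $\rho$ with a run from a different valuation. Equal final $\clockMC$-values then do not even yield equal durations, since \cref{lem:observer-equal-duration} requires a common start.

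A sound repair must retain the starting valuation. One option is to take conditions 1--2 themselves as the definition; this is what \cref{alg:match-forall} actually tests. Another is to compute, with an auxiliary clock reset at $\symZone$, the relation between starting valuation and elapsed time on each side, and split $\zone$ where the two relations differ. The statement then concerns the resulting subzones rather than $\symZone$.

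\paragraph{The converse direction.} This coincides with the paper's argument. Both tacitly assume that the same-duration run given by condition~1 ends inside some $\symZone''\in\Post^{+}_{\SymStates\setminus\setX}(\symZone)$. Zones of distinct symbolic states may overlap, so the symbolic path representing that run could end in $\setX$ instead. The step therefore needs a justification such as $\gamma(\setX)\cap\gamma(\SymStates\setminus\setX)=\emptyset$.
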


\begin{proof}
Assume $\symZone\in\Pre_{\ETforall}(\setX)$. Let $\clockval\in\zone$ and let $\rho$ be any execution starting from $(\loc,\clockval)$ whose endpoint belongs to $\gamma(\setX)$. By the definition of $\Pre_{\ETforall}$, there exists an execution $\rho'$ from the same concrete state whose endpoint belongs to $\gamma(\SymStates\setminus\setX)$ and whose final observer-clock value coincides with that of $\rho$:
$\clockval'(\clockMC)= \clockval''(\clockMC)$. By \cref{lem:observer-equal-duration}, $\dur(\rho)=\dur(\rho')$.
This proves (i). The symmetric clause in the definition of
$\Pre_{\ETforall}$ gives (ii) in exactly the same way. Conversely, suppose that conditions (i) and (ii) hold for every
$\clockval\in\zone$. Consider any execution represented by a symbolic path from $\symZone$ whose endpoint belongs to $\setX$. By (i), there exists an execution from the same concrete initial state ending outside $\setX$ with the same total duration. By \cref{lem:observer-equal-duration}, their final observer-clock values are equal. The same reasoning using (ii) applies to executions ending outside $\setX$. Thus both matching requirements in the definition of
$\Pre_{\ETforall}$ hold, and hence $\symZone\in\Pre_{\ETforall}(\setX)$.
\end{proof}

\begin{proposition}[Finiteness]
\label{prop:finiteness}
For every \ac{TA} $\TA$, \ETOL formula $\prop$, and fixed verification horizon $H$, the bounded symbolic graph
$\zonegraph_H(\TA\oplus\clockMC,\prop)$ constructed using an extrapolation abstraction ($Extra_M^H$) is finite.
\end{proposition}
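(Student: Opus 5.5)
The plan is to bound the number of distinct symbolic states $(\loc,\zone)$ that can occur in $\zonegraph_H(\TA\oplus\clockMC,\prop)$. Since $\Loc$ is finite, it suffices to show that only finitely many zones over $\Clocks\cup\{\clockMC\}$ (together with the finitely many formula clocks of $\prop$, if they are tracked) can arise. Every zone is represented by a canonical DBM, so I will bound the set of possible canonical DBMs. The argument has three steps.

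\textbf{Step 1: constants on the ordinary automaton clocks.} Let $M$ be the maximal constant occurring in guards and invariants of $\TA$ and in the clock constraints $\guard$ of $\prop$ that mention automaton clocks. By the standard property of $Extra_M$~\cite{BYJ04,HNSY94}, every bound on an ordinary clock, or on a difference of ordinary clocks, lies after extrapolation in the finite set $\{(c,\prec)\mid c\in\Z,\ |c|\le M,\ {\prec}\in\{<,\leq\}\}\cup\{\infty\}$.

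\textbf{Step 2: bounds on the observer clock $\clockMC$, which is not extrapolated.} Here I use the horizon. Every reachable valuation satisfies $0\le\clockMC\le H$, because exploration intersects with $\llbracket \clockMC\le H\rrbracket$. I then show by induction over the successor operations that every DBM entry involving $\clockMC$ is an integer bound. The base case is the initial zone $\{\ClocksZero\}$. For the inductive step, the operations are time elapse, intersection with integer guards and invariants and with $\clockMC\le H$ (taking $H\in\N$, or replacing it by $\lceil H\rceil$), reset of ordinary clocks (never $\clockMC$), and canonicalization. None of these introduces non-integers. Moreover, each entry involving $\clockMC$ is bounded in absolute value: $\clockMC-0$ and $0-\clockMC$ lie in $[-H,H]$. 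For $\clockMC-\clock$ and $\clock-\clockMC$ with an ordinary clock $\clock$, I use $0\le\clock\le\clockMC\le H$; this holds because $\clock$ is reset after $\clockMC$ starts, and both begin at $0$. Hence these differences also lie in $[-H,H]$. Any looser bound can be tightened to the canonical value implied by these, or set to $\infty$. So entries involving $\clockMC$ range over a finite set of size $O(H)$.

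\textbf{Step 3: counting and formula clocks.} Combining Steps 1 and 2 gives at most $K^{(|\Clocks|+2)^2}$ canonical DBMs, where $K=O(M+H)$. With $|\Loc|$ locations, this bounds $|\SymStates|$ and hence the number of edges. For the formula clocks in $\ClocksFormula$ occurring in $\prop$, I will argue that they are either extrapolated with the constants of $\prop$, as in TCTL model checking, or bounded by $H$ in the same way as $\clockMC$. The refinement of zones into unions of homogeneous pieces for subformulas uses only these finitely many constants, so it also stays inside the finite DBM set.

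The main obstacle I expect is Step 2. Extrapolation of the ordinary clocks is applied alongside an exact $\clockMC$, and I must check that $Extra_M$, which relaxes ordinary-clock constraints, does not produce unbounded or non-integral $\clockMC$-difference entries after re-canonicalization. I would first try to show that re-intersecting with $0\le\clock\le\clockMC\le H$ after each extrapolation restores all $\clockMC$-entries to $[-H,H]$. Failing that, I would apply the variant $Extra_M^H$ that treats $\clockMC$ as a clock with maximal constant $H$. This is sound because valuations with $\clockMC>H$ are discarded anyway, so the abstraction never merges distinct reachable durations.
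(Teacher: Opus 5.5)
Your proposal is correct and takes the same route as the paper: finitely many locations times finitely many canonical DBMs, with $Extra_M$ bounding the ordinary clocks and the exact observer clock confined to $0\le\clockMC\le H$. You also make explicit the integrality and $[-H,H]$ bounds that the paper leaves to an unnamed ``finite-bound lemma''. The obstacle you flag is harmless: on a zone satisfying $0\le\clock\le\clockMC\le H$, $Extra_M$ with $\clockMC$ unextrapolated (or given constant $H$) leaves every entry in the row and column of $\clockMC$ unchanged. That invariant alone already forces every canonical entry into $[-H,H]$, so neither the re-intersection nor Step~1 is actually needed for finiteness.
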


\begin{proof}
The automaton $\TA$ contains finitely many locations and finitely many clocks. Under $Extra_M^H$, ordinary automaton clocks are extrapolated with respect to the finite set of maximal constants, whereas the observer clock $\clockMC$ is kept exact and restricted to $0\leq\clockMC\leq H$. By the finite-bound lemma, only finitely many canonical DBMs can occur in the bounded symbolic construction. Since every symbolic state consists of a location together with one of these canonical DBMs, only finitely many symbolic states are reachable. Therefore, $\zonegraph_H(\TA\oplus\clockMC,\prop)$
is finite.
\end{proof}

\subsection{Procedures}
\label{sec:opacityprocedures}

%As in standard zone-based model checking for branching-time timed logics, \ac{etopacity} operators are evaluated by fixed-point computations over symbolic predecessor sets. For until formulas, the iteration starts from $\sat(\propOther_2)$ and repeatedly adds states that satisfy $\propOther_1$ and can reach the current approximation through the appropriate execution-time predecessor operator. The existential and universal cases are handled by $\procedureOUexists$ and $\procedureOUforall$, respectively (\cref{alg:ou_exist,alg:ou_forall}). Release formulas are treated dually, using greatest fixed-point computations based on the same predecessor operators. The procedures $\procedureORexists$ and $\procedureORforall$ are given in \cref{alg:or_exist,alg:or_forall}.
Opacity-temporal formulas are evaluated in two phases. First, we compute the
ordinary temporal satisfaction set of the underlying until or release formula.
The until cases are handled by $\procedureOUexists$ and $\procedureOUforall$
(\cref{alg:ou_exist,alg:ou_forall}), while the release cases are handled by
$\procedureORexists$ and $\procedureORforall$
(\cref{alg:or_exist,alg:or_forall}). Second, we apply the corresponding
execution-time matching test between the computed set and its complement: $\matchOU$ for until formulas, and
$\matchOR$ for release formulas.

\begin{algorithm}[th!]
    \caption{Backward search for computing  $ \procedureOUexists$ }
    \label{alg:ou_exist} 
\begin{algorithmic}[1] 
    \Require An \ETOL formula $\ETexists(\propOther_1 \until \propOther_2)$.
    \Ensure  $\sat(\propOther_1 \until \propOther_2) \gets \Set{\statein \in \states}{\zonegraph_H, \statein \models \ETexists (\propOther_1 \until \propOther_2) }$.

    \State{$ X \gets \emptyset$}
        \State{$ Y \gets \sat(\propOther_2)$}

        \While{$Y \neq X$}
           \State{$X \gets Y$}
           \State{$Y \gets \sat(\propOther_2) \cup (\sat(\propOther_1) \cap \PreExist(X))$ }
        \EndWhile
    \State{$ \mathbf{return} \ Y$}
\end{algorithmic}
\end{algorithm}
\begin{algorithm}[th!]
    \caption{Backward search for computing  $ \procedureOUexists$ }
    \label{alg:ou_exist} 
\begin{algorithmic}[1] 
    \Require An \ETOL formula $\ETexists(\propOther_1 \until \propOther_2)$.
    \Ensure  $\sat(\propOther_1 \until \propOther_2) \gets \Set{\statein \in \states}{\zonegraph_H, \statein \models \ETexists (\propOther_1 \until \propOther_2) }$.

    \State{$ X \gets \emptyset$}
        \State{$ Y \gets \sat(\propOther_2)$}

        \While{$Y \neq X$}
           \State{$X \gets Y$}
           \State{$Y \gets \sat(\propOther_2) \cup (\sat(\propOther_1) \cap \PreExist(X))$ }
        \EndWhile
    \State{$ \mathbf{return} \ Y$}
\end{algorithmic}
\end{algorithm}

\begin{algorithm}[th!]
\caption{Backward search for computing  $ \procedureORexists$} \label{alg:or_exist} 
    \begin{algorithmic}[1] 
    \Require An \ETOL formula $\ETexists(\propOther_1 \release \propOther_2)$.
    \Ensure $\sat(\propOther_1 \release \propOther_2) \gets \{s \in S \mid \zonegraph_H, s \models \ETexists \ (\propOther_1 \release \propOther_2) \}$.
    \State{$ X \gets \emptyset$} \State{$ Y \gets \sat(\propOther_2)$} 
    \While{$Y \neq X$} \State{$X \gets Y$} \State{$Y \gets \propOther_2 \cap (\propOther_1 \cup \Pre_{\ETexists} (X))$ } 
    \EndWhile \State{$ \mathbf{return} \ Y$} 
    \end{algorithmic} 
\end{algorithm}
%----------------------------------------------------------------
%----------------------------------------------------------------
\begin{algorithm}[th!]
\caption{Backward search for computing $\procedureOUexists$}
\label{alg:ou_forall}
\begin{algorithmic}[1]
\Require An \ETOL subformula $\ETforall(\propOther_1 \until \propOther_2)$.
\Ensure $\sat\big(\propOther_1 \until \propOther_2\big) \gets \Set{\statein \in \states}{\zonegraph_H, \statein \models \ETforall (\propOther_1 \until \propOther_2)}$.
\State $X \gets \emptyset$
\State $Y \gets \sat(\propOther_2)$
\While{$Y \neq X$}
  \State $X \gets Y$
  \State $Y \gets \sat(\propOther_2)\ \cup\ \big(\sat(\propOther_1)\ \cap\ \Pre_{\ETforall} \ (X)\big)$
\EndWhile
\State \Return $Y$
\end{algorithmic}
\end{algorithm}
%----------------------------------------------------------------
%----------------------------------------------------------------
\begin{algorithm}[th!]
\caption{Backward search for computing $\procedureORforall$}
\label{alg:or_forall}
\begin{algorithmic}[1]
\Require An \ETOL subformula $\ETforall(\propOther_1 \release \propOther_2)$.
\Ensure $\sat\big(\propOther_1 \release \propOther_2\big) \gets \Set{\statein \in \states}{\zonegraph_H, \statein \models \ETforall (\propOther_1 \release \propOther_2) } $.
\State $X \gets \emptyset$
\State $Y \gets \sat(\propOther_2)$
\While{$Y \neq X$}
  \State $X \gets Y$
  \State $Y \gets \sat(\propOther_2)\ \cap\ \big(\sat(\propOther_1)\ \cup\ \PreForall(X)\big)$
\EndWhile
\State \Return $Y$
\end{algorithmic}
\end{algorithm}

\begin{algorithm}[th!]
\caption{Existential execution-time matching}
\label{alg:match-exists}
\begin{algorithmic}[1]
\Require A set $S\subseteq\SymStates$
\Ensure $\matchOU(S,\SymStates\setminus S)$
\State $M\gets\emptyset$
\ForAll{$\symZone=(\loc,\zone)\in\SymStates$}
    \If{there exist
    $\symZone'=(\loc',\zone')\in\Post^{+}_{S}(\symZone)$
    and
    $\symZone''=(\loc'',\zone'')\in
    \Post^{+}_{\SymStates\setminus S}(\symZone)$,
    together with
    $\clockval\in\zone$,
    $\clockval'\in\zone'$ and
    $\clockval''\in\zone''$ such that}
        \If{there exist concrete runs $\rho,\rho'$ starting from
        $(\loc,\clockval)$ with
        $\last(\rho)=(\loc',\clockval')$,
        $\last(\rho')=(\loc'',\clockval'')$, and
        $\clockval'(\clockMC)=\clockval''(\clockMC)\leq H$}
            \State $M\gets M\cup\{\symZone\}$
        \EndIf
    \EndIf
\EndFor
\State \Return $M$
\end{algorithmic}
\end{algorithm}

\begin{algorithm}[th!]
\caption{Universal execution-time matching}
\label{alg:match-forall}
\begin{algorithmic}[1]
\Require A set $S\subseteq\SymStates$
\Ensure $\matchOR(S,\SymStates\setminus S)$
\State $M\gets\emptyset$
\ForAll{$\symZone=(\loc,\zone)\in\SymStates$}
    \If{for every concrete run $\rho$ from a concrete state
    $(\loc,\clockval)$ represented by $\symZone$ and ending in $S$,
    there exists a concrete run $\rho'$ from the same
    $(\loc,\clockval)$ ending in $\SymStates\setminus S$ such that
    their final valuations satisfy
    $\clockval'(\clockMC)=\clockval''(\clockMC)\leq H$}
        \If{and conversely, every run ending in
        $\SymStates\setminus S$ has such a matching run ending in $S$}
            \State $M\gets M\cup\{\symZone\}$
        \EndIf
    \EndIf
\EndFor
\State \Return $M$
\end{algorithmic}
\end{algorithm}

%As in the standard zone-based model checking for branching-time timed logics, the \ac{etopacity} operators are evaluated by fixed-point computations over symbolic predecessor sets:
%\begin{itemize}
 %   \item For the $\until$ operator, the fixed-point iteration starts from $\sat(\propOther_2)$ and repeatedly adds symbolic states that satisfy $\propOther_1$ and can reach the current set via the appropriate predecessor operator. We present $\procedureOUexists$ in \cref{alg:ou_exist} and $\procedureOUforall$ in \cref{alg:ou_forall}.
%    \item The $\release$ operator is handled in a dual and analogous way, using a fixed-point computation that also relies on the corresponding predecessor operator. $\procedureORexists$ and $\procedureORforall$ are presented in \cref{alg:or_exist,alg:or_forall}, respectively.
%\end{itemize}
%\vspace{-0.4cm}

\begin{proposition}[Correctness of $\matchOU$]
\label{prop:matchU-correctness}
Let  $X=\sat(\phi)$ and $Y=\sat(\psi)$. Then
$\matchOU(X,Y) = \sat(\phi\until\psi)$.
\end{proposition}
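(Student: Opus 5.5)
The plan is to read $\matchOU(X,Y)$, with $X=\sat(\phi)$ and $Y=\sat(\psi)$, as the symbolic procedure that collects the states of $\SymStates$ from which some run within the horizon stays in $X$ until it reaches $Y$. This is exactly the path condition $\run\models(\phi\until\psi)$ of the path semantics. We then show that this set coincides with $\sat(\phi\until\psi)$, understood as the set of symbolic states admitting a run in $\Paths^{\leq H}_{\tts}$ that satisfies $\phi\until\psi$. Concretely, we identify $\matchOU(X,Y)$ with the least fixed point
$$\mu Z.\; Y\cup\bigl(X\cap\Pre(Z)\bigr)$$
over the finite graph $\zonegraph_H(\TA\oplus\clockMC)$. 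This is the same backward scheme as \cref{alg:ou_exist}, but with the plain predecessor operator $\Pre$, since no second run has to be matched for a single until formula. We then prove the two inclusions separately. Two preliminary points:
\begin{itemize}
\item Termination: the iterates $Z_0=Y\subseteq Z_1\subseteq\cdots$ are increasing, and the graph is finite by \cref{prop:finiteness}, so the sequence stabilises at some $Z_k$.
\item Homogeneity: states are refined so that each zone is homogeneous for $\phi$ and $\psi$. Hence membership in $X$ or $Y$ is a property of every concrete valuation of the zone, as required by the $\guard$ case of \cref{alg:labeling}.
\end{itemize}

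\textbf{Soundness} ($\matchOU(X,Y)\subseteq\sat(\phi\until\psi)$). We argue by induction on the iteration index $n$, proving that every $\symZone\in Z_n$ admits, from each of its concrete valuations in the refined sense, a run satisfying $\phi\until\psi$ with duration at most $H$.
\begin{itemize}
\item Case $n=0$: here $\symZone\in Y$, and the trivial run with $i=0$ witnesses the until.
\item Inductive step: let $\symZone\in X\cap\Pre(Z_n)$. \Cref{lem:pre-correctness} gives a concrete step $(\loc,\clockval)\longrightarrow(\loc',\clockval')$ with $\clockval'$ in some $\symZone'\in Z_n$. We prepend this step to the run given by the induction hypothesis for $\symZone'$. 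The prefix state lies in $X$, so $\phi$ holds at every position before the $\psi$-position.
\item Horizon: the resulting duration stays at most $H$, because all states of $\zonegraph_H$ satisfy $\clockMC\leq H$ and $\clockMC$ records the elapsed time exactly (\cref{lem:observer-equal-duration}).
\end{itemize}

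\textbf{Completeness} ($\sat(\phi\until\psi)\subseteq\matchOU(X,Y)$). Take a run $\run=\statei{0}\cdots\statei{n}$ with $\dur(\run)\leq H$, and let $i$ be the first index with $\statei{i}\models\psi$, so that $\statei{j}\models\phi$ for all $j<i$. We show by backward induction on $j=i,i-1,\dots,0$ that the symbolic state representing $\statei{j}$ lies in $Z_{i-j}$.
\begin{itemize}
\item Base: $\statei{i}\models\psi$, so its symbolic state lies in $Y=Z_0$.
\item Step: each concrete transition $\statei{j}\to\statei{j+1}$ is represented symbolically, by the converse direction of \cref{lem:pre-correctness}. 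Together with $\statei{j}\in\gamma(X)$, this places the symbolic state of $\statei{j}$ in $X\cap\Pre(Z_{i-j-1})\subseteq Z_{i-j}$.
\end{itemize}
The conclusion then follows from monotonicity of the iterates.

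\textbf{Main obstacle.} The delicate step is the passage between concrete runs and symbolic states. A zone may contain valuations with different truth values of $\phi$ or $\psi$, and extrapolation of the ordinary clocks could merge valuations. The argument must use two facts:
\begin{itemize}
\item Zones are refined to be homogeneous for all subformulas, as stated after \cref{def:zonegraph}, which is what makes $X$ and $Y$ well defined on zones.
\item The extrapolation used is sound and complete for reachability and leaves $\clockMC$ exact. Hence, by \cref{lem:observer-equal-duration}, the duration bound $\dur(\run)\leq H$ is preserved in both directions.
\end{itemize}
We would handle this by invoking these standard properties once, as a lemma relating $\gamma(\Post^{+})$ to concrete reachability, before the two inductions.
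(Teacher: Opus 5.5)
Your skeleton is the paper's: the least fixed point of $Y\cup(X\cap\Pre(Z))$ with the plain graph-level $\Pre$, induction on the iterates $Z_n$, and stabilisation by finiteness. The paper, however, only claims the graph-level statement that $\symZone\in Z_n$ iff some symbolic path of length at most $n$ from $\symZone$ stays in $X$ until it reaches $Y$. You lift this to concrete terminating runs, and that lift fails in two places.

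First, the soundness invariant ``from each of its concrete valuations'' is false for the graph-level $\Pre$. \Cref{lem:pre-correctness} cannot deliver it, since it only yields \emph{some} $\clockval\in\zone$ with a step into $Z_n$. A counterexample: let $\phi$ label a non-final initial location $\loci{0}$ without invariant, and let $\psi$ label a final location $\loci{1}$. Join them by a single edge guarded by $x\leq 1$, and take $H\geq 2$. The delay successor of the initial node lies in $X\cap\Pre(Y)\subseteq Z_1$, because its discrete successor over $\loci{1}$ is in $Y$. Yet it contains the valuation $x=\clockMC=2$, from which $\loci{1}$ is unreachable. Here $\phi$ and $\psi$ are already homogeneous on every zone, so the refinement you invoke does not help. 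Assuming homogeneity for the until-set itself would be circular. What is needed is a valuation-level predecessor that splits zones, as in backward TCTL model checking; then $\gamma(Z_n)$ is exactly the set of concrete states with a witness of length at most $n$. Weakening the invariant to ``some valuation'' does not save the induction either. The valuation of $\symZone'$ reached by the step from \cref{lem:pre-correctness} need not be the one carrying the witness from the induction hypothesis. Gluing therefore requires that every valuation of a successor zone have a concrete predecessor in the source zone. \Cref{lem:pre-correctness} does not provide this, and it must be re-checked once the ordinary clocks are extrapolated.

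Second, the base case ``the trivial run with $i=0$ witnesses the until'' fails under your own definition of $\sat(\phi\until\psi)$, because runs in $\Paths^{\leq H}_{\tts}$ must end in a final location. A $\psi$-state from which no final location is reachable within the horizon lies in $Y=Z_0$ but admits no such run, so the claimed equality is false. You have two options. Either seed the iteration with $Y\cap R_F$, where $R_F=\mu W.\,F\cup\Pre(W)$ and $F$ is the set of symbolic states whose location is in $\Final$; prepending steps then preserves termination. Or drop termination, as the paper's proof tacitly does. Relatedly, the duration bound is not ``preserved in both directions''. By \cref{lem:observer-equal-duration}, a run from a state with $\clockval(\clockMC)=c>0$ stays inside $\zonegraph_H$ only if $\dur(\run)\leq H-c$. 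Your completeness direction therefore holds only for this global reading of the horizon, not for $\dur(\run)\leq H$.
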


\begin{proof}
The algorithm computes the least fixed point of
$F(Z) = Y\cup(X\cap\Pre(Z))$. We prove that this fixed point contains exactly the symbolic states from which there exists a finite symbolic path satisfying $\phi\until\psi$. Let
$Z_0=Y$ and, for $i\geq0$, $Z_{i+1} = Z_i\cup \bigl(X\cap\Pre(Z_i)\bigr)$. We show by induction on $i$ that $\symZone\in Z_i$ iff there exists a symbolic path of length at most $i$ from $\symZone$ to a state satisfying $\psi$, while every preceding state satisfies $\phi$. For $i=0$, we have $Z_0=Y$, so $\psi$ already holds at the current state. For the induction step, assume the claim for $Z_i$. A state is added to
$Z_{i+1}$ iff it belongs to $X$ and has a successor in $Z_i$. Hence $\phi$ holds at the current state and, by the induction hypothesis, the successor admits a path eventually reaching a $\psi$-state while maintaining $\phi$ beforehand. This is exactly the semantics of $\phi\until\psi$. Conversely, every finite path satisfying $\phi\until\psi$ has some finite
length $k$ before reaching a $\psi$-state. Repeated application of the predecessor rule therefore places its initial symbolic state in $Z_k$. Since the symbolic state space is finite, the increasing sequence
$Z_0\subseteq Z_1\subseteq\cdots$ stabilises after finitely many iterations. Its least fixed point is
therefore exactly $\sat(\phi\until\psi)$.
\end{proof}

\begin{proposition}[Correctness of $\matchOR$]
\label{prop:matchR-correctness}
Let $X=\sat(\phi)$ and $Y=\sat(\psi)$. Then
$\matchOR(X,Y) = \sat(\phi\release\psi)$.
\end{proposition}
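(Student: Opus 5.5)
The proof will mirror the one given for \cref{prop:matchU-correctness}, but with a greatest fixed point instead of a least one. The procedure $\matchOR$, read in its intended fixed-point form (cf.\ \cref{alg:or_exist,alg:or_forall}), computes the greatest fixed point of
\[
G(Z) = Y\cap\bigl(X\cup\Pre(Z)\bigr),
\]
iterated downward from $Z_0=Y$ via $Z_{i+1}=Z_i\cap\bigl(X\cup\Pre(Z_i)\bigr)$. The first step is to check that $G$ is monotone, since $\Pre$ is monotone by \cref{lem:pre-correctness}. Then $Z_0\supseteq Z_1\supseteq\cdots$ is a decreasing chain of subsets of the finite set $\SymStates$, finite by \cref{prop:finiteness}. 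It therefore stabilises at some $Z_k$, which is the greatest fixed point by Knaster--Tarski.

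Next I will characterise the approximants by induction on $i$. The claim is that $\symZone\in Z_i$ iff every symbolic path of length at most $i$ from $\symZone$ either keeps $\psi$ at every position, or reaches a position where $\phi\wedge\psi$ holds with $\psi$ at all earlier positions.

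The base case $i=0$ is immediate, because $Z_0=Y=\sat(\psi)$. For the inductive step, a state is kept in $Z_{i+1}$ iff it satisfies $\psi$ and either satisfies $\phi$ (so the release is discharged immediately) or has a successor in $Z_i$. Applying the induction hypothesis to that successor prolongs the $\psi$-prefix by one step. This is exactly the unfolding
\[
\phi\release\psi \equiv \psi\wedge(\phi\vee\mathsf{X}(\phi\release\psi))
\]
of the path semantics of $\release$, which covers both disjuncts of the definition. Lifting symbolic successors to concrete ones, in both directions, uses \cref{lem:pre-correctness}.

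The main obstacle is the interaction between finite runs and the greatest fixed point. The path semantics only considers terminating runs in $\Paths_{\tts}(\statein)^+$, whose length is finite but unbounded; within the horizon $H$ the length is still unbounded, since zero delays are allowed. I therefore need to show that membership in the stable set $Z_k$ is equivalent to the existence of a finite witnessing run.

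\begin{itemize}
\item For one direction, a run witnessing $\phi\release\psi$ of length $n$ places its start state in every $Z_i$, hence in $Z_k$.
\item For the converse, starting from $\symZone\in Z_k$, I will repeatedly choose either the $\phi$ disjunct or a successor that remains in $Z_k$, and extend the path this way. Because $Z_k$ is a fixed point, this extension never gets stuck. Termination at a final state, together with $\psi$ holding along the way, then yields the first disjunct of the release semantics.
\end{itemize}

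The delicate point is guaranteeing that some extension actually reaches a final location within $H$ rather than cycling forever. Here I will restrict to states admitting a terminating continuation, intersecting with the backward reachability set of $\Final$ if needed, and use finiteness of $\zonegraph_H$ to cut cycles so that a finite witness is obtained. Combining the two directions gives $\matchOR(X,Y)=\sat(\phi\release\psi)$.
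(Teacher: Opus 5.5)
Your skeleton (the greatest fixed point of $G(Z)=Y\cap(X\cup\Pre(Z))$, monotonicity, stabilisation on the finite graph) is exactly the paper's proof. The paper stops there and simply asserts that this unfolding is ``precisely the semantics of $\phi\release\psi$''. It never deals with the requirement that runs end in a final location. The obstacle you flag is therefore real, but your extra steps do not overcome it. For the existential $\Pre$ and terminating runs in $\Paths_{\tts}(\statein)^{+}$, this greatest fixed point is not the semantic set, in either direction.

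First, your approximant invariant quantifies over \emph{every} path, while the step you actually use (``has a successor in $Z_i$'') is existential. The invariant already fails for a state satisfying $\psi\wedge\neg\phi$ with one $\psi$-successor and one $\neg\psi$-successor. An existential version must speak of paths of length exactly $i$, since with ``at most $i$'' the empty path trivialises it.

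Second, the claim that a witnessing run of length $n$ puts its start state in every $Z_i$ fails for the first disjunct of the release semantics. Take $s_0\to s_1$ with $\psi$ at both states, $\phi$ at neither, $s_1$ final, $s_1$ without $\psi$-successors, and $s_0$ without other $\psi$-successors. This already happens in the zone graph with $\psi$ the clock constraint $x=0$. The run satisfies $\phi\release\psi$, yet $s_1\notin Z_1$ and $s_0\notin Z_2$. The operator $G$ has no clause letting a run \emph{stop} at a final location; it demands either $\phi$ or an unbounded $\psi$-continuation.

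Third, your repair for the converse fails. Intersecting with the backward-reachability set of $\Final$, inside or outside the fixed point, still admits the following state. Let $s\models\psi\wedge\neg\phi$ be non-final, with only two successors: $s$ itself (time-elapsed zones carry $\epsilon$-self-loops in this zone graph) and a final state $t\models\neg\psi$. Then $s$ belongs to the fixed point and reaches $\Final$, but every terminating run from $s$ leaves $\psi$ before $\phi$ ever holds. Cutting cycles cannot help, because what is missing is the \emph{existence} of a $\psi$-path to termination, not a bound on its length.

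The missing idea is that on a run ending in $\Final$, $\phi\release\psi$ is equivalent to $\psi\until\bigl(\psi\wedge(\phi\vee\text{last position})\bigr)$. Its existential state semantics is therefore a \emph{least} fixed point, $\mu Z.\;Y\cap\bigl((X\cap C)\cup F\cup\Pre(Z)\bigr)$. Here $F$ is the set of final symbolic states and $C$ is the set of states from which $\Final$ is reachable within the horizon $H$; the set $C$ is needed because after $\phi$ discharges the release, the run must still terminate. For this operator both inclusions follow by induction on the approximation index, exactly as in \cref{prop:matchU-correctness}, with no cycle-cutting. The greatest fixed point $\nu Z.\,Y\cap(X\cup\Pre(Z))$ is the right characterisation only for release over infinite, non-terminating paths.
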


\begin{proof}
The release operator is characterised by the greatest fixed point  $\nu Z.\;Y\cap\bigl(X\cup\Pre(Z)\bigr)$. Let
$F(Z) = Y\cap\bigl(X\cup\Pre(Z)\bigr)$. The algorithm $\matchOR$ computes the greatest fixed point of $F$. A symbolic state belongs to $F(Z)$ iff $\psi$ holds at the current state
and either $\phi$ already holds there or there exists a successor belonging to $Z$. Hence, as long as $\phi$ has not occurred, $\psi$ must continue to hold along the symbolic execution. This is precisely the semantics of
$\phi\release\psi$. Starting from the whole finite symbolic state space and repeatedly applying $F$ produces a decreasing sequence
$\SymStates \supseteq F(\SymStates) \supseteq F^2(\SymStates)
\supseteq\cdots$. Since $\SymStates$ is finite, this sequence stabilises after finitely many iterations at the greatest fixed point of $F$. Therefore, $\matchOR(X,Y) = \sat_H(\phi\release\psi)$.
\end{proof}

\subsection{Termination, soundness and complexity}

Termination of \cref{alg:labeling} follows from the finiteness of the \ac{zonegraph}. %The propositions below states both its termination and correctness.
%Termination of the \cref{alg:labeling} intuitively follows, as the number of states in the \ac{zonegraph} is finite. The following proposition establishes the termination and the correctness of our model checking algorithm.

\begin{proposition}[Termination] 
\label{prop:termination}
Let $\TA$ be \iac{ta} and let $\prop$ be an \ETOL formula. Let $H$ be a fixed verification horizon. Then \cref{alg:labeling} always terminates on input $\zonegraph_H(\TA\oplus\clockMC,\prop)$. 
\end{proposition}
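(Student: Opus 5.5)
The argument is a structural induction over the two nested loops of \cref{alg:labeling}, built on \cref{prop:finiteness}. First, \cref{prop:finiteness} says that $\zonegraph_H(\TA\oplus\clockMC,\prop)$ has a finite set $\SymStates$ of symbolic states. So the construction step at the top of the algorithm terminates. It is a forward exploration that applies $\Post$, and zones are kept canonical as DBMs, so each newly generated state can be compared against the states already stored. Since only finitely many such states exist, the worklist eventually empties.

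Second, the outer loops range over $i\le|\prop|$ and over $\sub(\prop)$. Both are finite, because $\prop$ has finitely many connectives. It therefore remains to show that each case of the switch terminates, assuming that the sets $\sat(\propOther_1)$ and $\sat(\propOther_2)$ have already been computed. This is legitimate because the subformulas are processed in increasing size. The cases $\top$, $p$, $\neg$, $\guard$, $\land$ and $\clockformula.\propOther_1$ are single set operations over the finite set $\SymStates$. Each involves finitely many zone inclusion or intersection tests, and these are decidable on DBMs.

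For the four opacity cases, I will show that $\procedureOUexists$, $\procedureOUforall$, $\procedureORexists$ and $\procedureORforall$ terminate. The key step is monotonicity. I will first check that $\PreExist$ and $\PreForall$ are computable on a finite input. Each is a finite check over the pairs in $\Post^{+}_{\setX}(\symZone)\times\Post^{+}_{\SymStates\setminus\setX}(\symZone)$. The existential conditions on valuations and runs can be decided symbolically by intersecting zones on $\clockMC$, because $\clockMC$ is exact and bounded by $H$. Then I will argue by induction on the iterations that the sequence of values of $Y$ is monotone. In the until procedures, $Y_0=\sat(\propOther_2)$ and $Y_{k+1}=\sat(\propOther_2)\cup(\sat(\propOther_1)\cap\PreExist(Y_k))$; this sequence is increasing provided $\PreExist$ is monotone in its argument. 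In the release procedures the sequence is decreasing, because $Y_1\subseteq\sat(\propOther_2)=Y_0$, and the same monotonicity then propagates.

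\textbf{Main obstacle.} The difficulty is that $\PreExist$ and $\PreForall$ are not obviously monotone. They refer to both $\setX$ and its complement, so enlarging $\setX$ can remove witnesses from the $\SymStates\setminus\setX$ side. If monotonicity fails, I will not rely on it. I will instead use the fact that the iteration is a deterministic map on the finite lattice $2^{\SymStates}$. I will then argue either that the loop guard $Y\neq X$ must eventually fail, or that the until iterate can be forced to be monotone by replacing $Y$ with $X\cup Y$. Since $\sat(\propOther_2)\subseteq Y$ at every step, this replacement does not change the intended fixed point. Once each iterate is monotone and lives in a finite powerset, it stabilises after at most $|\SymStates|$ iterations. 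Finally, $\matchOU$ and $\matchOR$ (\cref{alg:match-exists,alg:match-forall}) are single loops over $\SymStates$, and each test they perform is again a finite symbolic check on $\clockMC$. Combining all of this, every step of \cref{alg:labeling} terminates.
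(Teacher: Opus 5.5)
Your bookkeeping (finiteness of $\SymStates$ via \cref{prop:finiteness}, finitely many subformulas processed bottom-up, the Boolean and freeze cases, single passes for $\matchOU$ and $\matchOR$) matches the paper's proof. The gap is the step you flag yourself: termination of the while-loops of $\procedureOUexists$, $\procedureOUforall$, $\procedureORexists$ and $\procedureORforall$, whose updates contain $\PreExist$ or $\PreForall$. Your release argument ($Y_1\subseteq Y_0$, then propagation) needs the same missing monotonicity. Neither fallback closes the gap. A deterministic map on the finite lattice $2^{\SymStates}$ only makes $Y_0,Y_1,\dots$ eventually periodic. The guard $Y\neq X$ is a fixed-point test, so a $2$-cycle runs forever. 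Replacing $Y$ by $X\cup Y$ proves termination of a different procedure. Your claim that this preserves the intended fixed point presupposes a fixed point that need not exist.

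With $\PreExist$ inside the loop, the problem is real. Write $F(Y)=\sat(\propOther_2)\cup\bigl(\sat(\propOther_1)\cap\PreExist(Y)\bigr)$. Since $\Post^{+}_{\emptyset}(\symZone)=\emptyset$, we have $\PreExist(\SymStates)=\emptyset$, hence $F(\SymStates)=\sat(\propOther_2)$. So once the iterates reach $\SymStates$ while $\sat(\propOther_2)\neq\SymStates$, they are back at the start.

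This happens already for $\ETexists\eventualy q$ (the shape of $\prop_1$). Take an automaton with initial location $\loc$, a location $\loc'$ labelled $q$, and transitions $\loc\xrightarrow{a}\loc'$ and $\loc\xrightarrow{b}\loc$ with no guards, resets or invariants. Every symbolic state $\symZone$ in $\loc$ is its own $b$-successor. From any of its valuations, firing $a$ or $b$ without delay gives two runs of duration $0$: one ends in $\sat(q)$, the other in $\symZone$. Thus any $Y\supseteq\sat(q)$ that omits some $\loc$-state $\symZone$ has $\symZone\in F(Y)\setminus Y$. In particular $F(\sat(q))=\SymStates$, and $\procedureOUexists$ alternates between $\sat(q)$ and $\SymStates$ forever. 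Also $F$ has no fixed point at all, and your $X\cup Y$ variant returns $\SymStates\neq F(\SymStates)$.

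The paper's proof avoids this only implicitly. It speaks of monotone fixed points computed by $\matchOU$ and $\matchOR$; their correctness propositions, like the paper's two-phase description, use the ordinary monotone $\Pre$. It treats each evaluation of $\PreExist$ and $\PreForall$ as a finite check, without arguing that a loop iterating them stabilises. To get a proof, adopt that reading: the ordinary $\Pre$ in the fixed-point loops, and the execution-time operators only in the one-pass matching. Then each loop iterates a monotone map on $2^{\SymStates}$ starting from $\sat(\propOther_2)$, and your finiteness argument gives stabilisation within $|\SymStates|$ rounds. For the pseudocode as literally written, the proposition is false.
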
 

\begin{proof}
By \cref{prop:finiteness}, $\zonegraph_H(\TA\oplus\clockMC,\prop)$ contains finitely many symbolic states. Moreover, $\prop$ contains finitely many subformulas. Algorithm~\ref{alg:labeling} processes each subformula over the finite symbolic state space. The Boolean cases require only finite set operations. The procedures \matchOU \ and \matchOR \ compute monotone fixed points over the finite lattice $2^{\SymStates}$ and therefore stabilise after finitely many iterations. The operators
$\Pre_{\ETexists}$ and $\Pre_{\ETforall}$ range over the finite bounded symbolic graph and compare only executions represented in this graph. Hence every invocation made by Algorithm~\ref{alg:labeling} terminates. Therefore the whole labelling algorithm terminates.
\end{proof}

\begin{theorem}[Soundness]
\label{thm:soundness}
Let $\symZone$ be a symbolic state of
$\zonegraph_H(\TA\oplus\clockMC,\prop)$. If Algorithm~\ref{alg:labeling} labels $\symZone$ with $\prop$, then every concrete state represented by the corresponding satisfying subzone satisfies $\prop$ under the bounded semantics.
\end{theorem}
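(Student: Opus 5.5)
We argue by structural induction on the subformulas $\propOther\in\sub(\prop)$, following the bottom-up order in which \cref{alg:labeling} processes them. The invariant is: for every symbolic state $\symZone$ placed in $\sat(\propOther)$, after homogeneous refinement of zones, every concrete extended state in $\gamma(\symZone)$ satisfies $\propOther$ under the bounded semantics. The bounded semantics quantifies over $\Paths_{\tts}^{\leq H}$ instead of $\Paths_{\tts}(\statein)^+$. Since the labelling of $\prop$ is the last step, the theorem is the instance $\propOther=\prop$.

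The base and Boolean cases are routine.
\begin{itemize}
\item For $\top$ and atomic $\atomic$, labels depend only on the location, via $\Label_\ttz$ and $\Label_\tts$.
\item For a guard $\guard$, the algorithm requires $\forall\clockval\in\zone,\ \clockval\models\guard$, which is exactly the concrete condition.
\item Conjunction is set intersection.
\item For the freeze binder $\clockformula.\propOther_1$, the reset of the formula-clock component maps $\gamma(\symZone)$ into $\gamma(\sat(\propOther_1))$, so we can apply the induction hypothesis.
\item Negation is the delicate Boolean case. Soundness of $\SymStates\setminus\sat(\propOther_1)$ needs completeness of the induction hypothesis, not merely soundness. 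We therefore strengthen the invariant to an exact characterisation: $\gamma(\sat(\propOther))=\sat^{\leq H}(\propOther)$ on refined zones. This relies on the paper's convention that non-homogeneous zones are split, so each zone is entirely inside or entirely outside.
\end{itemize}

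The opacity cases come next. For $\ETexists(\propOther_1\until\propOther_2)$, we first show that $U=\procedureOUexists(\cdot)$ is exactly the set of states from which the path formula $\propOther_1\until\propOther_2$ can be witnessed. This uses the fixed-point argument of \cref{prop:matchU-correctness}, the induction hypothesis for $\sat(\propOther_1)$ and $\sat(\propOther_2)$, and \cref{lem:pre-correctness} to lift symbolic predecessors to concrete transitions. Then $\matchOU(U,\SymStates\setminus U)$ coincides with $\Pre_{\ETexists}$ applied to $U$. By \cref{prop:pre-etexists-correctness}, it therefore yields a concrete $\clockval$ and runs $\rho,\rho'$ of equal duration $\leq H$, one satisfying the until formula and one not. \cref{lem:observer-equal-duration} converts the equality of $\clockMC$ values into $\dur(\rho)=\dur(\rho')$. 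The universal case uses \cref{prop:pre-etforall-correctness}, which gives the matching condition for every $\clockval\in\zone$; this is precisely the per-state requirement of $\ETforall$. The release cases are handled dually, using the greatest fixed point from \cref{prop:matchR-correctness}.

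The main obstacle is the existential case. \cref{prop:pre-etexists-correctness} only produces \emph{some} $\clockval\in\zone$, whereas the theorem claims that every state of the satisfying subzone satisfies the formula. To close this gap, we take the satisfying subzone to be the projection, onto $\zone$, of the set of valuations $\clockval$ admitting matching runs. This projection is computable as a union of zones: intersect the backward images, with $\clockMC$ kept exact, of $\zone'$ and $\zone''$ under the constraint $\clockMC'=\clockMC''$. We refine $\symZone$ along this union before labelling. A second subtlety is that the path semantics is evaluated at the discrete states $\statei{i}$ of a run. We argue that the symbolic paths of $\zonegraph_H$ enumerate exactly these states, by completeness of the bounded construction used in \cref{prop:pre-etexists-correctness}. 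The argument thus reduces to soundness and completeness of zone successors, with the observer clock never extrapolated.
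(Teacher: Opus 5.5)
Your skeleton is the paper's own: structural induction, the fixed-point propositions for the temporal part, and \cref{prop:pre-etexists-correctness,prop:pre-etforall-correctness} for the opacity part. Your strengthening to an exact characterisation (needed for negation) and your refinement of the existential case are real improvements on it. The gap is the step ``by \cref{prop:pre-etexists-correctness}, it yields runs $\rho,\rho'$ \ldots, one satisfying the until formula and one not.'' Applied to $U$, that proposition only gives non-empty executions with $\last(\rho)\in\gamma(U)$ and $\last(\rho')\in\gamma(\SymStates\setminus U)$. These need not lie in $\Paths_{\tts}^{\leq H}(\statein)$, since nothing forces their endpoints into $\Final$. Also, an endpoint in $U$ says nothing about whether $\rho$ itself satisfies $\propOther_1\until\propOther_2$; it only says some continuation from that endpoint could witness it.

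The step genuinely fails. Take $\loci{0}\xrightarrow{\clock=1}\loci{1}\xrightarrow{\clock=2}\loci{f}$ and $\loci{0}\xrightarrow{\clock=1}\loci{f}$ without resets, with $\Label(\loci{1})=\{\atomic\}$, $\Final=\{\loci{f}\}$, $H\geq 2$, and the formula $\ETexists(\top\until\atomic)$. The only terminating runs have durations $2$ (via $\loci{1}$, satisfying the until) and $1$ (violating it), so the formula is false at the initial state. But $U$ contains the $\loci{1}$-states and no $\loci{f}$-state. This holds whether $U$ is computed with $\Pre$, as you assume, or with $\PreExist$, as the backward search literally does. The executions reaching $\loci{1}$ and $\loci{f}$ at time $1$ both end with $\clockMC=1$, so the initial state is put in $\matchOU(U,\SymStates\setminus U)$ and labelled. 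Moreover, the satisfying terminating run ends in $\loci{f}\notin U$, so merely restricting endpoints to $\Final$ would not repair the classification. The universal and release cases inherit the same defect. The paper's proof makes exactly this identification (it takes $X=\sat_H(\theta)$ for the path formula $\theta$), so citing its propositions cannot close the gap.

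What is missing is a run-level classification. One way is to take the product of the zone graph with a three-valued monitor for the path formula (pending, fulfilled, failed), updated at each discrete position from the already computed $\sat(\propOther_1)$ and $\sat(\propOther_2)$. A terminating run then satisfies $\propOther_1\until\propOther_2$ iff it ends in a $\Final$-location with verdict fulfilled, and satisfies $\propOther_1\release\propOther_2$ iff its verdict is not failed. The opacity check must then compare the endpoint values of $\clockMC$ of accepted versus rejected terminating runs from the same start valuation. For $\ETexists$ the two duration sets must intersect; for $\ETforall$ they must be equal.

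Relatedly, your claim that symbolic paths of $\zonegraph_H$ enumerate exactly the positions $\statei{i}$ is false. The $\epsilon$-successor zones represent the post-delay states $\statei{i}'$, which the path semantics skips, so a fixed point over $\Pre$ tests $\propOther_1$ at positions the semantics ignores. For instance, $\propOther_1=(\clock\leq 2)$ may fail in the middle of a delay that is followed by a reset of $\clock$, so the symbolic until can fail where the concrete one holds. The monitor must therefore advance only on discrete steps, with each delay folded into the following discrete transition.
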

\begin{proof}
 The proof proceeds by structural induction on $\prop$. The cases of atomic propositions and clock constraints follow directly from their concrete and symbolic semantics. The Boolean cases follow from the induction hypothesis and the corresponding set operations. For
$\prop=\phi\until\psi$, soundness follows from
\cref{prop:matchU-correctness}. For  
$\prop=\phi\release\psi$, soundness follows from
\cref{prop:matchR-correctness}. For
$\prop=\ETexists\theta$, let $X=\sat_H(\theta)$. The algorithm labels the symbolic state according to
$\Pre_{\ETexists}(X)$. By
\cref{prop:pre-etexists-correctness}, there are two concrete executions from the same concrete initial state, one satisfying $\theta$ and the other satisfying its complement, with equal total duration. Hence the
concrete state satisfies $\ETexists\theta$. For
$\prop=\ETforall\theta$, the result follows analogously from \cref{prop:pre-etforall-correctness}: every execution satisfying one side admits a matching execution satisfying the other side with the same total
duration, and conversely. Thus every symbolic label produced by the algorithm is semantically
sound.   
\end{proof}
$
%\ETexists(\propOther_1 \until \propOther_2),
%\ETforall(\propOther_1 \until \propOther_2),
%\ETexists(\propOther_1 \release \propOther_2),
%\ETforall(\propOther_1 \release \propOther_2)$. By assumption, the procedures $\procedureOUexists$, $\procedureOUforall$,
%$\procedureORexists$, and $\procedureORforall$ compute exactly the corresponding symbolic satisfaction sets. Hence the algorithmic clauses coincide with the semantic clauses in all four cases. Therefore, for every subformula $\propOther \in \sub(\prop)$,
%$\sat(\propOther)
%=
%\Set{\state \in \SymStates}{\zonegraph,\state \models \propOther}.
%$
%Applying this to the full formula $\prop$ yields
$%\zonegraph,\initialSymState \models \prop$
%\iff
%$\initialSymState \in \sat(\prop),$
%which proves soundness and completeness.
%\end{proof}
%\begin{proposition}[Soundness and Completeness]
%    Let $\prop$ be \ETOL formula.
%    Then, \cref{alg:labeling} is sound and complete: for every extended initial state  $\state_0=(\locinit,\clockval_0,\clockvalformula_0)$, $\tts,\state_0 \models \prop \Longleftrightarrow \state_0 \in \sat(\prop)$
%\end{proposition}
%\begin{proof}
  %  \todo{}
%\end{proof}
\begin{theorem}[Completeness]
\label{thm:completeness}
Let $s=(\loc,\clockval)$ be a concrete state represented in $\zonegraph_H(\TA\oplus\clockMC,\prop)$. If
$s\models_H\prop$, then Algorithm~\ref{alg:labeling} represents $s$ in a symbolic subzone labeled with $\prop$.
\end{theorem}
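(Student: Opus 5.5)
We prove the statement by structural induction on $\prop$, mirroring the proof of \cref{thm:soundness}. The induction hypothesis is the strengthened claim that for every subformula $\propOther\in\sub(\prop)$ and every concrete extended state $s$ represented in $\zonegraph_H(\TA\oplus\clockMC,\prop)$, $s\models_H\propOther$ implies that $s$ lies in some (refined) subzone belonging to $\sat(\propOther)$. Combined with \cref{thm:soundness}, this gives, at each stage, that $\gamma(\sat(\propOther))$ is exactly the set of concrete states satisfying $\propOther$ under the bounded semantics. We need this exact equality, not just one inclusion, because the negation case and the matching procedures use both $\sat(\propOther)$ and its complement.

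\textbf{Base and Boolean cases.} For $\top$, atomic propositions and guards, the claim follows from the definition of $\Label_\ttz$. Here we use the refinement convention: a zone that is not homogeneous with respect to $\guard$ is split so that each piece either entirely satisfies or entirely violates $\guard$. Hence every concrete satisfying valuation falls in a subzone on which the universal test ``$\forall\clockval\in\zone,\ \clockval\models\guard$'' succeeds.

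For conjunction, we intersect the representing subzones provided by the induction hypothesis. For negation, we use the exact characterisation from the previous paragraph: if $s\not\models_H\propOther_1$, then $s\notin\gamma(\sat(\propOther_1))$, so $s$ lies in $\SymStates\setminus\sat(\propOther_1)$. For the freeze case $\clockformula.\propOther_1$, the reset $\reset{\clockvalformula}{\{\clockformula\}}$ maps represented states to represented states, so the induction hypothesis applies to the reset state.

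\textbf{Opacity-temporal cases.} Take $\prop=\ETexists(\propOther_1\until\propOther_2)$ with $s\models_H\prop$. There are runs $\rho,\rho'\in\Paths^{\le H}_{\tts}(s)$ with $\dur(\rho)=\dur(\rho')$, $\rho\models\propOther_1\until\propOther_2$ and $\rho'\not\models\propOther_1\until\propOther_2$. The plan has three steps.
\begin{enumerate}
\item By the induction hypothesis on $\propOther_1,\propOther_2$, the states of $\rho$ witnessing the until belong to $\gamma(\sat(\propOther_1))$ and $\gamma(\sat(\propOther_2))$ respectively.
\item By the least-fixed-point characterisation in \cref{prop:matchU-correctness} and completeness of the symbolic successor relation (\cref{lem:pre-correctness}, applied along the run), the symbolic path representing $\rho$ witnesses membership in the computed set $U$. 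Exactness from the negation argument places the symbolic path of $\rho'$ in $\SymStates\setminus U$.
\item The converse direction of \cref{prop:pre-etexists-correctness} then puts the symbolic state containing $s$ into the output of $\matchOU(U,\SymStates\setminus U)$, since that output is precisely the set of states satisfying the $\Pre_{\ETexists}$-style condition.
\end{enumerate}
The release case is identical, with \cref{prop:matchR-correctness} in place of \cref{prop:matchU-correctness}.

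For the universal modalities we argue the same way via the converse direction of \cref{prop:pre-etforall-correctness}. That direction requires the matching conditions for \emph{every} valuation of the zone, so we first refine the zone of $s$ to the subzone of valuations satisfying $\prop$. The statement claims only that $s$ is represented in \emph{some} satisfying subzone, so this refinement is legitimate.

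\textbf{Main obstacle.} The hardest step is the interaction between runs and symbolic objects. The semantic until and release are properties of runs, whereas $\procedureOUexists$ and the related procedures compute sets of symbolic states. We must show that a concrete run satisfying $\propOther_1\until\propOther_2$ is mirrored by a symbolic path whose membership in $U$ is determined by its endpoint. We must also show that the refinement needed for the $\forall$-matching is finite and compatible with the exact, unextrapolated clock $\clockMC$. We would handle both by working with the refined finite partition guaranteed by \cref{prop:finiteness}, on which every subformula's satisfaction set is a union of cells. On this partition, completeness of the bounded construction lets us transfer each concrete run to a cell path without changing endpoint $\clockMC$-values, and \cref{lem:observer-equal-duration} preserves duration equality along the transfer.
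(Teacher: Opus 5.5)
Your route is the paper's own: structural induction, transferring concrete runs to symbolic paths, \cref{lem:observer-equal-duration} to turn equal durations into equal final values of $\clockMC$, and the converse directions of \cref{prop:pre-etexists-correctness,prop:pre-etforall-correctness} for the opacity modalities. Your remark that negation forces soundness and completeness to be proved together improves on ``the Boolean cases follow immediately''. The gap is in Steps~2--3 of the opacity case.

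Whether $\run\models\propOther_1\until\propOther_2$ holds is a property of the whole run. By contrast, $\matchOU(U,\SymStates\setminus U)$ only asks for two equal-duration executions whose \emph{endpoints} lie in $\gamma(U)$ and in $\gamma(\SymStates\setminus U)$. Exactness of $\sat$ on state formulas says nothing about where $\run'$ lands:
\begin{itemize}
\item $\run'$ starts at $s$, which lies in $U$ (witnessed by $\run$) under the reading of $U$ you take from \cref{prop:matchU-correctness};
\item $\run'$ may end in a $\propOther_2$-state after passing a $\neg\propOther_1\wedge\neg\propOther_2$-state, hence in $U$;
\item symmetrically, $\run$ may end where no $\propOther_2$-state is reachable.
\end{itemize}
The step genuinely fails. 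Take final locations $\loci{A}$ (initial) and $\loci{B}$ with $\Label(\loci{A})=\emptyset$, $\Label(\loci{B})=\{\atomic\}$, and edges $\loci{A}\to\loci{B}$, $\loci{A}\to\loci{A}$ without guards, invariants or resets. Delaying $1$ and taking either edge gives two runs of duration $1$, exactly one of which satisfies $\top\until\atomic$. So the initial state satisfies $\ETexists(\top\until\atomic)$ for every $H\geq 1$. Yet from every symbolic state a $\atomic$-state is reachable in zero or one step. Hence $U=\SymStates$, $\Post^{+}_{\SymStates\setminus U}(\symZone)=\emptyset$ for all $\symZone$, and $\matchOU(U,\SymStates\setminus U)$ labels nothing.

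There is a second problem with Step~2. \cref{alg:labeling} computes $U$ with $\procedureOUexists$, whose update uses $\PreExist$ rather than $\Pre$, so \cref{prop:matchU-correctness} does not describe this $U$. On the example above that iteration alternates between $\sat(\atomic)$ and $\SymStates$ and never stabilises.

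Your proposed remedy does not reach this obstruction. Even granting a finite partition on which every state-formula satisfaction set is a union of cells, the truth value of an until or release along a run is not a function of the run's endpoint. In the example, $U$ is already such a union and contains everything. The paper's proof makes the same identification (``one satisfying $\theta$ and the other its complement \dots\ hence the conditions of $\Pre_{\ETexists}$ are met''), so following it does not supply the missing idea.

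What is needed is to make the run-level verdict visible at the endpoint before matching. One way is to take the product of the bounded zone graph with a three-valued monitor (pending, satisfied, violated), updated at each state of the run from $\sat(\propOther_1)$ and $\sat(\propOther_2)$. You then apply the duration matching to final product states marked satisfied against final product states not so marked. Release needs the dual monitor, and the $\ETforall$ cases need the same device. In those cases your refinement of the zone of $s$ to its satisfying valuations also presupposes, without proof, that these valuations form a finite union of zones over $\Clocks\cup\{\clockMC\}\cup\ClocksFormula$.
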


\begin{proof}
The proof is by structural induction on $\prop$. The atomic and clock-constraint cases follow from the symbolic representation of valuations. The Boolean cases follow immediately from the induction hypothesis. If
$s\models_H\phi\until\psi$, there exists a finite execution witnessing the until condition. By completeness of the symbolic transition relation, this execution is represented by a symbolic path. By
\cref{prop:matchU-correctness}, the symbolic subzone containing $s$ is therefore returned by \matchOU. The release case follows analogously from
\cref{prop:matchR-correctness}. Suppose
$s\models_H\ETexists\theta$. By the semantics of $\ETexists$, there exist two executions $\rho$ and
$\rho'$ from $s$, one satisfying $\theta$ and the other its complement, such that $\dur(\rho)=\dur(\rho')\leq H$.
By completeness of the symbolic graph, both executions are represented by symbolic paths. By \cref{lem:observer-equal-duration}, their endpoint valuations satisfy
$\clockval'(\clockMC) = \clockval''(\clockMC)$. Hence the conditions of $\Pre_{\ETexists}$ are met, and the symbolic subzone containing $s$ is labelled with $\ETexists\theta$. The $\ETforall$ case is analogous. Every execution required by the universal semantics has a duration-equivalent counterpart. By symbolic
completeness and \cref{lem:observer-equal-duration}, all such matching pairs satisfy the condition used by $\Pre_{\ETforall}$. Hence the algorithm labels the symbolic representation of $s$ accordingly. Therefore the algorithm is complete.
\end{proof}

\begin{theorem}[Correctness of bounded \ETOL model checking]
\label{thm:correctness}
Let $\TA$ be a timed automaton, let $\prop$ be an \ETOL formula, and let $H$ be a fixed verification horizon.
Algorithm~\ref{alg:labeling} terminates and $\TA\models_H\prop
\quad\Longleftrightarrow\quad
s_0\models_H\prop$ is correctly decided by the labelling of the initial symbolic state.
\end{theorem}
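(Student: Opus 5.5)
The plan is to assemble the statement from three earlier results: \cref{prop:termination} for termination, and \cref{thm:soundness} and \cref{thm:completeness} for the two directions of the equivalence. The argument is carried out on the bounded zone graph $\zonegraph_H(\TA\oplus\clockMC,\prop)$.

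\textbf{Termination.} By \cref{prop:finiteness} the graph is finite. \cref{prop:termination} then gives termination of \cref{alg:labeling} on this input. The output is, for every subformula $\propOther\in\sub(\prop)$, a finite union of labelled subzones $\sat(\propOther)$.

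\textbf{Link between the initial states.} By definition, $\TA\models_H\prop$ iff $\tts\models_H\prop$, which holds iff the initial extended state satisfies $\prop$ under the bounded semantics. This state is $(\locinit,\ClocksZero,\clockvalformula_0)$, extended with $\clockMC=0$. By the observer-clock discussion preceding \cref{lem:observer-equal-duration}, adding $\clockMC$ is a conservative extension. Every run of $\TA$ lifts uniquely to a run of $\TA\oplus\clockMC$, durations are preserved, and $\clockMC$ appears neither in guards nor in formulas. So satisfaction at the initial concrete state is the same in both systems. This concrete state is represented by the initial symbolic state $s_0=(\locinit,\{\ClocksZero\})$, whose zone is a singleton.

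\textbf{Forward direction.} Suppose $s_0\models_H\prop$. \cref{thm:completeness} gives a labelled subzone containing the initial valuation. Since the zone of $s_0$ is a singleton, any refinement of it that contains this valuation is $s_0$ itself, so $s_0\in\sat(\prop)$.

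\textbf{Backward direction.} Suppose $s_0\in\sat(\prop)$. \cref{thm:soundness} says every concrete state of the satisfying subzone satisfies $\prop$. In particular the initial concrete state does, so $\TA\models_H\prop$.

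Together these give the claimed equivalence, decided by inspecting the label of $s_0$.

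\textbf{Main obstacle.} The delicate step is the identification between concrete bounded satisfaction and the labelling of the symbolic state. Two issues must be handled.

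First, the soundness and completeness theorems are stated for possibly refined subzones. I would rely on the singleton zone of $s_0$ to make refinement trivial there. I would also check that refinement never splits a symbolic state in a way that changes the reachable-path structure used by $\Pre_{\ETexists}$ and $\Pre_{\ETforall}$.

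Second, formula clocks are not part of the zone graph of \cref{def:zonegraph}. I would first try to treat formula clocks in $\ClocksFormula$ as additional never-tested automaton clocks included in the DBMs, reset only by the freeze case of \cref{alg:labeling}. Under that encoding the structural inductions in \cref{thm:soundness,thm:completeness} apply verbatim. If the encoding fails to make those inductions go through, I would restrict the statement to the initial formula-clock valuation $\clockvalformula_0=\vec 0$.
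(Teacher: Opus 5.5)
Your proposal is correct and follows the paper's own proof, which also obtains termination from \cref{prop:termination} and the two directions of the equivalence from \cref{thm:soundness} and \cref{thm:completeness}. The extra points you make explicit are details the paper's three-line argument leaves implicit, not a different route: that the zone of $s_0$ is the singleton $\{\ClocksZero\}$, so refinement into subzones is trivial there, and that runs lift uniquely through the observer clock $\clockMC$.
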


\begin{proof}
Termination follows from \cref{prop:termination}.
Soundness follows from \cref{thm:soundness}, and completeness follows from \cref{thm:completeness}. Therefore Algorithm~\ref{alg:labeling} decides the bounded \ETOL model-checking problem correctly.
\end{proof}

\noindent Finally, the following theorem establishes the complexity of our model checking algorithm.
\begin{theorem}
\label{theoModelcheclTOL1}
%The model checking problem for \ETOL{} over \ac{TA} belongs to \PSPACE.
The bounded model-checking problem for \ETOL{} over \acp{TA}, with verification horizon $H$, belongs to \PSPACE.
\end{theorem}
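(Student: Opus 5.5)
Membership in \PSPACE{} will come from turning \cref{alg:labeling} into a nondeterministic procedure that never stores the bounded zone graph $\zonegraph_H(\TA\oplus\clockMC,\prop)$ explicitly. Savitch's theorem ($\NPSPACE=\PSPACE$) then gives the result, together with closure of \PSPACE{} under complement and under polynomially many nested calls. Throughout, the horizon $H$ is encoded in binary and is part of the input, as are the maximal constants $M$ of $\TA$ and $\prop$.

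The first step is to bound the size of a single symbolic state. A state of $\zonegraph_H(\TA\oplus\clockMC,\prop)$ is a location together with a canonical DBM over $\Clocks\cup\{\clockMC\}\cup\ClocksFormula$. Under $Extra_M^H$, every DBM entry is either $\infty$ or a bounded integer constant with a strictness flag. For ordinary clocks the constant lies in $[-M,M]$. For entries involving $\clockMC$ it lies in $[-H,H]$. Each entry therefore needs $O(\log M+\log H)$ bits, and a state needs $O((|\Clocks|+|\ClocksFormula|+1)^2(\log M+\log H)+\log|\Loc|)$ bits. This is polynomial in the input. The symbolic successor, predecessor, reset, intersection with guards and invariants, and canonicalisation of a DBM are all polynomial-time (Floyd--Warshall), hence polynomial-space. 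The number of states is exponential but not needed explicitly: by \cref{prop:finiteness} it is finite, and a path counter of polynomially many bits bounds the search. Reachability $\symZone\Rightarrow^{+}\symZone'$ is then decidable in \NPSPACE{} by guessing the path one state at a time.

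The second step is to evaluate membership $\symZone\in\sat(\propOther)$ recursively on subformulas, as in the standard \PSPACE{} argument for \ac{tctl}, instead of tabulating the sets as \cref{alg:labeling} does. Boolean cases, atoms, guards and freeze resets are local tests or single recursive calls. For $\ETexists(\propOther_1\until\propOther_2)$, the fixed point of \cref{alg:ou_exist} is unfolded into an existential guess. We guess two symbolic paths from $\symZone$ simultaneously, each of at most exponential length, checking $\propOther_1$ and $\propOther_2$ along them by recursive calls. We also track a joint DBM that relates the two endpoint valuations through a shared copy of $\clockMC$ and of the start valuation $\clockval$. At the end we check that the DBM is non-empty with $\clockMC\leq H$, which by \cref{lem:observer-equal-duration} witnesses $\dur(\rho)=\dur(\rho')$. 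Correctness of this guess-and-check replacement is exactly \cref{prop:pre-etexists-correctness} combined with \cref{prop:matchU-correctness}. The release cases are the same, with the complement handled by closure of \PSPACE{} under complement. The recursion depth is at most $|\prop|$, and each level stores polynomially many DBMs, so the total space is polynomial.

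The main obstacle is the universal modality $\ETforall$, because of \cref{prop:pre-etforall-correctness}: it requires, for every valuation and every run to one side, a matching run of equal duration to the other side. I would rephrase it as the negation of an existential statement. That statement says there exist a valuation $\clockval\in\zone$ and a duration $t\leq H$ such that some run of duration $t$ reaches $\gamma(\setX)$ and \emph{no} run of duration $t$ reaches the complement, or symmetrically with the roles swapped. The witness pair $(\clockval,t)$ can be guessed symbolically as a region of the product DBM. The inner ``no run'' is a co-\NPSPACE{} reachability query over the same polynomial-size DBM encoding with $\clockMC$ fixed to $t$. This gives an $\exists\forall$ alternation of polynomial-space procedures, which is still \PSPACE.

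The delicate point is making sure that the quantification over a \emph{real} $t$ can be restricted to polynomially-describable symbolic witnesses. I would first try to show this via the fact that the sets of reachable durations are finite unions of intervals with integer endpoints bounded by $H$, obtained by projecting zones onto $\clockMC$. It would then suffice to guess an integer or a unit open interval, each representable in $O(\log H)$ bits.
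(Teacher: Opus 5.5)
Your route is the paper's route: polynomial-size DBM states with $\clockMC$ kept exact and bounded by $H$, on-the-fly reachability with Savitch, recursion on subformulas using closure under complement, a synchronised two-run search for $\ETexists$, and a guessed counterexample plus complementation for $\ETforall$. The paper only asserts that the $\ETforall$ counterexample ``can be checked by the same on-the-fly exploration''. You correctly see that the difficulty there is the real-valued witness duration $t$, but your way of discretising it fails.

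The fact you invoke holds from the initial state: reachable durations form finite unions of intervals with integer endpoints, so an integer or a unit open interval would suffice. But \ETOL{} operators must be evaluated at arbitrary states, for instance under an enclosing until. Take a concrete state with $\nu(y)=0.3$ and $\nu(\clockMC)=0.8$ whose only terminating continuation takes a guard $y=1$ into a final location labelled $p$, and evaluate $\ETforall(\top\until p)$ there. The only run has duration $0.7$ and ends at $\clockMC=1.5$. No run violates $\top\until p$, so the formula fails, and its only witness is $t=0.7$. That is not an integer, and ``some $p$-run has duration $t$'' is not constant on $(0,1)$. The same happens at every valuation of a zone such as $0<\nu(y)<\nu(\clockMC)<1$ entered by a discrete step, so choosing a convenient valuation inside the zone does not help. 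Hence $t$ cannot be guessed independently of $\nu$. Even guessed jointly, $(\nu,t)$ are the wrong coordinates: here the relation is $t+\nu(y)=1$, which is not a difference constraint.

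What you need is a joint witness in coordinates where every relevant set stays a zone. Add never-reset copies $\bar x$ of the clocks, initialised to the start valuation, and a clock $z$ reset at the start, all advancing with time. A frozen copy of $\nu$ would not do, because a time elapse that advances only some coordinates of a DBM does not in general yield a DBM. At the end of a run, $t=z$ and $\nu=\bar x-z$. So ``same start, same duration'' and ``$\nu\in Z$'' both become difference constraints on $(\bar x,z)$. All clocks are bounded by a constant linear in $H$. Hence the projections $A$ and $B$ onto $(\bar x,z)$ of the end configurations of the two kinds of runs are finite unions of zones with integer constants, and therefore unions of regions. A violation exists iff some region $R$ satisfies $R\cap A\neq\emptyset$ and $R\cap B=\emptyset$, or the same with $A$ and $B$ swapped. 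You guess $R$, which has polynomial size, and then make one reachability and one co-reachability query in a timed automaton with $O(|\Clocks|+|\ClocksFormula|)$ clocks. This gives your $\exists\forall$ procedure in \PSPACE. Carrying this union-of-zones invariant through the induction on the formula also justifies checking subformula labels at possibly non-homogeneous symbolic states along your guessed paths.
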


\begin{proof}
Let $\TA$ be a timed automaton and let $\varphi$ be an \ETOL{} formula. We consider the bounded semantics up to the verification horizon $H$. The proof relies on an on-the-fly exploration of the bounded symbolic semantics. In particular, the complete bounded zone graph need not be stored explicitly. A symbolic state has the form
$\symZone=(\ell,Z)$, where $\ell\in L$ and $Z$ is a zone represented by a $\acs{dbm}$ over $C\cup\{\clockMC\}$. The observer clock $\clockMC$ is never reset and is constrained by $0\leq\clockMC\leq H$. Ordinary model clocks may be abstracted using the chosen finite zone
abstraction, whereas $\clockMC$ is kept exact so that equality of its values preserves equality of total execution durations. A \acs{dbm} contains a polynomial number of entries in the number of clocks. Moreover, each entry contains either infinity or an integer constant whose
binary representation is polynomial in the size of the input and of $H$. Consequently, a symbolic state can be represented using polynomial space. We first recall that a symbolic successor can be computed from the current
symbolic state by the standard zone operations: intersection with a guard, clock reset, time elapse, intersection with an invariant, canonicalisation,
and abstraction of the ordinary clocks. Each of these operations is computable in polynomial space. Hence successors of a symbolic state can be generated on demand without storing the whole symbolic graph. Since the bounded symbolic graph is finite and has at most exponentially many symbolic states, reachability between two symbolic states can be decided in polynomial space. Indeed, a nondeterministic algorithm only needs to store the current symbolic state and a counter bounded by the number of symbolic states. Thus bounded symbolic reachability is in $\NPSPACE$, and therefore in
$\PSPACE$ since $\NPSPACE=\PSPACE$. We now proceed by structural induction on $\varphi$. For atomic propositions and Boolean connectives, satisfaction can clearly be decided in polynomial space. Negation does not increase the space bound since $\PSPACE=\coPSPACE$. For the temporal operators, consider first
$\phi\until\psi$. By the induction hypothesis, membership in $X=\sat_H(\phi)$ and
$Y=\sat_H(\psi)$ can be checked in polynomial space. The procedure $\matchOU(X,Y)$ searches on the bounded symbolic graph for a finite symbolic execution satisfying the corresponding Until condition. Such an
execution can be guessed and checked on the fly while storing only the current symbolic state, the information required to verify membership in $X$ and $Y$, and a polynomial-size counter. Hence $\matchOU(X,Y)$
is computable in polynomial space. The same argument applies to $\phi\release\psi$. The procedure $matchR(X,Y)$ can be evaluated through an on-the-fly
search of the same bounded symbolic graph. Universal path conditions can be handled by complementation, using
$\PSPACE=\coPSPACE$. Therefore $\matchOR$ is also decidable in polynomial space. It remains to consider the execution-time opacity operators. Let $\until\subseteq\SymStates$. For $\Pre_{\ETexists}(U)$, by
\cref{prop:pre-etexists-correctness}, a symbolic state
$\symZone=(\ell,Z)$ belongs to $\Pre_{\ETexists}(U)$ iff there exists a concrete valuation represented by $Z$ and two non-empty executions starting from the same concrete state, one ending in $\gamma(U)$ and the other in its
complementary class, such that $\dur(\rho)=\dur(\rho')\leq H$. Because $\clockMC$ is never reset, equality of total durations is equivalent
to equality of the final observer-clock values:
$\dur(\rho)=\dur(\rho')
\Longleftrightarrow
\clockval_\rho(\clockMC)
=
\clockval_{\rho'}(\clockMC)$. Hence the two executions can be explored synchronously on the product of two bounded symbolic searches while storing only their current symbolic states and the corresponding \acs{dbm}s. This requires polynomial space. Thus
membership in $\Pre_{\ETexists}(U)$ is in $\NPSPACE$, and consequently in $\PSPACE$. For $\Pre_{\ETforall}(U)$, the condition requires that every execution ending
in $U$ admit a same-duration execution ending outside $U$, and conversely. A violation of this condition is witnessed by an execution on one side for
which no same-duration matching execution exists on the other side. The existence of such a counterexample can be checked using polynomial space by the same on-the-fly symbolic exploration. Since $\PSPACE$ is closed under
complement, $\Pre_{\ETforall}(\until)$ is also decidable in polynomial space. Consequently, each inductive case in the evaluation of an \ETOL{} formula can
be performed using polynomial space. The recursive evaluation need only store a polynomial number of symbolic states together with the current subformula;
intermediate satisfaction sets do not need to be materialised explicitly. Therefore, deciding whether the initial symbolic state satisfies $\varphi$ under the bounded semantics with horizon $H$ requires polynomial space in the size of the timed automaton, the formula, and the binary representation of $H$. Hence the bounded model-checking problem for \ETOL{} over timed automata
belongs to $\PSPACE$.
\end{proof}

\begin{table*}[th!]
\centering
\small
\begin{tabular}{l|c|c|c|c|c|c}
\hline
\cellHeader{Formula}
&
\cellHeader{Runs}
&
\cellHeader{Avg. (s)}
&
\cellHeader{Std. (s)}
&
\cellHeader{Min. (s)}
&
\cellHeader{Max. (s)}
&
\cellHeader{\ETOL{} satisfaction}
\\
\hline

$\prop_1$
& 50
& 0.047624
& 0.015133
& 0.042352
& 0.123596
& Satisfied (opaque)
\\
\hline

$\prop_2$
& 50
& 0.047549
& 0.016210
& 0.042159
& 0.129445
& Not satisfied (vulnerable)
\\
\hline

$\prop_3$
& 50
& 0.046580
& 0.015442
& 0.041415
& 0.121728
& Satisfied (opaque)
\\
\hline

$\prop_4$
& 50
& 0.047250
& 0.016072
& 0.041192
& 0.129162
& Not satisfied (vulnerable)
\\
\hline

$\prop_5$
& 50
& 0.047865
& 0.015648
& 0.042226
& 0.125722
& Not satisfied (vulnerable)
\\
\hline

$\prop_6$
& 50
& 0.048325
& 0.015715
& 0.042633
& 0.127642
& Not satisfied (vulnerable)
\\
\hline

$\prop_7$
& 50
& 0.046504
& 0.014156
& 0.041189
& 0.115293
& Satisfied (opaque)
\\
\hline

$\prop_8$
& 50
& 0.047317
& 0.014760
& 0.041619
& 0.121679
& Not satisfied (vulnerable)
\\
\hline
\end{tabular}
\caption{Repeated-run results for the ATM case study,  for verification horizon
$H=100$.}
\label{tab:atm-repeated-experiments}
\end{table*}

%%%%%%%%%%%%%%%%%%%%%%%%%%%%%%%%%%%%%%%%%%%%%%%%%%%%%%%%%%%%%%%%%%%%%%
%%%%%%%%%%%%%%%%%%%%%%%%%%%%%%%%%%%%%%%%%%%%%%%%%%%%%%%%%%%%%%%%%%%%%%
\section{Case study and Implementation}\label{sec:cs-implementation}
%%%%%%%%%%%%%%%%%%%%%%%%%%%%%%%%%%%%%%%%%%%%%%%%%%%%%%%%%%%%%%%%%%%%%%
%%%%%%%%%%%%%%%%%%%%%%%%%%%%%%%%%%%%%%%%%%%%%%%%%%%%%%%%%%%%%%%%%%%%%%
%\vspace{-0.2cm}
%%%%%%%%%%%%%%%%%%%%%%%%%%%%%%%%%%%%%%%%%%%%%%%%%%%%%%%%%%%%%%%%%%%%%%
\subsection{Case study}
\label{sec:casestudy}
%%%%%%%%%%%%%%%%%%%%%%%%%%%%%%%%%%%%%%%%%%%%%%%%%%%%%%%%%%%%%%%%%%%
 Consider the ATM system in \cref{fig:ATM}, used as a running case study for \ac{etopacity}. The model represents a typical user session, including password authentication, operation selection, cash withdrawal, balance inquiry, restart, cancellation, and termination. Transitions are labelled by actions such as $\ATMstart$, $\ATMaskPassword$, and $\ATMcorrectPassword$, and are equipped with timing constraints modelling user responses and system delays. We assume a timing attacker who observes only the total execution time, from the beginning of the session to a final state. The secret is whether cash is actually dispensed, represented by the action $\ATMtakeCash$. The model uses two clocks: a local clock $\sclock$ for interaction steps and a global clock $\sclocky$ used in the termination and
cancellation phases. Independently, for model checking we set the verification horizon to $H=100$ time units and that \(H\) is a parameter of the verification problem, not necessarily a constant imposed by the model itself. A session starts with $\ATMstart$, resets the clocks, and displays a welcome screen for $3$ time units before asking for a password. Incorrect passwords increment a finite failure counter; after three failures, or after a timeout of $10$ time units, the session is cancelled. After a correct password, the user may choose quick withdrawal, normal withdrawal, or balance inquiry. Withdrawal branches include a preparation phase of $15$ time units, and cash must be taken within $20$ time units. In the normal withdrawal branch, three invalid amount attempts or a timeout of $10$ time units also lead to cancellation. Finally, the user may restart or finish the session. This model is suitable for evaluating \ac{etopacity}, because different interaction patterns, such as quick withdrawal, normal withdrawal, repeated failures, balance inquiry, or early cancellation, may induce distinguishable execution times. We consider the following \ETOL{} properties. 

\begin{itemize} 
\item \textsf{Existential opacity of cash withdrawal.} There exists an execution time for which an observer cannot determine whether cash was dispensed: $ \prop_1 := \ETexists\bigl(\top \until \ATMtakeCash\bigr)$. 
\item \textsf{Universal bounded opacity of money availability.} Within the global session horizon, reaching a money-available state should remain indistinguishable from not reaching it: $ \prop_2 := j.\ETforall\bigl( \neg(\ATMMAQ \vee \ATMMAN) \until (\ATME \wedge j \leq 100) \bigr)$. 
\item \textsf{Local bounded opacity of cash taking.} There exists an execution time for which taking cash within a local delay remains indistinguishable: $\prop_3 := j.\ETexists\bigl( \top \until (\ATMtakeCash \wedge j \leq 20) \bigr)$.  
\item \textsf{Mixed-clock opacity.} The preparation phase and the final cash-taking action should remain ambiguous when both system time and specification time are considered: $\prop_4 := j.\ETforall\bigl( (\ATMPQW \wedge \sclock \leq 15) \until (\ATMtakeCash \wedge j \leq 25) \bigr)$.  
\item \textsf{Nested freeze clocks.} \ETOL{} can express opacity properties with several temporal reference points: %$\prop_5 := j_1.\bigl( (\neg \ATMC \wedge j_1 \leq 100) \land j_2.\ETexists\bigl( \ATMcorrectPassword \until (\ATMtakeCash \wedge j_2 \leq 30) \bigr) \bigr)$.  
%$\varphi_5 = j_1. \Bigl( \ETexists\bigl( \neg C\until (C \land j_1 \leq 100) \bigr) \land j_2. \ETexists\bigl( \psi_1  \until (\psi_2 \land j_2 \leq 10) \bigr) \Bigr).$
$\varphi_5 = j_1.(\ETexists\bigl(\neg \ATMC \until (\ATMC \land j_1 \leq 100) \bigr) \land j_2. \ETexists\bigl(\neg \ATMtakeCash \until (\ATME \land j_2 \leq 10) \bigr)).$
\item \textsf{Fixed-time termination opacity.} The formula checks ET-opacity of the property that
no cash-taking state is visited before reaching the terminal location $E$ exactly $10$ time units after the freeze point: $\prop_6 := j.\ETforall\bigl( \neg \ATMtakeCash \until (\ATME \wedge j = 10) \bigr)$. 
\item \textsf{Clocked opacity of the non-cash initial condition.} As positive sanity checks, we consider the existential and universal variants: $\varphi_7 =
j.\ETexists(\ATME \release (\neg \ATMtakeCash \land j \leq 100))$ and $\varphi_8 = j.\ETforall(\ATME \release
(\neg \ATMtakeCash \land j \leq 100)).$.

%$\prop_7 := j.\ETexists\bigl( (\neg \ATMtakeCash \wedge j \leq 100) \release (\neg \ATMtakeCash \wedge j \leq 100) \bigr)$ and  $\prop_8 := j.\ETforall\bigl( (\neg \ATMtakeCash \wedge j \leq 100) \release (\neg \ATMtakeCash \wedge j \leq 100) \bigr)$
%\item \textsf{Universal opacity of the non-cash invariant.} For every relevant observable execution time, the fact that cash has not yet been taken remains indistinguishable: $\prop_8 := \ETforall\bigl(\neg \ATMtakeCash \release \neg \ATMtakeCash\bigr)$.
\end{itemize}

\subsection{Implementation} %%%%%%%%%%%%%%%%%%%%%%%%%%%%%%%%%%%%%%%%%%%%%%%%%%%%%%%%%%%%%%%%%%%%%% 
We implemented the \ETOL{} model checking algorithm as a verification component integrated with the VITAMIN tool~\cite{AFV24}.\footnote{Source code available at: \url{https://github.com/jortizve/ETOL}.} The implementation is written in \texttt{Python}~3.11 and 
follows the bounded symbolic construction of
\cref{subsec:bounded-abstraction}  using \acp{DBM}~\cite{CLVO25}. It supports automaton clocks, freeze clocks, and the two ET-opacity operators $\ETexists$ and $\ETforall$ defined in \cref{def:sat}. The tool takes as input a model file and an \ETOL{} formula, builds the corresponding symbolic representation, and applies the model checking algorithm. \Acp{ta} are represented as graphs, implemented through adjacency matrices as specified in the input format. All experiments were run on a machine equipped with an Intel(R) Core(TM) i7-7700HQ CPU @ 2.80GHz, with 4 cores, 8 threads, and 16GB DDR4 RAM. 
%\vspace{-0.2cm}

\begin{figure}[t] 
\centering \includegraphics[width=0.90\linewidth]{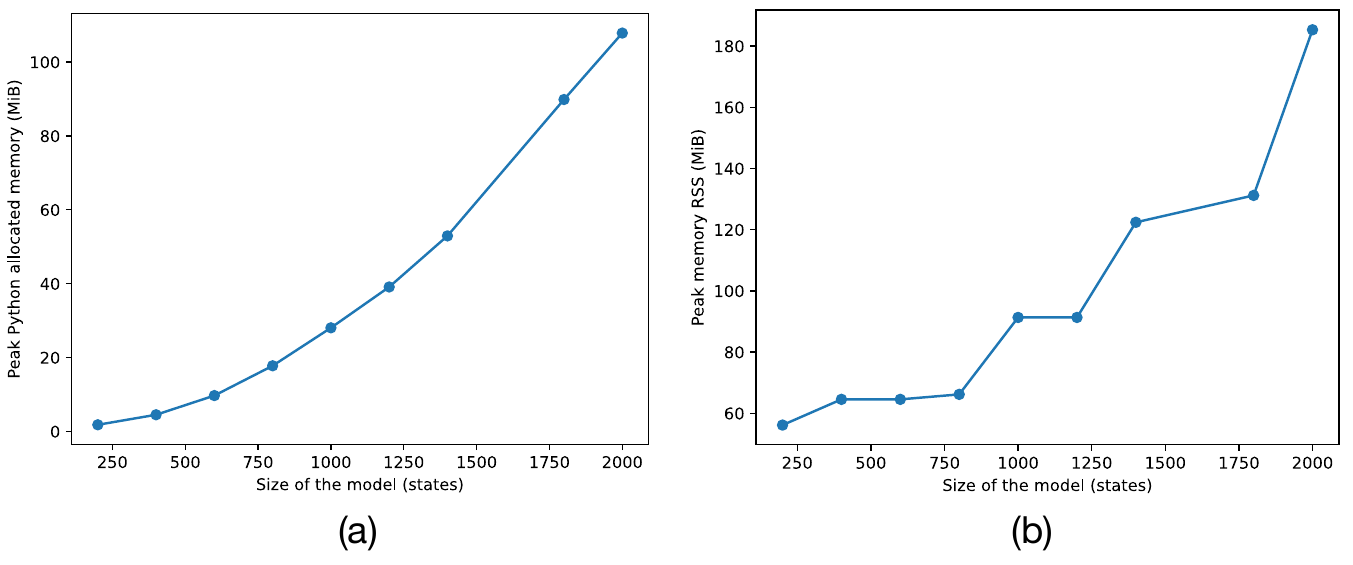} 
\vspace{-0.5cm} 
\caption{Peak memory consumption: (a) Python-allocated memory measured with
\texttt{tracemalloc}, and (b) process Resident Set Size (RSS).} \label{fig:mem-rss} \end{figure}

\subsection{Experiments} %%%%%%%%%%%%%%%%%%%%%%%%%%%%%%%%%%%%%%%%%%%%%%%%%%%%%%%%%%%%%%%%%%%%%% 

We evaluate the prototype along two dimensions. First, we use the ATM case study of \cref{fig:ATM} to validate the implementation on the formulas introduced in \cref{sec:casestudy}. Second, we perform scalability experiments on automatically generated \ac{TA} obtained by extending the ATM core. 

\noindent $\textsf{Proof \ of \ Concept: \ ATM \ Case \ Study.}$ We ran the checker on the ATM model of \cref{fig:ATM} and verified the eight properties $\prop_1$--$\prop_8$ defined in \cref{sec:casestudy}. For each property, the tool reports whether the initial symbolic state satisfies the formula and, when it is violated, may output witness executions illustrating the opacity violation. This experiment shows that the prototype returns both vulnerable and opaque verdicts on the ATM case study.
%\vspace{-0.5cm}

%\vspace{-0.8cm}

\noindent $\textsf{Repeated \ Runs \ on \ the \ ATM \ Case \ Study.}$ \label{subsec:atm-repeated-runs} To reduce measurement noise, each ATM experiment was repeated 50 times. For each formula, we report the average running time, standard deviation, minimum, and maximum execution time. The ATM model has 14 locations, 2 clocks, and 22 transitions. We evaluated the eight \ETOL{} formulas $\prop_1$--$\prop_8$ introduced in the case study, running the checker independently for each formula and recording the wall-clock time. The results are shown in \cref{tab:atm-repeated-experiments}. All executions terminated successfully. The average running time over the eight formulas is approximately
$0.0474$ seconds, and all formulas are verified in less than $0.05$ seconds on average. The small standard deviations indicate stable measurements. The column \ETOL{} satisfaction states whether the formula holds at the initial symbolic state. A result Not satisfied (vulnerable) means that the ATM violates the corresponding opacity property, while Satisfied (opaque) means that the property holds. Hence, vulnerability is interpreted property-wise.
 
 %The runtime distribution is shown in \cref{fig:atm-avg-runtime,fig:atm-runtime-boxplot}. The first plot reports the average runtime with standard deviation, while the second plot shows the distribution over the repeated executions.  These repeated measurements make the experimental protocol explicit and avoid relying on a single execution time. 

\noindent $\textsf{Scalability \ on \ Synthetic \ ATM \ Extensions.}$ %To assess scalability, we generated additional \ac{TA} parameterized by the number of locations, transitions, clocks, and maximal constants appearing in guards and invariants. These models are not copies of the handcrafted ATM model, instead, they are obtained by extending the ATM core with additional session modules, such as authentication, operation selection, withdrawal, and balance-inquiry branches. This construction increases the model size while preserving the relevant control states and atomic propositions used in the opacity specifications. Atomic propositions on newly added states are assigned automatically, while the original ATM labeling is kept unchanged. For each generated instance, we measure runtime and memory consumption. Memory is reported using two indicators: the peak Resident Set Size (RSS), which captures the overall process memory footprint, including Python objects, native allocations, and interpreter overhead; and the peak Python-allocated memory measured by \texttt{tracemalloc}, which accounts only for Python-level allocations. As shown in \cref{fig:mem-rss}, both measures increase smoothly with the number of states, indicating a controlled growth of the explored symbolic state space. The RSS curve is consistently higher than the \texttt{tracemalloc} curve, as expected, since RSS also includes the Python interpreter, external libraries, and non-Python memory. 
%To assess scalability, we generated additional \acp{TA} by extending the ATM core with extra session modules, such as authentication, operation selection, withdrawal, and balance-inquiry branches. The generated models vary in the number of locations, transitions, clocks, and maximal constants, while preserving the main ATM control states and atomic propositions used in the opacity specifications. New states are labelled automatically, whereas the original ATM labeling is kept unchanged. For each instance, we measure runtime and peak memory consumption. Memory is reported using both the peak Resident Set Size (RSS), which captures the total process footprint, and the peak Python allocation measured by \texttt{tracemalloc}. As shown in \cref{fig:mem-rss}, both measures grow smoothly with the number of states, suggesting a controlled increase of the explored symbolic state space. RSS is higher than \texttt{tracemalloc}, as it also includes interpreter overhead, external libraries, and native allocations. The extended benchmark suite, which includes additional case studies used to further validate our prototype, is reported in appendix.
To assess scalability, we generated additional \acp{TA} by extending the
ATM core with extra session modules, such as authentication, operation
selection, withdrawal, and balance-inquiry branches. The generated models
vary in size while preserving the main ATM control states and atomic
propositions used in the opacity specifications. New states are labelled
automatically, whereas the original ATM labelling is kept unchanged. For each generated instance, we measure runtime and peak memory consumption. Memory is reported using two complementary indicators: the peak Resident Set Size (RSS), which captures the overall memory footprint of the process, and the peak Python allocation measured by \texttt{tracemalloc}, which accounts only for memory allocated by Python-level objects. As shown in \cref{fig:mem-rss}, both measures exhibit an overall increasing trend as the number of states grows. The peak Python-allocated memory increases steadily from approximately $2$~MiB for the smallest instance to about $108$~MiB for the largest one. The RSS follows a less regular, stepwise growth, increasing from approximately $56$~MiB to $185$~MiB,
with several plateaus and larger increases for some model sizes. As expected, RSS remains higher than the memory reported by \texttt{tracemalloc}, since it additionally includes the Python interpreter, external libraries, native allocations, and other process overheads. These results indicate that memory consumption increases with the size of the explored symbolic model, while also showing that the two memory metrics capture different aspects of the implementation's resource usage. The extended benchmark suite, including additional case studies used to further validate the prototype, is reported in the appendix.

%\vspace{-0.3cm}
%%%%%%%%%%%%%%%%%%%%%%%%%%%%%%%%%%%%%%%%%%%%%%%%%%%%%%%%%%%%%%%%%%%%
%%%%%%%%%%%%%%%%%%%%%%%%%%%%%%%%%%%%%%%%%%%%%%%%%%%%%%%%%%%%%%%%%%%%
\section{Conclusion}\label{sec:conclusion}
%%%%%%%%%%%%%%%%%%%%%%%%%%%%%%%%%%%%%%%%%%%%%%%%%%%%%%%%%%%%%%%%%%%%
%%%%%%%%%%%%%%%%%%%%%%%%%%%%%%%%%%%%%%%%%%%%%%%%%%%%%%%%%%%%%%%%%%%%
In this work, we studied the verification of \ac{etopacity} in real-time systems modeled by \acp{ta}. Since timed opacity is undecidable for \acp{ta} in general~\cite{Cas09}, we focused on the decidable execution-time observation setting. We extended the location-based notion of \ac{etopacity} from~\cite{ADJ22} to a more general framework where secrets are specified by \ETOL formulas. We introduced \ETOL together with a symbolic bounded model-checking procedure based on zone exploration and exact execution-time matching. The observer clock recording elapsed time is kept exact, while finiteness is obtained by restricting the analysis to an explicit verification horizon. We establish soundness and completeness for this bounded setting and derive the corresponding \PSPACE{} upper bound. As future work, we plan to enrich \ETOL with quantitative, epistemic, strategic, and probabilistic features.

\ifdefined\VersionAuthor
    \newcommand{\AI}{Artificial Intelligence}
	\newcommand{\CCIS}{Communications in Computer and Information Science}
    \newcommand{\CACM}{Communications of the {ACM}}
    \newcommand{\CS}{{ACM} Computing Surveys}
    \newcommand{\DC}{Distributed Computing}
    \newcommand{\ENTCS}{Electronic Notes in Theoretical Computer Science}
    \newcommand{\ESE}{Empirical Software Engineering}
    \newcommand{\FAC}{Formal Aspects of Computing}
    \newcommand{\FI}{Fundamenta Informormaticae}
    \newcommand{\FMSD}{Formal Methods in System Design}
    \newcommand{\IJFCS}{International Journal of Foundations of Computer Science}
    \newcommand{\IJSSE}{International Journal of Secure Software Engineering}
    \newcommand{\IC}{Information and Computation}
    \newcommand{\IJC}{International Journal of Control}
    \newcommand{\IJIS}{International Journal of Information Security}
    \newcommand{\IPL}{Information Processing Letters}
    \newcommand{\IS}{Information Sciences}
    \newcommand{\JCS}{Journal of Computer Security}
    \newcommand{\JCSS}{Journal of Computer and System Sciences}
    \newcommand{\JISA}{Journal of Information Security and Applications}
    \newcommand{\JLAP}{Journal of Logic and Algebraic Programming}
    \newcommand{\JLAMP}{Journal of Logical and Algebraic Methods in Programming}
    \newcommand{\JLC}{Journal of Logic and Computation}
    \newcommand{\JSA}{Journal of Systems Architecture}
    \newcommand{\JSS}{Journal of Systems and Software}
    \newcommand{\LMCS}{Logical Methods in Computer Science}
    \newcommand{\LNCS}{Lecture Notes in Computer Science}
    \newcommand{\MSCS}{Mathematical Structures in Computer Science}
    \newcommand{\RESS}{Reliability Engineering \& System Safety}
    \newcommand{\SCP}{Science of Computer Programming}
    \newcommand{\STTT}{International Journal on Software Tools for Technology Transfer}
    \newcommand{\TAC}{{IEEE} Transactions on Automatic Control}
    \newcommand{\TCS}{Theoretical Computer Science}
    \newcommand{\TCL}{{ACM} Transactions on Computational Logic}
    \newcommand{\ToPNoC}{Transactions on Petri Nets and Other Models of Concurrency}
    \newcommand{\ToSEM}{{ACM} Transactions on Software Engineering and Methodology}
    \newcommand{\TCADICS}{{IEEE} Transactions on Computer-Aided Design of Integrated Circuits and Systems}
    \newcommand{\TSE}{IEEE Transactions on Software Engineering}
\else
    \newcommand{\AI}{AI}
    \newcommand{\CCIS}{CCIS}
    \newcommand{\CACM}{CACM}
    \newcommand{\CS}{CS}
    \newcommand{\DC}{DC}
    \newcommand{\ENTCS}{ENTCS}
    \newcommand{\ESE}{ESE}
    \newcommand{\FAC}{FAC}
    \newcommand{\FI}{FI}
    \newcommand{\FMSD}{FMSD}
    \newcommand{\IJFCS}{IJFCS}
    \newcommand{\IJSSE}{IJSSE}
    \newcommand{\IC}{IC}
    \newcommand{\IJC}{IJC}
    \newcommand{\IJIS}{IJIS}
    \newcommand{\IPL}{IPL}
    \newcommand{\IS}{IS}
    \newcommand{\JCS}{JCS}
    \newcommand{\JCSS}{JCSS}
    \newcommand{\JISA}{JISA}
    \newcommand{\JLAP}{JLAP}
    \newcommand{\JLAMP}{JLAMP}
    \newcommand{\JLC}{JLC}
    \newcommand{\JSA}{JSA}
    \newcommand{\JSS}{JSS}
    \newcommand{\LMCS}{LMCS}
    \newcommand{\LNCS}{LNCS}
    \newcommand{\MSCS}{MSCS}
    \newcommand{\RESS}{RESS}
    \newcommand{\SCP}{SCP}
    \newcommand{\STTT}{STTT}
    \newcommand{\TAC}{TAC}
    \newcommand{\TCS}{TCS}
    \newcommand{\TCL}{TCL}
    \newcommand{\ToPNoC}{ToPNoC}
    \newcommand{\ToSEM}{ACM ToSEM}
    \newcommand{\TCADICS}{TCADICS}
    \newcommand{\TSE}{TSE}
\fi
%
%\ifdefined\VersionAuthor
%	\renewcommand*{\bibfont}{\small}
%	\printbibliography[title={References}]
%\else
%    \ifdefined\IEEE
%    \bibliographystyle{IEEEtran}
%    \fi
%    \ifdefined\LLNCS
%    \bibliographystyle{splncs04} % abbrv
%    \fi
%    \ifdefined\KR
%    \bibliographystyle{kr} % abbrv
%    \fi
%    \ifdefined\AMMAS
%    \bibliographystyle{ACM-Reference-Format} % abbrv
    %\bibliography{ETOL-restricted}
%\fi
 \bibliographystyle{ACM-Reference-Format} % abbrv
\bibliography{ETOL-restricted}
\newpage
\appendix
\section*{Appendix}

\section{Model Checker for ETOL}
\label{sec:MimeticTools}
%%%%%%%%%%%%%%%%%%%%%%%%%%%%%%%%%%%%%%%%%%%%%%%%%%%%%%%%%%%%%%%%%%%%%%
%%%%%%%%%%%%%%%%%%%%%%%%%%%%%%%%%%%%%%%%%%%%%%%%%%%%%%%%%%%%%%%%%%%%%%

%\vspace{-5pt}
We developed a symbolic model checker for the logic \ETOL\footnote{Source code available at: \url{https://github.com/jortizve/ETOL}}, designed to verify timed opeacity properties over timed systems modeled as \ac{ta}. The verification algorithm is based on symbolic zone-based exploration, extended with support for reasoning about freeze clocks. 
The verification engine operates over a symbolic representation of time using \acp{DBM}, enabling efficient manipulation and comparison of clock valuations. 
The tool is implemented in \abbrv{Python} 3.11 and covers the full core of \ETOL, including freeze clocks and both variants of the execution-time opacity operator: 
\emph{existential} opacity ($\ETexists \ $) and \emph{universal} opacity ($\ETforall \ $). 
Intuitively, $\ETexists \  (\varphi)$ requires the existence of a time-matching alternative behaviour for at least one witness behaviour satisfying $\varphi$, whereas $\ETforall \ (\varphi)$ requires such time-matching alternatives for all relevant behaviours (according to the formal semantics of  \ETOL. 

\begin{table*}[th!]
\centering
\small
\begin{tabular}{p{0.51\textwidth}|c|c|c|c}
\hline
\cellHeader{Formula}
&
\cellHeader{Runs}
&
\cellHeader{Avg. (s)}
&
\cellHeader{Std. (s)}
&
\cellHeader{\ETOL{} verdict}
\\
\hline

$\prop_1 =
\ETexists\bigl(\top \until \ATMtakeCash\bigr)$
&
50
&
0.047624
&
0.015133
&
Satisfied (opaque)
\\
\hline

$\prop_2 =
j.\ETforall\bigl(
\neg(\ATMMAQ \vee \ATMMAN)
\until
(\ATME \wedge j\leq100)
\bigr)$
&
50
&
0.047549
&
0.016210
&
Not satisfied (vulnerable)
\\
\hline

$\prop_3 =
j.\ETexists\bigl(
\top
\until
(\ATMtakeCash \wedge j\leq20)
\bigr)$
&
50
&
0.046580
&
0.015442
&
Satisfied (opaque)
\\
\hline

$\prop_4 =
j.\ETforall\bigl(
(\ATMPQW \wedge \sclock\leq15)
\until
(\ATMtakeCash \wedge j\leq25)
\bigr)$
&
50
&
0.047250
&
0.016072
&
Not satisfied (vulnerable)
\\
\hline

$\prop_5 =
j_1.\bigl(
\ETexists(
\neg\ATMC
\until
(\ATMC\wedge j_1\leq100))
\land
j_2.\ETexists(
\neg\ATMtakeCash
\until
(\ATME\wedge j_2\leq10))
\bigr)$
&
50
&
0.047865
&
0.015648
&
Not satisfied (vulnerable)
\\
\hline

$\prop_6 =
j.\ETforall\bigl(
\neg\ATMtakeCash
\until
(\ATME\wedge j=10)
\bigr)$
&
50
&
0.048325
&
0.015715
&
Not satisfied (vulnerable)
\\
\hline

$\prop_7 =
j.\ETexists\bigl(
\ATME
\release
(\neg\ATMtakeCash\wedge j\leq100)
\bigr)$
&
50
&
0.046504
&
0.014156
&
Satisfied (opaque)
\\
\hline

$\prop_8 =
j.\ETforall\bigl(
\ATME
\release
(\neg\ATMtakeCash\wedge j\leq100)
\bigr)$
&
50
&
0.047317
&
0.014760
&
Not satisfied (vulnerable)
\\
\hline

\end{tabular}

\caption{ATM case study: repeated-run results for the eight
\ETOL{} properties with verification horizon $H=100$.}
\label{tab:atm-etol-results}
\end{table*}

\subsection{Algorithmic workflow.}
Given a model and a formula, the checker proceeds as follows:
\begin{enumerate}
    \item Symbolically explore reachable zones using DBMs~\cite{CLVO25}, while maintaining the global non-resetting clock for execution duration.
    \item Evaluate subformulas bottom-up on the zone graph.
    \item For opacity formulas of the form $\ETexists \,\varphi$  or $\ETforall \,\varphi$, construct (when possible) pairs of timed traces with identical duration: a \emph{witness trace} satisfying $\varphi$ and an \emph{alternative trace} of the same duration satisfying $\neg\varphi$.
\end{enumerate}
This trace-pair construction is used to justify opacity (or to exhibit a counterexample when it fails).

\subsection{Reproducible Evaluation by Scripts}
\label{subsec:etol-eval-scripts}

The experimental evaluation is fully reproducible through a script-based workflow (no graphical interface is required). 
We used:
\begin{itemize}
    \item \texttt{GeneratorATM.py}, which generates scalable ATM instances by extending a fixed ATM core with additional session modules, and
    \item \texttt{benchmark\_total\_mem\_time.py}, which runs the model checker on a list of \ETOL formulas and records execution-time and memory statistics (peak RSS and Python allocations), as well as the number of explored zones when available.
\end{itemize}

\noindent We executed the checker on multiple case studies, including the ATM scenario (detailed in  \abbrv{example 1}) and synthetic ATM-like families generated at increasing sizes. 
The remainder of this section reports representative results on the base ATM instance with larger-scale benchmarks and resource-usage plots. %During these runs, the tool printed warnings of the form \texttt{Element X not found in array.}  These messages indicate that some identifiers referenced by a formula were not present in the internal mapping produced by the current parser configuration (e.g., due to name normalization or model-format mismatches). These warnings do not affect the measurement methodology reported in \cref{tab:atm-etol-results}; nevertheless, they highlight the importance of consistent naming between model files and formulas.  We address these parsing/mapping issues in the supplementary material, together with corrected model encodings and extended benchmarks (including memory usage and explored-zone statistics).

\subsection{ATM Case Study: Results}
\label{subsec:atm-results}

\cref{tab:atm-etol-results} reports the outcomes of verifying a set of representative \ETOL formulas on the ATM model. 
For each formula, we provide the measured runtime and whether the formula holds in the initial state.
The tool additionally outputs the set of satisfying states computed for the formula.

\subsection{Example: Input file  (ATM Model)}
%%%%%%%%%%%%%%%%%%%%%%%%%%%%%%%%%%%%%%%%%%%%%%%%%%%%%%%%%%%%%%%%%%%

% \textbf{Input Model}

% \dm{C'est quel modèle ici? On ne laisse pas que ATM qui doit être suffisant?}

%  \begin{lstlisting}[language=java]
%  Transition
% 0 3 2 0 0 0
% 0 0 2 1 0 0
% 0 3 0 1 4 0
% 0 0 0 0 3 5
% 0 0 0 4 0 6
% 0 0 0 0 0 *
% Name_State
% s0 s1 s2 s3 s4 s5
% Initial_State
% s0
% Atomic_propositions
% r a
% Labelling
% 0 0
% 1 0
% 0 0
% 1 0
% 0 0
% 1 1
% Number_of_agents
% 1
% Clocks
% x y
% Clock_constraints
% 0 0 0 y>=1 0 0
% 0 0 0 0 0 0
% 0 0 y=0 0 0 0
% 0 0 x>4 0 y>1 x>2
% 0 0 0 0 0 x>2
% 0 0 0 0 0 0
% Invariants
% 0 0
% 0 0
% 0 0
% 0 0
% 0 x<=6
% 0 0
% \end{lstlisting}  

% \textbf{ETOL Properties}

% \begin{lstlisting}[language=java]
% y.O(!r U y <= 5)
% (*@$\OpOperator$@*)(!a | s0)
% x.O(!a U x >= 6)
% \end{lstlisting}  

% \textbf{ATM Input Model}

\begin{lstlisting}[language=etol, caption={ATM Input Model}, label=lst:atm-input]
    Transition
    0 1 0 0 0 0 0 0 0 0 0 0 0 0 0 0
    0 0 1 0 0 0 0 0 0 0 0 0 0 0 0 0
    2 2 3 0 0 0 0 0 0 0 0 0 0 0 0 0
    3 0 0 0 4 5 6 0 0 0 0 0 0 0 0 0
    0 0 0 0 0 0 0 7 0 0 0 0 0 0 0 0
    0 0 0 0 0 0 0 0 8 0 0 0 0 0 0 0
    0 0 0 0 0 0 0 0 0 9 0 0 0 0 0 0
    0 0 0 0 0 0 0 0 0 0 10 11 12 0 0 0
    0 0 0 0 0 0 0 0 0 0 0 0 0 13 0 0
    0 0 0 0 0 0 0 0 0 0 0 0 0 0 14 15
    0 0 0 0 0 0 0 0 0 0 0 0 0 0 0 16
    0 0 0 0 0 0 0 0 0 0 0 0 0 0 0 17
    0 0 0 0 0 0 0 0 0 0 0 0 0 0 0 18
    0 0 0 0 0 0 0 0 0 0 0 0 0 0 0 19
    0 0 0 0 0 0 0 0 0 0 0 0 0 0 0 20
    0 0 0 0 0 0 0 0 0 0 0 0 0 0 0 0

    Name_State
    I W WP WC WA PNW PQW DB MAN MAQ OO T C E

    Initial_State
    I

    Atomic_propositions
    cash balance

    Labelling
    0 0
    0 0
    0 0
    0 0
    0 0
    0 0
    0 0
    0 1
    1 0
    1 0
    0 0
    0 0
    0 0
    0 0

    Number_of_agents
    1

    Clocks
    x y

    Clock_constraints
    0 0 0 0 0 0 0 0 0 0 0 0 0 0 0 0
    x=3 0 0 0 0 0 0 0 0 0 0 0 0 0 0 0
    x<=10 0 x<=10 0 0 0 0 0 0 0 0 0 0 0 0 0
    0 0 0 x<=10 0 0 0 0 0 0 0 0 0 0 0 0
    0 0 0 0 x<=10 0 0 0 0 0 0 0 0 0 0 0
    0 0 0 0 0 x<=15 0 0 0 0 0 0 0 0 0 0
    0 0 0 0 0 0 x<=15 0 0 0 0 0 0 0 0 0
    0 0 0 0 0 0 0 x<=10 0 0 0 0 0 0 0 0
    0 0 0 0 0 0 0 0 x<=20 0 0 0 0 0 0 0
    0 0 0 0 0 0 0 0 0 x<=20 0 0 0 0 0 0
    0 0 0 0 0 0 0 0 0 0 x<=10 0 0 0 0 0
    0 0 0 0 0 0 0 0 0 0 0 y<=100 0 0 0 0
    0 0 0 0 0 0 0 0 0 0 0 0 y<=100 0 0
    0 0 0 0 0 0 0 0 0 0 0 0 0 0 0 0

    Invariants
    x<=3 0
    x<=10 0
    x<=10 0
    x<=10 0
    x<=15 0
    x<=15 0
    x<=10 0
    x<=20 0
    x<=20 0
    x<=10 0
    0 y<=100
    0 y<=100
    0 0

Verification_horizon
    H = 100
\end{lstlisting}

 \subsection{Input ATM Properties}

\begin{lstlisting}[language=etol, caption=ATM ETOL Properties, label=lst:atm-etol]

   OE(T U takeCash)

   j.OA(!(MAQ | MAN) U (E & j<=100))

   j.OE(T U (takeCash \& j<=20))

   j.OA((PQW \ x<=15) U (takeCash \& j<=25))

   j1.(OE(!C U (C \& j1<=100)) \&
       j2.OE(!takeCash U (E & j2<=10)))

   j.OA(!takeCash U (E & j=10))

   j.OE(E R (!takeCash & j<=100))

   j.OA(E R (!takeCash & j<=100))

\end{lstlisting}

\noindent The \cref{lst:atm-input} shows a snippet of the ATM model in a text format compatible with \ETOL. 
It defines the states, transitions, clocks, and invariants of the system. 
The \cref{lst:atm-etol} presents example \ETOL properties that can be verified against this model, focusing on \ac{etopacity} and timing constraints.  For each property, the tool reports whether the initial state satisfies the formula, and when applicable it outputs  a witness execution together with an alternative execution of identical total duration. The tool completed the verification within 0.047377 seconds.
%Detailed traces are summarized in \cref{tab:atm-traces}. %\paragraph{Synthetic timed models.}
To assess scalability, we conducted additional experiments on automatically generated \ac{ta}. %Each synthetic model is parameterized by : the number of locations, the number of transitions, the number of clocks, and (iv) maximal constants appearing in guards/invariants. For each size point,
We generate multiple random instances and measure the verification time of each  property. We report the average runtime over all generated instances, as well as standard deviation. Figure~\ref{fig:etol-synth-zones} plots execution time versus model size. %and (b) the number of explored symbolic zones versus model size, which is a primary driver of practical performance. Ò
\begin{figure}[t]
  \centering
  \includegraphics[width=0.9\linewidth]{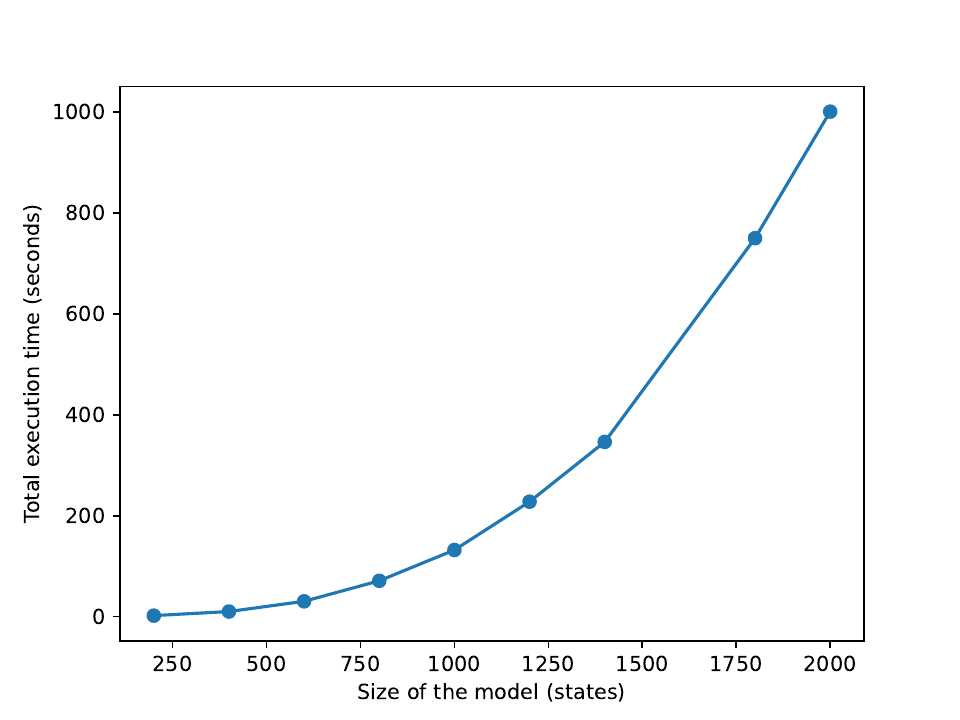}
  \caption{Tool execution times relative to the model size.}
  \label{fig:etol-synth-zones}
\end{figure}
In our synthetic ATM experiments, the analyzed models are not the handcrafted ATM instance of the case study: they are generated automatically by extending the ATM core with additional session modules (e.g., authentication, operation selection, and withdrawal/balance branches). This construction scales the system while preserving the semantics of the key control states and atomic propositions used in the opacity specifications. Model size is measured by the total number of states and transitions; atomic propositions on the added states are assigned automatically, while the original ATM labeling is kept unchanged, yielding diverse yet semantically consistent instances. As shown in Figure~\ref{fig:etol-synth-zones}, execution times exhibit the expected polynomial trend with respect to model size. For each reported size, we generated and verified a large set of random ATM instances, and the figure reports average running time. The largest models include over 1,000 states and a large transition relation, yet our zone-based opacity checking remains efficient, completing verification within a few seconds on standard hardware. 

\begin{figure}[t]
  \centering
  \includegraphics[width=0.9\linewidth]{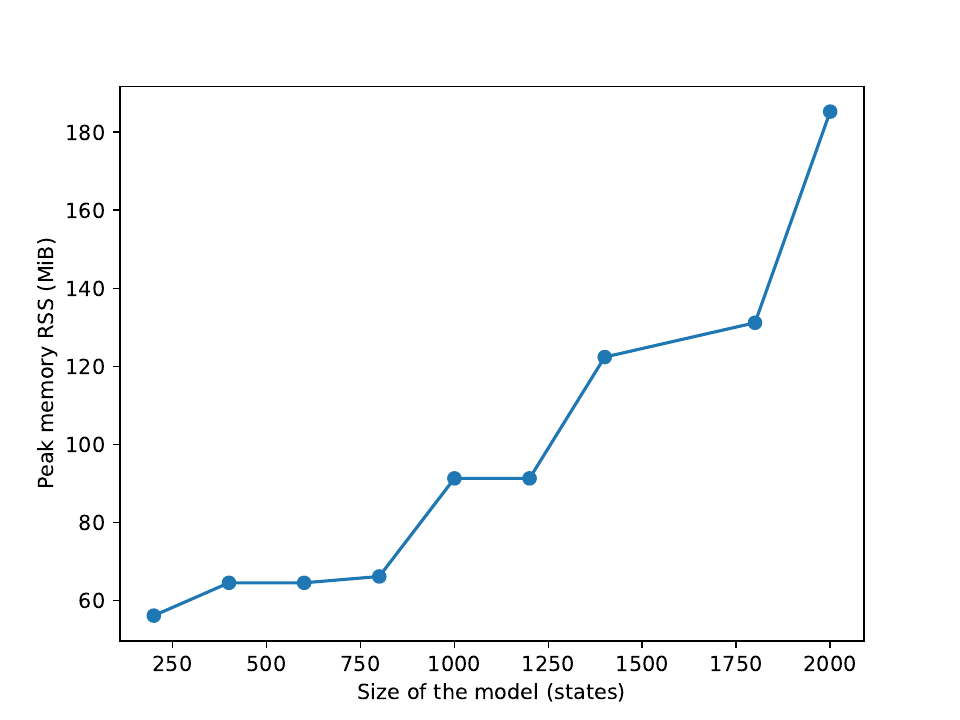}
  \caption{Tool execution memory relative to the model size.}
  \label{fig:mem-rss1}
\end{figure}

\noindent To complement runtime measurements, we report peak memory consumption during model checking using two indicators: (i) the peak resident set size (RSS), which captures the overall memory footprint of the process (including Python objects and native allocations), and (ii) the peak Python-allocated memory measured by \texttt{tracemalloc}, which reflects memory allocated by Python-level objects only. As shown in Fig.~\ref{fig:mem-rss1} and Fig.~\ref{fig:mem-py}, both measures increase smoothly with the number of states, confirming a controlled growth of the explored symbolic state space. The RSS curve is consistently higher than the \texttt{tracemalloc} curve, as expected, since RSS also includes interpreter overhead, libraries, and non-Python allocations. 

\begin{figure}[t]
  \centering
  \includegraphics[width=0.9\linewidth]{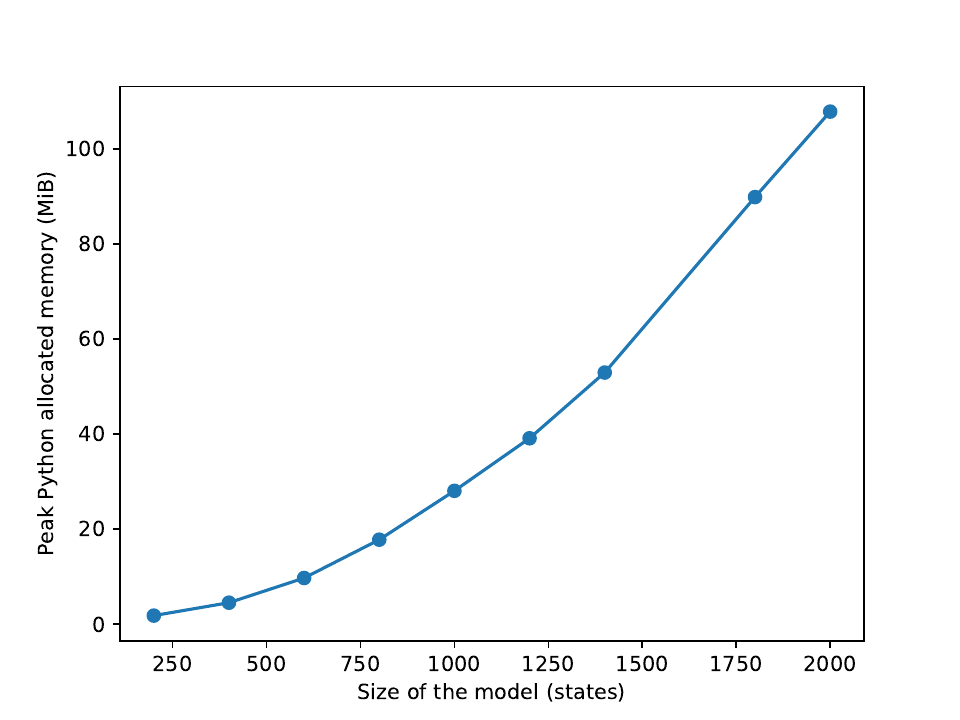}
  \caption{Tool execution memory relative to the model size.}
  \label{fig:mem-py}
\end{figure}

\subsection{Extended Benchmark Suite} To address the limitation of evaluating only variants of the ATM model, we extended the experimental campaign with additional case studies inspired by the timed-opacity benchmark suite of~\cite{ALMS22}. The suite includes examples from the timed-opacity and non-interference literature, such as the models of~\cite{GMR07,VNN18}, the Web privacy problem, the Coffee case study, Fischer's mutual exclusion protocol, and selected STAC timing-leak examples. Whenever possible, we translated the corresponding non-parametric models into the input format of our \ETOL{} prototype and checked the same execution-time opacity property. The results are reported in \cref{tab:extended-etol-benchmarks}. The columns $|S|$, $|X|$, and $|E|$ denote, respectively, the number of symbolic/control states, clocks, and transitions of the translated model. The column ``Runs'' gives the number of successful repeated executions. For terminating benchmarks, we report the average running time and standard deviation. If the prototype did not terminate within the selected timeout, the benchmark is marked as \emph{timeout}, and no correctness verdict is claimed. In the \ETOL{} verdict column, \emph{Valid/opaque} means that the checked \ETOL{} opacity formula is satisfied at the initial state, whereas \emph{False/vulnerable} means that the formula is not satisfied. 
%\begin{table*}[th!] 
%\centering \small 
%\begin{tabular}{l|c|c|c|c|c|c|c|c} 
%\hline \cellHeader{Benchmark} & \cellHeader{Source} & \cellHeader{$|S|$} & \cellHeader{$|X|$} & \cellHeader{$|E|$} & \cellHeader{Runs} & \cellHeader{Avg. (s)} & \cellHeader{Std. (s)} & \cellHeader{\ETOL{} verdict} \\ 
%\hline \cite{GMR07}-Fig.~1b & \cite{GMR07} & 3 & 2 & 3 & 38 & 0.00160 & 0.00010 & False/vulnerable \\ 
%\hline \cite{Ben+15a}-Web privacy & \cite{Ben+15a} & 9 & 3 & 11 & 14 & 0.00229 & 0.00015 & False/vulnerable \\ 
%\hline \cite{ALMS22}-Coffee & \cite{ALMS22} & 4 & 3 & 4 & 14 & 0.00213 & 0.00024 & Valid/opaque \\ 
%\hline \cite{HSRV02}-Fischer & \cite{HSRV02} & 14 & 1 & 15 & 14 & 0.00157 & 0.00009 & False/vulnerable \\
%\hline \cite{GMR07}-Fig.~2a & \cite{GMR07} & 4 & 2 & 4 & -- & timeout & -- & Unknown \\ 
%\hline \cite{GMR07}-Fig.~2b & \cite{GMR07} & 4 & 2 & 4 & -- & timeout & -- & Unknown \\ 
%\hline \cite{VNN18}-Fig.~5 & \cite{VNN18} & 2 & 2 & 1 & -- & timeout & -- & Unknown \\ 
%\hline \cite{ALMS22}-STAC & \cite{ALMS22} & 8 & 2 & 8 & -- & timeout & -- & Unknown \\ 
%\hline 
%\end{tabular} 
%\caption{Extended \ETOL{} benchmark results. The table reports the size of each translated model, the number of repeated successful runs, average execution time, standard deviation, and the verdict returned by the \ETOL{} checker. Timeout cases are reported explicitly and are not assigned a positive or negative verdict.} \label{tab:extended-etol-benchmarks} 
%\end{table*}

\begin{table*}[th!]
\centering
\small
\begin{tabular}{l|c|c|c|c|c|c|c|c}
\hline
\cellHeader{Benchmark}
&
\cellHeader{Source}
&
\cellHeader{$|S|$}
&
\cellHeader{$|X|$}
&
\cellHeader{$|E|$}
&
\cellHeader{Runs}
&
\cellHeader{Avg. (s)}
&
\cellHeader{Std. (s)}
&
\cellHeader{\ETOL{} verdict}
\\
\hline

\cite{GMR07}-Fig.~1b
& \cite{GMR07}
& 3 & 2 & 3
& 50
& 0.000109
& 0.000027
& Valid/opaque
\\
\hline

\cite{Ben+15a}-Web privacy
& \cite{Ben+15a}
& 9 & 3 & 11
& 50
& 0.000891
& 0.000127
& False/vulnerable
\\
\hline

\cite{ALMS22}-Coffee
& \cite{ALMS22}
& 4 & 3 & 4
& 50
& 0.000803
& 0.000187
& False/vulnerable
\\
\hline

\cite{HSRV02}-Fischer
& \cite{HSRV02}
& 14 & 1 & 15
& 50
& 0.000124
& 0.000019
& False/vulnerable
\\
\hline

\cite{GMR07}-Fig.~2a
& \cite{GMR07}
& 4 & 2 & 4
& 50
& 0.000141
& 0.000031
& Valid/opaque
\\
\hline

\cite{GMR07}-Fig.~2b
& \cite{GMR07}
& 4 & 2 & 4
& 50
& 0.000141
& 0.000048
& Valid/opaque
\\
\hline

\cite{VNN18}-Fig.~5
& \cite{VNN18}
& 2 & 2 & 1
& 50
& 0.000060
& 0.000015
& False/vulnerable
\\
\hline

\cite{ALMS22}-STAC
& \cite{ALMS22}
& 8 & 2 & 8
& --
& --
& --
& Not available
\\
\hline

\end{tabular}

\caption{Extended \ETOL{} benchmark results obtained with the bounded
on-the-fly implementation and verification horizon $H=20$. Each
available benchmark was executed independently 50 times.}
\label{tab:extended-etol-benchmarks}
\end{table*}

The extended benchmark suite evaluates the \ETOL{} implementation beyond
the ATM case study using examples from the timed-opacity and
non-interference literature, including Web privacy, Coffee, Fischer's
protocol, and related benchmark models. Since these benchmark formulas do
not contain an explicit global time bound, we use a common verification
horizon of $H=20$. This value is larger than the maximal timing constants
occurring in the bounded models considered here, while keeping the symbolic exploration compact. The ATM case study is evaluated separately with $H=100$, corresponding to its global session horizon. The results show that the implementation is not limited to the ATM model and can also analyse independent benchmarks with different control structures and timing constraints. They further indicate that verification cost is influenced not only by the number of states, but also by the structure of the symbolic state space and by the predecessor and fixed-point computations required by the checked property. Overall, these experiments strengthen the empirical evaluation by considering case studies from different sources and by reporting repeated executions with average running times and standard deviations.

\end{document}